\let\oldvec\vec 
\let\vec\oldvec 
\newcommand{\N}{\mathbb{N}}
\newcommand{\R}{\mathbb{R}}
\newcommand{\p}{\ensuremath{\mathsf{P}}}
\newcommand{\np}{\ensuremath{\mathsf{NP}}}
\newcommand{\sset}[2]{\{#1 \; | \; #2\}}
\newcommand{\best}{best \deviation{} problem} 
\newcommand{\ratio}{Deviation Ratio} 
\newcommand{\dr}{\ensuremath{\text{DR}}}
\newcommand{\deviation}{deviation} 
\newcommand{\deviations}{deviations} 
\newcommand{\deviated}{deviated} 
\definecolor{myblue}{RGB}{80,80,160}
\definecolor{mygreen}{RGB}{80,160,80}
\newenvironment{rtheorem}[3][]{%
\noindent\ifthenelse{\equal{#1}{}}{\bf #2 #3.}{\bf #2 #3 (#1)}%
\begin{it}}{\end{it}}
\title{The Impact of Worst-Case Deviations in Non-Atomic Network Routing Games}
\author{Pieter Kleer\inst{1} \and Guido Sch\"afer\inst{1,2}}
\authorrunning{P. Kleer \and G.\ Sch\"afer} 
\institute{%
  Centrum Wiskunde \& Informatica (CWI), Networks and Optimization Group, Amsterdam, The Netherlands
  \and
  Vrije Universiteit Amsterdam, Department of Econometrics and Operations Research, Amsterdam, The Netherlands. \\ 
  \texttt{kleer@cwi.nl, schaefer@cwi.nl}
}
\begin{document}
\sloppy

\maketitle

\begin{abstract}
We introduce a unifying model to study the impact of worst-case latency deviations in non-atomic selfish routing games. In our model, latencies are subject to (bounded) deviations which are taken into account by the players. The quality deterioration caused by such deviations is assessed by the \emph{\ratio}, i.e., the worst case ratio of the cost of a Nash flow with respect to deviated latencies and the cost of a Nash flow with respect to the unaltered latencies. This notion is inspired by the \textit{Price of Risk Aversion} recently studied by Nikolova and Stier-Moses \cite{Nikolova2015}. Here we generalize their model and results. 
In particular, we derive tight bounds on the \ratio\ for multi-commodity instances with a common source and arbitrary non-negative and non-decreasing latency functions. These bounds exhibit a linear dependency on the size of the network (besides other parameters). In contrast, we show that for general multi-commodity networks an exponential dependency is inevitable. We also improve recent smoothness results to bound the Price of Risk Aversion. 
\end{abstract}

\keywords{selfish routing games, uncertainty, deviations, price of risk aversion, biased price of anarchy, network tolls}

\section{Introduction}
\label{sec:contr}
\label{sec:relwork}

In the classical selfish routing game introduced by Wardrop \cite{Wardrop1952}, there is an (infinitely) large population of (non-atomic) players who selfishly choose minimum latency paths in a network with flow-dependent latency functions. An assumption that is made in this model is that the latency functions are given deterministically. Although being a meaningful abstraction (which also facilitates the analysis of such games), this assumption is overly simplistic in situations where latencies are subject to \deviation s which are taken into account by the players. 

In this paper, we study how much the quality of a Nash flow deteriorates in the worst case under (bounded) \deviation s of the latency functions. More precisely, given an instance of the selfish routing game with latency functions $(l_a)_{a \in A}$ on the arcs, we define the \emph{\ratio{} (DR)} as the worst case ratio $C(f^{\delta})/C(f^0)$ of a Nash flow $f^{\delta}$ with respect to \deviated\ latency functions $(l_a + \delta_a)_{a \in A}$, where $(\delta_a)_{a \in A}$ are arbitrary \deviation\ functions from a feasible set, and a Nash flow $f^0$ with respect to the unaltered latency functions $(l_a)_{a \in A}$. Here the social cost function $C$ refers to the total average latency (without the deviations). 
Our motivation for studying this social cost function is that a central designer usually cares about the long-term performance of the system (accounting for the average latency or pollution). On the other hand, the players typically do not know the exact latencies and use estimates or include ``safety margins'' in their planning. Similar viewpoints are adopted in \cite{Meir2015,Nikolova2015}.


In order to model bounded deviations, we extend an idea previously put forward by Bonifaci, Salek and Sch\"afer \cite{Bonifaci2011} in the context of the  \textit{restricted network toll problem}: We assume that for every arc $a \in A$ we are given lower and upper bound restrictions $\theta^{\min}_a$ and $\theta^{\max}_a$, respectively, and call a \deviation\ $\delta_a$ \emph{feasible} if $\theta^{\min}_a(x) \le \delta_a(x) \le \theta^{\max}_a(x)$ for all $x \ge 0$. 


Our notion of the \ratio{} is inspired by and builds upon the \textit{Price of Risk Aversion (PRA)} recently introduced by Nikolova and Stier-Moses \cite{Nikolova2015}. The authors investigate selfish routing games with uncertain latencies by considering deviations of the form $\delta_a = \gamma v_a$, where $\gamma \geq 0$ is the risk-aversion of the players and $v_a$ is the variance of some random variable with mean zero.
They derive upper bounds on the Price of Risk Aversion for single-commodity networks with arbitrary non-negative and non-decreasing latency functions if the \emph{variance-to-mean-ratio} $v_a/l_a$ of every arc $a \in A$ is bounded by some constant $\kappa \ge 0$. 
It is not hard to see that their model is a special case of our model if we choose $\theta^{\min}_a = 0$ and $\theta^{\max}_a = \gamma\kappa l_a$ (see Section~\ref{sec:pre} for more details). 






\paragraph{Our contributions.}
The main contributions presented in this paper are as follows: 


\smallskip
\noindent
\emph{1. Upper bounds:}
We derive a general upper bound on the \ratio\ for multi-commodity networks with a common source and arbitrary non-negative and non-decreasing latency functions (Theorem~\ref{lem:main_bound}).

In order to prove this upper bound, we first generalize a result by Bonifaci et al. \cite{Bonifaci2011} characterizing the inducibility of a fixed flow by $\delta$-\deviations\ to multi-commodity networks with a common source (Theorem~\ref{thm:induc}). This characterization naturally gives rise to the concept of an \emph{alternating path}, which plays a crucial role in the work by Nikolova and Stier-Moses \cite{Nikolova2015} and was first used by Lin, Roughgarden, Tardos and Walkover \cite{Lin2011} in the context of the \emph{network design problem}.

We then specialize our bound to the case of so-called \emph{$(\alpha, \beta)$-deviations}, where $\theta^{\min}_a = \alpha l_a$ and $\theta^{\max}_a = \beta l_a$ with $-1 < \alpha \le 0 \le \beta$. We prove that the Deviation Ratio is at most $1 + (\beta - \alpha) / (1 + \alpha) \lceil (n-1)/2 \rceil r$, where $n$ is the number of nodes of the network and $r$ is the sum of the demands of the commodities (Theorem~\ref{lem:main_bound}).
In particular, this reveals that the \ratio\ depends linearly on the size of the underlying network (among other parameters). 

By using this result, we obtain a bound on the Price of Risk Aversion (Theorem~\ref{thm:pra_model_results}) which generalizes the one in 
\cite{Nikolova2015} in two ways: (i) it holds for multi-commodity networks with a common source and (ii) it allows for negative risk-aversion parameters (i.e., capturing risk-taking players as well). Further, we show that our result can be used to bound the relative error in social cost incurred by small latency perturbations (Theorem~\ref{thm:stability}), which is of independent interest. 

\smallskip
\noindent
\emph{2. Lower bounds:} 
We prove that our bound on the Deviation Ratio for $(\alpha, \beta)$-deviations is best possible. More specifically, for single-commodity networks we show that our bound is tight in all its parameters. Our lower bound construction holds for arbitrary $n \in \N$ and is based on the \textit{generalized Braess graph} \cite{Roughgarden2006} (Example \ref{exmp:tight_pra}).
In particular, this complements a recent result by Lianeas, Nikolova and Stier-Moses \cite{Lianeas2015} who show that their bound on the Price of Risk Aversion is tight for single-commodity networks with $n = 2^j$ nodes for all $j \in \N$. 

Further, for multi-commodity networks with a common source we show that our bound is tight in all parameters if $n$ is odd, while a small gap remains if $n$ is even (Theorem \ref{thm:lower_multi_odd}).
Finally, for general multi-commodity graphs we establish a lower bound showing that the \ratio{} can be exponential in $n$ (Theorem \ref{thm:multi_general}). In particular, this shows that there is an exponential gap between the cases of multi-commodity networks with and without a common source. In our proof, we adapt a graph structure used by Lin, Roughgarden, Tardos and Walkover \cite{Lin2011} in their lower bound construction for the network design problem on multi-commodity networks (see also \cite{Roughgarden2006}).

\smallskip
\noindent
\emph{3. Smoothness bounds:}
We improve (and slightly generalize) recent smoothness bounds on the Price of Risk Aversion given by Meir and Parkes \cite{Meir2015} and independently by Lianeas et al. \cite{Lianeas2015}. 
In particular, we derive tight bounds for the \textit{Biased Price of Anarchy (BPoA)} \cite{Meir2015}, i.e., the ratio between the cost of a \deviated\ Nash flow and the cost of a social optimum, for \emph{arbitrary} $(0,\beta)$-deviations (Theorem~\ref{thm:smooth-BPA}).\footnote{We remark that for certain types of $(0,\beta)$-deviations, e.g., scaled marginal tolls, better bounds can be obtained; see the section ``Related notions'' in Section 2 for relevant literature.}
Note that the Biased Price of Anarchy yields an upper bound on the Deviation Ratio/Price of Risk Aversion. 
We also derive smoothness results for general path deviations (which are not representable by arc deviations). As a result, we obtain bounds on the Price of Risk Aversion (Theorem~\ref{thm:smooth_general}) under the non-linear \emph{mean-std} model \cite{Lianeas2015, Nikolova2015} (see Section \ref{sec:pre}).

\bigskip
\noindent
It is interesting to note that the smoothness bounds on the Biased Price of Anarchy \cite{Meir2015} and the Price of Risk Aversion \cite{Lianeas2015} 
are independent of the network structure (but dependent on the class of latency functions). In contrast, the bound on the \ratio{} depends on certain parameters of the network.
\footnote{For example, there are parallel-arc networks for which the Biased Price of Anarchy is unbounded, whereas the \ratio{} is a constant.}

Our results answer a question posed in the work by Nikolova and Stier-Moses \cite{Nikolova2015} regarding possible relations between their Price of Risk Aversion model \cite{Nikolova2015}, the restricted network toll problem \cite{Bonifaci2011}, and the network design problem\cite{Roughgarden2006}. 
In particular, our results also show that the analysis in \cite{Nikolova2015} is not inherent to the used variance function, but rather depends on the restrictions imposed on the feasible deviations.

\paragraph{Related work.}
The modeling and studying of uncertainties in routing games has received a lot of attention in recent years. An extensive survey on this topic is given by Cominetti \cite{Cominetti2015}. 

As mentioned above, our investigations are inspired by the study of the Price of Risk Aversion by Nikolova and Stier-Moses \cite{Nikolova2015}. They prove that for single-commodity instances with non-negative and non-decreasing latency functions the Price of Risk Aversion is at most $1 + \gamma \kappa \lceil (n-1)/2 \rceil$. We elaborate in more detail on the connections to their work in Section~\ref{sec:pre}. 


There are several papers that study the problem of imposing tolls (which can be viewed as latency deviations, see Section \ref{sec:pre} for more details) on the arcs of a network to reduce the cost of the resulting Nash flow. Conceptually, our model is related to the \emph{restricted network toll problem} by Bonifaci et al.~\cite{Bonifaci2011}. The authors study the problem of computing non-negative tolls that have to obey some upper bound restrictions $(\theta_a)_{a \in A}$ such that the cost of the resulting Nash flow is minimized. This is tantamount to computing best-case deviations in our model with $\theta_a^{\min} = 0$ and $\theta_a^{\max} = \theta_a$. In contrast, our focus here is on worst-case deviations. As a side result, we prove that computing such worst-case \deviations{} is NP-hard, even for single-commodity instances with linear latencies (Theorem~\ref{thm:np-hard_max}).

Roughgarden \cite{Roughgarden2006} studies the \emph{network design problem} of finding a subnetwork that minimizes the latency of all flow-carrying paths of the resulting Nash flow. He proves that the \textit{trivial algorithm} (which simply returns the original network) gives an $\lfloor n/2\rfloor$-approximation algorithm for single-commodity networks and that this is best possible (unless $\p = \np$).
Later, Lin et al. \cite{Lin2011} show that this algorithm can be exponentially bad for multi-commodity networks. The instances that we use in our lower bound constructions are based on the ones used in \cite{Roughgarden2006,Lin2011}. 

Meir and Parkes \cite{Meir2015} and independently Lineas et al. \cite{Lianeas2015} show that for non-atomic network routing games with \emph{$(1,\mu)$-smooth}%
\footnote{Meir and Parkes \cite{Meir2015} define a function $l$ to be \emph{$(1,\mu)$-smooth} if $x l(y) \leq \mu yl(y) +xl(x)$ for all $x,y \geq 0$ (which is slightly different from Roughgarden's original smoothness definition \cite{Roughgarden2015}). Lineas et al. \cite{Lianeas2015} only require \emph{local smoothness} where $y$ is taken fixed.} 
latency functions it holds that $\text{PRA} \leq  \text{BPoA} \leq (1 + \gamma \kappa)/(1- \mu)$. 
An advantage of such bounds is that they hold for general multi-commodity instances (but depend on the class of latency functions). These bounds stand in contrast to the \textit{topological} bounds obtained here and by Nikolova and Stier-Moses \cite{Nikolova2015} which hold for arbitrary non-negative and non-decreasing latency functions.

\section{Preliminaries}
\label{sec:pre}

\paragraph{Bounded deviation model.}
Let $\mathcal{I} = (G = (V,A),(l_a)_{a \in A},(s_i,t_i)_{i \in [k]}, (r_i)_{i \in [k]})$ be an instance of a non-atomic network routing game. Here, $G = (V,A)$ is a directed graph with node set $V$ and arc set $A \subseteq V \times V$, where each arc $a \in A$ has a non-negative, non-decreasing and continuous latency function $l_a: \mathbb{R}_{\ge 0} \rightarrow \mathbb{R}_{\ge 0}$. Each commodity $i \in [k]$ is associated with a source-destination pair $(s_i, t_i)$ and has a demand of $r_i \in \mathbb{R}_{>0}$. We assume that $t_i \neq t_j$ if $i \neq j$ for $i,j \in [k]$. If all commodities share a common source node, i.e., $s_i = s_j = s$ for all $i, j \in [k]$, we call $\mathcal{I}$ a \textit{common source multi-commodity instance (with source $s$)}. We assume without loss of generality that $1 = r_1 \leq r_2 \leq \dots \leq r_k$ and define $r = \sum_{i \in [k]} r_i$.

We denote by $\mathcal{P}_i$ the set of all simple $(s_i,t_i)$-paths of commodity $i \in [k]$ in $G$, and we define $\mathcal{P} = \cup_{i \in [k]} \mathcal{P}_i$. An outcome of the game is a feasible flow $f: \mathcal{P} \rightarrow \R_{\geq 0}$, i.e., $\sum_{P \in \mathcal{P}_i} f_P = r_i$ for every $i \in [k]$. Given a flow $f = (f^i)_{i \in [k]}$, we use $f^i_a$ to denote the total flow on arc $a \in A$ of commodity $i \in [k]$, i.e., $f_a^i = \sum_{P \in \mathcal{P}_i : a \in P} f_P$. The total flow on arc $a \in A$ is defined as $f_a = \sum_{i \in [k]} f_a^i$. 
The latency of a path $P \in \mathcal{P}$ with respect to $f$ is  defined as $l_P(f) := \sum_{a \in P} l_a(f_a)$. The \textit{social cost} $C(f)$ of a flow $f$ is given by its total average latency, i.e., $C(f) = \sum_{P \in \mathcal{P}} f_Pl_P(f) = \sum_{a \in A} f_a l_a(f_a)$.
A flow that minimizes $C(\cdot)$ is called \textit{(socially) optimal}.
We use $A^+_i = \{a \in A : f_a^i > 0\}$ to refer to the support of $f^i$ for commodity $i \in [k]$ and define $A^+ = \cup_{i \in [k]} A^+_i$ as the support of $f$.

For every arc $a \in A$, we have a continuous function $\delta_a : \R_{\geq 0} \rightarrow \R$ modeling the \textit{\deviation{}} on arc $a$, and we write $\delta = (\delta_a)_{a \in A}$. We define the deviation of a path $P \in \mathcal{P}$ as $\delta_P(f) = \sum_{a \in P} \delta_a(f_a)$. The \textit{\deviated{} latency} on arc $a \in A$ is given by $q_a(f_a) = l_a(f_a) + \delta_a(f_a)$;
similarly, the \deviated{} latency on path $P \in \mathcal{P}$ is given by $q_{P}(f) = l_P(f) + \delta_P(f)$. We say that $f$ is $\delta$\textit{-inducible} if and only if it is a \textit{Wardrop flow} (or \textit{Nash flow}) with respect to $l + \delta$, i.e., 
\begin{equation}
\forall i \in [k], \forall P \in \mathcal{P}_i, f_P > 0: \ \ \ \ \ \ q_{P}(f) \leq q_{P'}(f)  \ \ \forall P' \in \mathcal{P}_i.
\label{eq:nash}
\end{equation}
If $f$ is $\delta$-inducible, we also write $f = f^\delta$. 
Note that a Nash flow $f$ for the unaltered latencies $(l_a)_{a \in A}$ is $0$-inducible, i.e., $f = f^0$.

Let $\theta^{\min} = (\theta^{\min}_a)_{a \in A}$ and $\theta^{\max} = (\theta^{\max}_a)_{a \in A}$ be given continuous threshold functions satisfying $\theta_a^{\min}(x) \leq 0 \leq \theta_a^{\max}(x)$ for all $x \geq 0$ and $a \in A$, and let $\theta = (\theta^{\min},\theta^{\max})$. We define
$\Delta(\theta) = \sset{(\delta_a)_{a \in A}}{\forall a \in A : \theta_a^{\min}(x) \leq \delta_a(x) \leq \theta_a^{\max}(x), \ \forall x \geq 0}$ as the set of feasible \deviations{}.
Note that $0 \in \Delta(\theta)$ for all threshold functions $\theta^{\min}$ and $\theta^{\max}$. We say that $\delta \in \Delta(\theta)$ is a \textit{$\theta$-\deviation{}}. Furthermore, $f$ is \textit{$\theta$-inducible} if there exists a $\delta \in \Delta(\theta)$ such that $f$ is $\delta$-inducible. 
For $-1 < \alpha \leq 0 \leq \beta$, we call $\delta \in \Delta(\theta)$ an \textit{$(\alpha,\beta)$-\deviation{}} if $\theta^{\min} = \alpha l $ and $\theta^{\max} = \beta l$, and also write $\theta = (\alpha,\beta)$. 

We make the following assumption throughout the paper: 
\begin{assumption}
We assume that $l_a(x) + \theta^{\min}_a(x) \geq 0$ for all $x \geq 0$ and $a \in A$.
\label{ass:nonnegative}
\end{assumption}
The restrictions imposed on the deviations naturally give rise to the following two optimization problems. We emphasize that in both problems the social cost function $C(\cdot)$ only takes into account the latencies but not the \deviations.
\begin{enumerate}
\item \textit{Best Deviation Problem}: compute a \deviation{} $\delta \in \Delta(\theta)$ which minimizes $\inf\{C(f^{\delta}) : \delta \in \Delta(\theta)\}$. 
If $f^\delta$ is not unique, we assume that $C(f^\delta)$ refers to the social cost of the best Nash flow. 
 
\item \textit{Worst Deviation Problem}: compute a \deviation{} $\delta \in \Delta(\theta)$ which maximizes $\sup\{C(f^{\delta}) : \delta \in \Delta(\theta)\}$.
If $f^\delta$ is not unique, we assume that $C(f^\delta)$ refers to the social cost of the worst Nash flow. 
\end{enumerate}

We (implicitly) assume that only \deviations{} $\delta$ are considered for which a Nash flow exists. We briefly elaborate on the existence when $\theta^{\min} = 0$ and $\theta^{\max}_a$ is  non-negative, non-decreasing and continuous for all $a \in A$. It is not hard to see that for a \deviated{} Nash flow  $f^\delta$ there exists some $0 \leq \lambda_a \leq 1$ for every arc $a \in A$ such that $\delta_a(f_a^\delta) = \lambda_a \theta^{\max}_a(f_a^\delta)$. In particular, this means that $\delta' \in \Delta(\theta)$ defined by $\delta_a' = \lambda_a \theta^{\max}_a$ also induces $f^\delta$. Therefore it is sufficient to consider \deviations{} of the form $\delta_a = \lambda_a \theta^{\max}_a$ where $0 \leq \lambda_a \leq 1$ for all $a \in A$. As a consequence, it follows that $q_a = l_a + \delta_a$ is a  non-negative, non-decreasing and continuous function for all $a \in A$. It is well-known that for these types of functions, the existence of a Nash flow is guaranteed (see, e.g., Nisan et al. \cite{Nisan2007}).



\paragraph{Deviation Ratio.}
Given an instance $\mathcal{I}$ and threshold functions $\theta = (\theta^{\min}, \theta^{\max})$, we define the \emph{\ratio{}} $\dr(\mathcal{I},\theta) = \sup_{\delta \in \Delta(\theta)} C(f^\delta)/C(f^0)$ as the worst-case ratio of the cost of a $\theta$-inducible flow and the cost of a $0$-inducible flow. Intuitively, $\dr(\mathcal{I}, \theta)$ measures the worst-case deterioration of the social cost of a Nash flow due to (feasible) latency deviations. 

Note that for fixed deviations $\delta \in \Delta(\theta)$, there might be multiple Nash flows that are $\delta$-inducible. In this case, we adopt the convention that $C(f^\delta)$ refers to the social cost of the worst Nash flow that is $\delta$-inducible. 

Our main focus in this paper is on establishing (tight) bounds on the Deviation Ratio. As a side-result, we prove that the problem of determining worst-case deviations is \np-hard. 
\begin{theorem}
It is \np-hard to compute deviations $\delta \in \Delta(\theta)$ such that $C(f^\delta)$ is maximized, even for single-commodity networks with linear latencies.
\label{thm:np-hard_max}
\end{theorem}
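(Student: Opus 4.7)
The plan is to establish NP-hardness by a polynomial-time reduction from a classical NP-hard problem; a natural candidate, in the spirit of the hardness proofs in \cite{Bonifaci2011,Roughgarden2006}, is the \textsc{Partition} problem: given positive integers $a_1,\dots,a_n$ with $\sum_i a_i = 2T$, decide whether there is a subset $S \subseteq [n]$ with $\sum_{i \in S} a_i = T$. The reduction will produce, in polynomial time, a single-commodity instance $\mathcal{I}$ with linear latencies and threshold functions $\theta = (\theta^{\min},\theta^{\max})$, together with a decision value $K$, such that $\sup_{\delta \in \Delta(\theta)} C(f^\delta) \geq K$ precisely when the input \textsc{Partition} instance is a YES-instance.

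From a \textsc{Partition} instance I would construct a single-commodity graph with linear latencies consisting of $n$ ``switch'' gadgets $G_1,\dots,G_n$ glued together through a shared bottleneck structure between $s$ and $t$. Each gadget $G_i$ contains two alternative routes whose linear latency slopes are proportional to $a_i$, and the bounds $\theta^{\min}_a,\theta^{\max}_a$ on its arcs are tuned so that every feasible deviation restricted to $G_i$ acts as a binary switch: it either leaves $G_i$ in its ``default'' mode or flips it into its alternative mode, effectively contributing a weight of $a_i$ to the common bottleneck. Writing $S \subseteq [n]$ for the set of gadgets flipped by $\delta$, the social cost $C(f^\delta)$---measured with respect to the \emph{original} latencies---reduces by elementary algebra to a strictly concave function of the subset-sum $W_S = \sum_{i \in S} a_i$ attaining its unique maximum at the balanced value $W_S = T$. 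Taking $K$ to be this maximum value yields the desired equivalence.

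\textbf{Main obstacle.} The principal technical difficulty lies in the joint calibration of the gadgets and the bottleneck: the latency slopes and the threshold functions must be chosen so that, for \emph{every} feasible $\delta \in \Delta(\theta)$, (i) no mixed equilibrium can arise inside a single $G_i$ (the ``all-or-nothing'' property, which will likely require a Braess-style sub-configuration inside each gadget), and (ii) the induced Nash cost depends only on the subset-sum $W_S$ and not on the identity of $S$. Additional care is required for the worst-Nash tie-breaking convention in the definition of $C(f^\delta)$, which can typically be handled by slightly perturbing the latencies so that the equilibrium is unique for every feasible $\delta$. Once this clean decomposition is in place, the remainder of the proof reduces to strict concavity of a one-dimensional quadratic in $W_S$ together with standard facts about Nash flows in parallel-arc networks.
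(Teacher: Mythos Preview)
Your proposal is only a plan, and the hard part---the actual gadget construction---is left entirely open. The two properties you flag as obstacles are genuine: forcing a Nash equilibrium in a linear-latency gadget to behave as a \emph{discrete} switch under \emph{all} feasible $\delta \in \Delta(\theta)$ is delicate, since linear latencies yield equilibria that vary continuously with the deviation, and there is no evident mechanism here that would rule out fractional splits inside a gadget for intermediate choices of $\delta$. Likewise, the claim that $C(f^\delta)$ collapses to a strictly concave function of the subset-sum $W_S$ alone (and not of the identity of $S$) is asserted but not argued; in serial compositions of heterogeneous gadgets this kind of symmetry usually requires a very careful balancing that you have not supplied. As written, the proposal does not constitute a proof.

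More importantly, the paper takes a completely different and far simpler route: a direct reduction from \textsc{Directed Hamiltonian $s,t$-Path}. Given the input graph $G$, one takes the \emph{same} graph, sets $l_a(x)=x$ on every arc, $\theta^{\min}_a=0$, $\theta^{\max}_a=n-1$, and unit demand. If $G$ has a Hamiltonian $s,t$-path $P$, put $\delta_a=0$ on $P$ and $\delta_a=n-1$ off $P$; the unique Nash flow sends everything along $P$, giving cost exactly $n-1$. Conversely, for \emph{any} feasible flow one has $C(g)\le n-1$, with equality only if all flow lies on a single simple $s,t$-path of length $n-1$, i.e.\ a Hamiltonian path. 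Thus $C(f^\delta)\ge n-1$ is achievable iff a Hamiltonian path exists. This avoids gadgets, binary switches, and concavity arguments altogether; the whole reduction fits in a paragraph. If you want to pursue a \textsc{Partition}-based argument you would need to exhibit the gadgets explicitly and prove both obstacle properties, but the Hamiltonian-path reduction makes that effort unnecessary.
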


\paragraph{Related notions.} 
The best \deviation{} problem is a direct generalization of the \textit{restricted network toll problem} introduced by Bonifaci et al.~\cite{Bonifaci2011}. We obtain this model for $\theta^{\min} = 0$. The \deviations{} are interpreted as non-negative tolls on the arcs. The objective minimized in \cite{Bonifaci2011} is measured against the social optimum, i.e., the authors are interested in the ratio $C(f^\delta)/C(f^*)$, where $f^*$ is an optimal flow for the instance $\mathcal{I}$. Also, our definition of $(0, \beta)$-\deviations{} is equivalent to the definition of \textit{$\beta$-restricted tolls} in \cite{Bonifaci2011}.

Hoefer et al. \cite{Hoefer2008} consider the \emph{taxing subnetwork problem}, which is a special case of the restricted network toll problem. Here only a designated subset of the arcs can be tolled, which is equivalent to $\theta^{\min}_a = 0$ and $\theta^{\max}_a \in \{0,\infty\}$ for all $a \in A$. They show that \best{} is  NP-complete, even for two commodities. To the best of our knowledge, the single-commodity case is still an open problem. On the positive side, Hoefer et al. \cite{Hoefer2008} and Bonifaci et al. \cite{Bonifaci2011} give polynomial time algorithms for parallel-arc networks, solving the best \deviation{} problem for their respective definitions of the threshold functions.  

Lastly, the work by Fotakis et al. \cite{Fotakis2015} can technically be seen as an (approximation) variant of the restricted toll model, in which the tolls are interpreted as risk-averse behavior of players. Here, we have $\theta^{\min}_a = 0$ and $\theta^{\max}_a = \gamma l_a$ for all $a \in A$. Furthermore, \deviations{} of the form $\delta_a(x) = \gamma_a l_a(x)$ are considered for $0 \leq \gamma_a \leq \gamma$ for all $a \in A$.

Beckmann et al. \cite{Beckmann1956} proved that the social optimum can be induced as a Nash flow using \textit{marginal tolls}, that is, by setting $\delta_a(x) = x \cdot l'_a(x)$, where $l'_a(x)$ is the derivative of $l_a(x)$ (assuming the existence of $l_a'$). In particular, if these tolls are feasible, i.e., $\delta \in \Delta(\theta)$, then $\delta$ is an optimal solution for the best \deviation{} problem. An extension of this setting, which has been studied intensively recently, is to consider perceived latencies of the form $l_a(x) + \rho \cdot xl'_a(x)$ for some parameter $\rho \in \R$, i.e., we take $\delta_a(x) = \rho \cdot xl'_a(x)$. This type of \deviation{} can be interpreted in many ways. If there exists a $\rho$ such that $(\rho \cdot xl'_a(x))_{a \in A} \in \Delta(\theta)$, then this \deviation{} gives an approximation for the \best{}. Results that are related to this are \cite{Chen2014, Chen2008, Christodoulou2011, Fotakis2015, Meir2014, Meir2015}.

Nikolova and Stier-Moses \cite{Nikolova2015} (see also \cite{Lianeas2015, Nikolova2014}) consider non-atomic network routing games with uncertain latencies.
Here the \deviations{} correspond to variances $(v_a)_{a \in A}$ of some random variable $\zeta_a$ (with expectation zero). 
The \emph{perceived latency} of a path $P \in \mathcal{P}$ with respect to a flow $f$ is then defined as $q^\gamma_P(f) = l_P(f) + \gamma v_P(f)$, where $\gamma \geq 0$ is a parameter representing the \emph{risk-aversion} of the players. 
They consider two different objectives as to how the deviation $v_P(f)$ of a path $P$ is defined: $v_P(f) = \sum_{a \in P} v_a(f_a)$, called the \textit{mean-var} objective, and $v_P(f) = (\sum_{a \in P} v_a(f_a))^{1/2}$, called the \textit{mean-std} objective. 
Note that for the mean-var objective there is an equivalent arc-based definition, where the perceived latency of every arc $a \in A$ is defined as $q^\gamma_a(f_a) = l_a(f_a) + \gamma v_a(f_a)$.
They define the \emph{Price of Risk Aversion} \cite{Nikolova2015}  as the worst-case ratio $C(x)/C(z)$, where $x$ is a \emph{risk-averse} Nash flow with respect to $q^\gamma = l + \gamma v$ and $z$ is a \emph{risk-neutral} Nash flow with respect to $l$.\footnote{The existence of a risk-averse Nash flow is proven in \cite{Nikolova2014}.}  
In their analysis, it is assumed that the \textit{variance-to-mean-ratio} of every arc $a \in A$ under the risk-averse flow $x$ is bounded by some constant $\kappa \ge 0$, i.e., $v_a(x_a) \leq \kappa \l_a(x_a)$ for all $a \in A$. Under this assumption, they prove that the Price of Risk Aversion $\text{PRA}(\mathcal{I}, \gamma, \kappa)$ of single-commodity instances $\mathcal{I}$ with non-negative and non-decreasing latency functions is at most $1 + \gamma \kappa \lceil (n-1)/2 \rceil$, where $n$ is the number of nodes. 

We now elaborate on the relation to our \ratio. The main technical difference is that in \cite{Nikolova2015} the variance-to-mean ratio is only considered for the respective flow values $x_a$. 
Note however that if we write for every $a \in A$, $v_a(x_a) = \lambda_a l_a(x_a)$ for some $0 \leq \lambda_a \leq \kappa$, then the deviation function $\delta_a(y) =  \gamma\lambda_a l_a(y)$ has the property that $x = f^{\delta}$ is $\delta$-inducible with $\delta \in \Delta(0,\gamma \kappa)$. It follows that for every instance $\mathcal{I}$ and parameters $\gamma$, $\kappa$, 
$\text{PRA}(\mathcal{I}, \gamma, \kappa) \le \dr(\mathcal{I}, (0, \gamma\kappa))$. 


Another related notion is the \emph{Biased Price of Anarchy (BPoA)} introduced by Meir and Parkes \cite{Meir2015}. Adapted to our setting, given an instance $\mathcal{I}$ and threshold functions $\theta$, the Biased Price of Anarchy is defined as $\text{BPoA}(\mathcal{I},\theta) = \sup_{\delta \in \Delta(\theta)} C(f^\delta)/C(f^*)$, where $f^*$ is a socially optimal flow. Note that because $C(f^*) \le C(f)$ for every feasible flow $f$, we have $\dr(\mathcal{I}, \theta) \le \text{BPoA}(\mathcal{I},\theta)$.



\medskip
\noindent
Due to space limitations, some material is omitted from the main text and can be found in the appendix.

\section{Upper bounds on the \ratio{}}
\label{sec:upper-bounds}

We derive an upper bound on the Deviation Ratio. All results in this section hold for multi-commodity instances with a common source.

We first derive a characterization result for the inducibility of a given flow $f$. This generalizes the characterization in \cite{Bonifaci2011} to common source multi-commodity instances and negative deviations. 
We define an \emph{auxiliary graph} $\hat{G} = \hat{G}(f) = (V,\hat{A})$ with $\hat{A} = A \cup \bar{A}$, where $\bar{A} = \{ (v,u) : a = (u,v) \in A^+\}$. That is, $\hat{A}$ consists of the set of arcs in $A$, which we call \textit{forward} arcs, and the set $\bar{A}$ of arcs $(v, u)$ with $(u,v) \in A^+$, which we call \textit{reversed arcs}. 
Further, we define a cost function $c : \hat{A} \rightarrow \R$ as follows: 
\begin{equation}
c_a = \left\{ \begin{array}{rl} l_{(u,v)}(f_a) + \theta^{\max}_{(u,v)}(f_a) & \text{ for } a = (u,v) \in A \\
-l_{(u,v)}(f_a) - \theta^{\min}_{(u,v)}(f_a) & \text{ for } a = (v,u) \in \bar{A}.
\end{array}\right.
\label{def:cost}
\end{equation}

\begin{theorem}
Let $f$ be a feasible flow. Then $f$ is $\theta$-inducible if and only if $\hat{G}(f)$ does not contain a cycle of negative cost with respect to $c$. 
\label{thm:induc}
\end{theorem}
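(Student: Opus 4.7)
The plan is to reduce the question of $\theta$-inducibility to the feasibility of a linear system of difference constraints on node potentials, and then to invoke the classical criterion that such a system is solvable if and only if the associated constraint graph admits no negative-cost cycle. The common-source assumption is precisely what allows me to use a single potential function shared across all commodities; without it, each commodity would require its own potential and a single difference-constraint system on $\hat{G}$ would no longer capture inducibility.

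First I would reformulate the Wardrop conditions~\eqref{eq:nash} in terms of node potentials. Since all commodities originate from a common source $s$, the conditions in~\eqref{eq:nash} are equivalent to the existence of a function $\pi: V \to \R$ with $\pi_s = 0$ such that
\[
\pi_v - \pi_u \le q_a(f_a) \text{ for all } a=(u,v) \in A, \qquad \pi_v - \pi_u = q_a(f_a) \text{ for all } a=(u,v) \in A^+,
\]
where $q_a = l_a + \delta_a$. Interpreting $\pi_v$ as the shortest-path distance from $s$ to $v$ under the deviated latencies $q$, the forward direction follows from shortest-path optimality, while the backward direction uses that every flow-carrying $(s,t_i)$-path has total deviated latency $\pi_{t_i}$ and hence is a minimum-latency path for commodity $i$.

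Next I would eliminate $\delta$ by exploiting the freedom $\delta_a(f_a) \in [\theta^{\min}_a(f_a), \theta^{\max}_a(f_a)]$. The displayed conditions constrain $\delta_a$ only at the point $x = f_a$; a continuous extension to all $x \ge 0$ lying pointwise in $[\theta^{\min}_a, \theta^{\max}_a]$ is easy to construct (e.g., by linear interpolation with the threshold functions). Hence $f$ is $\theta$-inducible if and only if there exists $\pi$ with $\pi_v - \pi_u \le l_a(f_a) + \theta^{\max}_a(f_a)$ for all $a=(u,v) \in A$ and $\pi_v - \pi_u \ge l_a(f_a) + \theta^{\min}_a(f_a)$ for all $a=(u,v) \in A^+$: for each arc in $A^+$ one then sets $\delta_a(f_a) = \pi_v - \pi_u - l_a(f_a)$, which lies in the prescribed interval, and for arcs outside $A^+$ the choice $\delta_a(f_a) = \theta^{\max}_a(f_a)$ trivially satisfies the relaxed upper-bound inequality.

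Rewriting the second constraint as $\pi_u - \pi_v \le -l_a(f_a) - \theta^{\min}_a(f_a)$, i.e., as a difference constraint along the reverse arc $(v,u) \in \bar{A}$, one recognizes that the costs in~\eqref{def:cost} are precisely the right-hand sides of the resulting system $\pi_v - \pi_u \le c_a$ on $\hat{G}$. Appealing to the classical theorem on the feasibility of systems of difference constraints (via Bellman-Ford or LP duality), this system admits a solution if and only if $\hat{G}$ contains no directed cycle of negative total $c$-cost, which is exactly the claim. I expect the main technical hurdle to be the step justifying the potential reformulation of~\eqref{eq:nash}: here one must carefully use that a single common source $s$ lets all commodities share the same shortest-path potential, a property that fails in the general multi-source setting and is what confines the generalization of~\cite{Bonifaci2011} to common-source instances.
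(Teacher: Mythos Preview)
Your proposal is correct and, in fact, somewhat cleaner than the paper's own argument. The two proofs coincide in the direction ``no negative cycle $\Rightarrow$ $\theta$-inducible'': both construct shortest-path potentials $\pi_v$ in $\hat G(f)$ and then set $\delta_a = \max\{\theta^{\min}_a,\pi_v-\pi_u-l_a\}$ (your choice of $\delta_a=\theta^{\max}_a$ for $a\notin A^+$ works equally well), verifying feasibility for the LP~\eqref{eqn:lp_toll}.

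Where the two diverge is in the direction ``$\theta$-inducible $\Rightarrow$ no negative cycle''. The paper argues \emph{combinatorially}: given an arbitrary cycle $\hat B$ in $\hat G(f)$, it picks for every reversed arc a flow-carrying $(s,\cdot)$-path (this is where the common source enters), applies the Nash inequalities to the resulting pairs of $(s,\cdot)$-paths, and sums so that the auxiliary paths cancel, leaving $\sum_{a\in\hat B}c_a\ge 0$. You instead observe that inducibility yields a single potential $\pi$ (via shortest-path distances under $q=l+\delta$, which are nonnegative by Assumption~\ref{ass:nonnegative}) satisfying the difference system $\pi_v-\pi_u\le c_a$ on $\hat G$, and then invoke the standard Bellman--Ford/LP-duality criterion. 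Your route is shorter and treats both directions symmetrically; the paper's route is more self-contained and makes the role of the common source tangible in the cycle argument itself. Either way the crux is the same: a common source lets all commodities share one potential, which is what you correctly flag as the step that fails for general multi-commodity instances (cf.\ Remark~\ref{rem:multi}).
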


Theorem \ref{thm:induc} does not hold for general multi-commodity instances (see Remark \ref{rem:multi} in the appendix). 
The proof of Lemma~\ref{lem:cycles} follows directly from
Theorem \ref{thm:induc}.

\begin{lemma}\label{lem:cycles}
Let $x$ be $\theta$-inducible and let $X_i$ be a flow-carrying $(s,t_i)$-path for commodity $i \in [k]$ in $G$. Let $\chi$ and $\psi$ be any $(s,t_i)$-path and $(t_i,s)$-path in $\hat{G}(x)$, respectively. Then 
\begin{align*}
\sum_{a \in X_i} l_a(x_a) + \theta_a^{\min}(x_a) & \leq \sum_{a \in \chi \cap A} l_a(x_a) + \theta_a^{\max}(x_a) - \sum_{a \in \chi \cap \bar{A}} l_a(x_a) + \theta_a^{\min}(x_a) \\
\sum_{a \in X_i} l_a(x_a) + \theta_a^{\max}(x_a) & \geq \sum_{a \in \psi \cap \bar{A}} l_a(x_a) + \theta_a^{\min}(x_a) - \sum_{a \in \psi \cap A} l_a(x_a) + \theta_a^{\max}(x_a).
\end{align*}
\end{lemma}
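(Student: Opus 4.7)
}
The strategy is to extract a suitable closed walk in $\hat{G}(x)$ and apply Theorem~\ref{thm:induc} to it. The key observation is that because $X_i$ is flow-carrying in $G$, every arc $a = (u,v) \in X_i$ lies in $A^+$, so its reverse $(v,u)$ belongs to $\bar{A}$ and hence to $\hat{A}$. Thus $X_i$ traversed in reverse, which I will denote $X_i^{-1}$, is an $(t_i,s)$-walk in $\hat{G}(x)$, and $X_i$ itself is an $(s,t_i)$-walk in $\hat{G}(x)$ (using the forward arcs of $A^+ \subseteq A$). Combined with the given paths $\chi$ and $\psi$ in $\hat{G}(x)$, this will yield two closed walks whose costs must be non-negative.

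For the first inequality, I would concatenate $\chi$ (from $s$ to $t_i$) with $X_i^{-1}$ (from $t_i$ back to $s$) to obtain a closed walk $W_1$ in $\hat{G}(x)$. Any closed walk in a directed graph decomposes into (edge-disjoint unions of) simple cycles, so by Theorem~\ref{thm:induc} the total cost $c(W_1) \geq 0$. Using the definition \eqref{def:cost}, the forward arcs of $\chi$ contribute $\sum_{a \in \chi \cap A}\bigl(l_a(x_a)+\theta^{\max}_a(x_a)\bigr)$, the reversed arcs of $\chi$ contribute $-\sum_{a \in \chi \cap \bar A}\bigl(l_a(x_a)+\theta^{\min}_a(x_a)\bigr)$, and every arc of $X_i^{-1}$, being a reversed arc, contributes $-\bigl(l_a(x_a)+\theta^{\min}_a(x_a)\bigr)$. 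Rearranging $c(W_1) \ge 0$ directly gives the first displayed inequality.

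For the second inequality, I would concatenate $X_i$ (from $s$ to $t_i$, now using the forward copies of its arcs, which lie in $A^+ \subseteq A$) with $\psi$ (from $t_i$ back to $s$) to obtain a closed walk $W_2$ in $\hat{G}(x)$. The same decomposition-into-cycles argument and Theorem~\ref{thm:induc} give $c(W_2) \geq 0$. Each arc of $X_i$ now contributes $l_a(x_a)+\theta^{\max}_a(x_a)$ as a forward arc, the forward arcs of $\psi$ contribute $l_a(x_a)+\theta^{\max}_a(x_a)$, and the reversed arcs of $\psi$ contribute $-(l_a(x_a)+\theta^{\min}_a(x_a))$. Rearranging yields the second inequality.

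The only genuinely non-routine point is to justify that $c(W) \ge 0$ for a closed walk $W$, which I would dispatch by recalling that any closed walk in a directed graph can be written as an arc-disjoint union of simple cycles, so its $c$-cost is the sum of the $c$-costs of these cycles, each non-negative by Theorem~\ref{thm:induc}. Everything else is bookkeeping with the definition \eqref{def:cost} of $c$.
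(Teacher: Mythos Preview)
Your proposal is correct and follows essentially the same route as the paper: form a closed walk in $\hat{G}(x)$ by concatenating $\chi$ with the reversed $X_i$ (respectively, the forward $X_i$ with $\psi$), decompose it into arc-disjoint simple cycles, and sum the non-negativity inequalities from Theorem~\ref{thm:induc}. The paper merely makes the cycle-decomposition step explicit via a separate proposition (the in-degree $=$ out-degree Eulerian argument on the induced multigraph), while you invoke it as a standard fact; otherwise the two arguments are identical.
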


The following notion of alternating paths turns out to be crucial. It was first introduced by Lin et al. \cite{Lin2011} and is also used by Nikolova and Stier-Moses \cite{Nikolova2015}.
\begin{definition}[Alternating path \cite{Lin2011, Nikolova2015}]
Let $x$ and $z$ be feasible flows. We partition $A = X \cup Z$, where 
$Z = \{a \in A : z_a \geq x_a \text{ and } z_a > 0\}$ and $X = \{a \in A : z_a < x_a \text{ or } z_a = x_a = 0\}$. We say that $\pi_i = (a_1,\dots,a_r)$ is an alternating $s,t_i$-path if the arcs in $\pi_i \cap Z$ are oriented in the direction of $t_i$, and the arcs in $\pi_i \cap X$ are oriented in the direction of $s$.  
\label{def:alt_path}
\end{definition}

Without loss of generality we may remove all arcs with $z_a = x_a = 0$ (as they do not contribute to the social cost). Note that if along $\pi_i$ we reverse the arcs of $Z$ then the resulting path is a directed $(t_i,s)$-path in $\hat{G}(z)$ (which we call the \emph{$s$-oriented version of $\pi_i$}); similarly, if we reverse the arcs of $X$ then the resulting path is an $(s,t_i)$-path in $\hat{G}(x)$ (which we call the \emph{$t_i$-oriented version of $\pi_i$}).

The following lemma proves the existence of an \emph{alternating path tree}, i.e., a spanning tree of alternating paths, rooted at the common source node $s$.
It is a direct generalization of Lemma 4.6 in \cite{Lin2011} and Lemma 4.5 in \cite{Nikolova2015}. 
\begin{lemma}
Let $z$ and $x$ be feasible flows and let $Z$ and $X$ be a partition of $A$ as in Definition \ref{def:alt_path}. Then there exists an alternating path tree. 
\label{lem:alt_path_tree}
\end{lemma}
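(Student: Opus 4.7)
The plan is to reformulate alternating paths as directed reachability in an auxiliary graph and then apply a cut-based flow-conservation argument for both $z$ and $x$ simultaneously.

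First I would construct an auxiliary directed graph $\hat{G}' = (V, A')$ by keeping each arc in $Z$ as is and replacing each arc in $X$ by its reverse. By Definition~\ref{def:alt_path}, an alternating $(s,v)$-path in the original graph is precisely a directed $(s,v)$-path in $\hat{G}'$. Consequently, an alternating path tree rooted at $s$ exists as soon as every $t_i$ is reachable from $s$ in $\hat{G}'$, since one can then extract any out-arborescence in $\hat{G}'$ from $s$ (for instance via BFS) and the unique $s$-to-$v$ path in this tree is alternating by construction. Following the remark in the text, I may also remove arcs with $z_a = x_a = 0$, so that every arc of $X$ satisfies $z_a < x_a$ strictly (hence $x_a > 0$) and every arc of $Z$ satisfies $z_a \geq x_a > 0$.

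The core step is to prove that each $t_i$ is reachable from $s$ in $\hat{G}'$. I would argue by contradiction: assume some $t_i$ is not reachable, and let $S$ be the set of nodes reachable from $s$ in $\hat{G}'$, so that $s \in S$ and $t_i \in \bar{S} := V \setminus S$. Since no arc of $\hat{G}'$ leaves $S$, all arcs of $A$ leaving $S$ must belong to $X$ (a $Z$-arc leaving $S$ would immediately escape in $\hat{G}'$), and all arcs of $A$ entering $S$ must belong to $Z$ (an $X$-arc entering $S$ would reverse to an $\hat{G}'$-arc leaving $S$). Writing flow conservation for $y \in \{z, x\}$ across the cut $(S, \bar{S})$ and using that the common source $s$ lies in $S$, I obtain the identity
\[
\sum_{a \text{ leaves } S} y_a \;-\; \sum_{a \text{ enters } S} y_a \;=\; \sum_{j:\, t_j \in \bar{S}} r_j
\]
for both $y = z$ and $y = x$. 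Subtracting these yields $\sum_{a \text{ leaves } S}(z_a - x_a) = \sum_{a \text{ enters } S}(z_a - x_a)$. The left-hand side is non-positive, and strictly negative whenever there is any arc leaving $S$, because $z_a < x_a$ on $X$; the right-hand side is non-negative because $z_a \geq x_a$ on $Z$. Hence the cut leaving $S$ must be empty, but then substituting back into the $z$-conservation identity gives $-\sum_{a \text{ enters } S} z_a = \sum_{j:\, t_j \in \bar{S}} r_j \geq r_i > 0$, which is impossible since the left-hand side is $\leq 0$.

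The main obstacle is the simultaneous use of flow conservation for two different flows together with the sign constraints arising from the $X/Z$ partition; the common-source assumption is essential here because it guarantees that $z$ and $x$ have the same supply in $S$, so that the demand terms cancel upon subtracting the two conservation equations. Once reachability of each $t_i$ from $s$ in $\hat{G}'$ is established, extracting the desired spanning out-arborescence is a routine application of BFS.
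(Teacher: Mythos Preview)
Your proof is correct and follows essentially the same approach as the paper's. The paper first adds a super-sink $t$ with arcs $(t_i,t)$ carrying flow $r_i$ for both $x$ and $z$ (reducing to a single-commodity $s$--$t$ setting) and then runs the identical cut argument: if the cut cannot be crossed, all outgoing arcs lie in $X'$ and all incoming arcs lie in $Z'$, so $x_{Z'}\le z_{Z'}$ and conservation gives $x_{X'}\le z_{X'}$, contradicting the strict inequality on $X'$. You bypass the super-sink by writing the conservation identity for each of $z$ and $x$ across the cut and subtracting; the resulting sign contradiction is literally the same. Your explicit reformulation of alternating paths as directed reachability in $\hat{G}'$ is a clean way to package the argument, but there is no substantive difference between the two proofs. (One cosmetic point: you write ``every arc of $Z$ satisfies $z_a\ge x_a>0$''; only $z_a>0$ is guaranteed, not $x_a>0$, but this is never used.)
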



We now have all the ingredients to prove the following main result. 


\begin{theorem}
\label{lem:main_bound}
\label{thm:general_multi_common_source}
Let $x$ be $\theta$-inducible and let $z$ be $0$-inducible. Further, let $A = X \cup Z$ be a partition as in Definition \ref{def:alt_path}. Let $\pi$ be an alternating path tree, where $\pi_i$ denotes the alternating $s, t_i$-path in $\pi$. 
\begin{enumerate}[(i)]
\item Suppose $\theta = (\theta^{\min}, \theta^{\max})$. Let $X_i$ be a flow-carrying path of commodity $i \in [k]$ maximizing $l_{P}(x)$ over all $P \in \mathcal{P}_i$.\footnote{Note that the values $l_P(x) + \delta_P(x)$ are the same for all flow-carrying paths, but this is not necessarily true for the values $l_P(x)$.} 
Then
$$
C(x) \leq C(z) + \sum_{i \in [k]} r_i \bigg( \sum_{a \in Z \cap \pi_i}\theta_a^{\max}(z_a) - \sum_{a \in X \cap \pi_i} \theta_a^{\min}(z_a) - \sum_{a \in X_i} \theta_a^{\min}(x_a) \bigg).
$$
\item Suppose $\theta = (\alpha, \beta)$ with $-1 < \alpha \leq 0 \leq \beta$. Let $\eta_i$ is the number of disjoint segments of consecutive arcs in $Z$ on the alternating $s, t_i$-path $\pi_i$ for $i \in [k]$.\footnote{Note that $\eta_i \le \lceil (n-1)/2 \rceil$.} 
Then
$$
\frac{C(x)}{C(z)}
\leq  1 + \frac{\beta - \alpha}{1 + \alpha}\cdot \sum_{i \in [k]} r_i\eta_i 
\leq  1 + \frac{\beta - \alpha}{1 + \alpha}\cdot \left \lceil \frac{n-1}{2}\right \rceil\cdot r.
$$
\end{enumerate}
\end{theorem}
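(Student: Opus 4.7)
\medskip\noindent\textbf{Proof plan.} The overall approach is to bound $C(x)$ path-by-path using the cycle-free characterization of inducibility (Lemma~\ref{lem:cycles} and Theorem~\ref{thm:induc}) on both $\hat G(x)$ and $\hat G(z)$. First I would bound $C(x) \le \sum_i r_i\, l_{X_i}(x)$: since all flow-carrying paths of commodity $i$ share the same deviated latency (but may differ in pure latency), $\sum_{P\in\mathcal{P}_i:\, x_P>0} x_P^i\, l_P(x) \le r_i\, l_{X_i}(x)$. Next I would apply Lemma~\ref{lem:cycles} to the $t_i$-oriented version of $\pi_i$, a valid $(s,t_i)$-path in $\hat G(x)$ once arcs with $z_a=x_a=0$ are removed (so every $X$-arc satisfies $x_a>0$). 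Rearrangement gives
\[
l_{X_i}(x) \le \sum_{a\in Z\cap\pi_i}\!\bigl[l_a(x_a)+\theta_a^{\max}(x_a)\bigr] - \sum_{a\in X\cap\pi_i}\!\bigl[l_a(x_a)+\theta_a^{\min}(x_a)\bigr] - \sum_{a\in X_i}\theta_a^{\min}(x_a).
\]

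On the $z$-side, the $s$-oriented version of $\pi_i$ is a valid $(t_i,s)$-path in $\hat G(z)$ (the reversed arcs it uses correspond to $Z$-arcs, which have $z_a>0$). Concatenating with a flow-carrying $z$-path $Z_i$ and invoking Theorem~\ref{thm:induc} with $\theta=(0,0)$ yields the non-negative cycle inequality $l_{Z_i}(z) \ge \sum_{a\in Z\cap\pi_i} l_a(z_a) - \sum_{a\in X\cap\pi_i} l_a(z_a)$. Using monotonicity of $l$ turns the $l_a(x_a)$ terms in the previous display into $l_a(z_a)$ terms (it increases $Z$-arc contributions but, in the combination $-l_a(x_a)$ on $X$-arcs, tightens things in the correct direction), and using monotonicity of $\theta^{\max},\theta^{\min}$ transfers the threshold arguments from $x_a$ to $z_a$ similarly. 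The two latency sums then collapse into $l_{Z_i}(z)$ via the $z$-cycle inequality, yielding $l_{X_i}(x) \le l_{Z_i}(z) + (\text{deviation terms})$. Summing with weights $r_i$ and using $\sum_i r_i\, l_{Z_i}(z) = C(z)$ produces part~(i).

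For part~(ii) I would substitute $\theta_a^{\min}=\alpha l_a$ and $\theta_a^{\max}=\beta l_a$. Since $1+\alpha>0$, the previous inequality reduces to $(1+\alpha)l_{X_i}(x) \le (1+\beta)\sum_{a\in Z\cap\pi_i} l_a(x_a) - (1+\alpha)\sum_{a\in X\cap\pi_i} l_a(x_a)$. Exchanging $l_a(x_a)$ for $l_a(z_a)$ by monotonicity (the multipliers $1+\beta\ge 0$ on $Z$ and $-(1+\alpha)<0$ on $X$ keep the inequality in the right direction) and absorbing the $X$-sum via the $z$-cycle inequality gives
\[
l_{X_i}(x) \le l_{Z_i}(z) + \frac{\beta-\alpha}{1+\alpha}\sum_{a\in Z\cap\pi_i} l_a(z_a).
\]
The hard part, and main obstacle, is then to show $\sum_{a\in Z\cap\pi_i} l_a(z_a) \le \eta_i\, C(z)$. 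I would decompose $Z\cap\pi_i$ into its $\eta_i$ maximal consecutive $Z$-segments and use the Wardrop identity $l_a(z_a)=d^l(v')-d^l(u')$ on every used arc $(u',v')$, where $d^l(\cdot)$ is the shortest-path distance from $s$ under $l(z)$. Telescoping along a segment $\sigma$ from $u$ to $v$ gives $l_\sigma(z) = d^l(v) - d^l(u) \le d^l(v)$; since $v$ is incident to a $Z$-arc, it lies in the support of $z$ and hence on some flow-carrying $(s,t_j)$-path, so $d^l(v) \le l_{Z_j}(z) \le C(z)/r_j \le C(z)$ by the normalisation $r_j \ge r_1 = 1$. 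Summing over the $\eta_i$ segments yields the desired bound. Plugging in and summing with weights $r_i$ delivers $C(x)/C(z) \le 1 + \tfrac{\beta-\alpha}{1+\alpha}\sum_i r_i\eta_i$, and the final inequality uses $\eta_i \le \lceil (n-1)/2\rceil$ so that $\sum_i r_i\eta_i \le \lceil (n-1)/2\rceil\, r$.
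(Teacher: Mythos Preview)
Your proof is correct and, for part~(i), follows the paper's argument essentially verbatim: bound $C(x)$ by $\sum_i r_i\,l_{X_i}(x)$, apply Lemma~\ref{lem:cycles} to the $t_i$-oriented $\pi_i$ in $\hat G(x)$, apply the $\theta=(0,0)$ cycle inequality to the $s$-oriented $\pi_i$ in $\hat G(z)$, pass from $x_a$ to $z_a$ via the definition of $X,Z$, and sum.

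For part~(ii) your route differs from the paper's in two places. First, the paper invokes Lemma~\ref{lem:alphabeta_induc} to replace $(\alpha,\beta)$ by $(0,\frac{\beta-\alpha}{1+\alpha})$ before applying part~(i); you instead substitute $\theta^{\min}=\alpha l$, $\theta^{\max}=\beta l$ directly into the part~(i) chain and algebraically isolate the same factor $\frac{\beta-\alpha}{1+\alpha}$. Both are equivalent. Second, and more interestingly, for the key estimate $\sum_{a\in A_{ij}} l_a(z_a)\le C(z)$ on each $Z$-segment $A_{ij}$, the paper uses a flow-shifting argument (Proposition~\ref{prop:interchange}) to show that, after rerouting, the entire segment is flow-carrying for a \emph{single} commodity $w$ and hence lies on one flow-carrying $(s,t_w)$-path. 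You bypass this by exploiting the common source more directly: the Wardrop potentials $d^l(\cdot)$ from $s$ are commodity-independent, every $Z$-arc $(u',v')$ is tight ($l_a(z_a)=d^l(v')-d^l(u')$), so the segment telescopes to $d^l(v)-d^l(u)\le d^l(v)$, and $d^l(v)$ is bounded by the latency of any flow-carrying path through $v$, hence by $C(z)$ via $r_j\ge 1$. Your potential argument is shorter and avoids the auxiliary proposition; the paper's argument has the mild advantage of producing an explicit single-commodity path containing the segment, which may be useful elsewhere.
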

\begin{proof}[i]
We have $C(x) = \sum_i \sum_{P \in \mathcal{P}_i} x_P^i l_P(x) \leq \sum_i r_i \sum_{a \in X_i} l_a(x_a)$ by the choice of $X_i$. By applying the first inequality of Lemma \ref{lem:cycles} to the flow $x$ in the graph $\hat{G}(x)$, where we choose $\chi$ to be the $t_i$-oriented version of $\pi_i$, we obtain
$$
\sum_{a \in X_i} l_a(x_a) + \theta_a^{\min}(x_a)  \leq   \sum_{a \in Z \cap \pi_i} l_a(x_a) + \theta_a^{\max}(x_a) - \sum_{a \in X \cap \pi_i} l_a(x_a) + \theta_a^{\min}(x_a).
$$

Let $Z_i$ be an arbitrary flow-carrying path of commodity $i \in [k]$ with respect to $z$. By applying the second inequality of Lemma \ref{lem:cycles} to the flow $z$ in the graph $\hat{G}(z)$ with $\theta^{\max} = \theta^{\min} = 0$, where we choose $\psi$ to be the $s$-oriented version of $\pi_i$, we obtain
$$
\sum_{a \in Z_i} l_a(z_a) \geq \sum_{a \in Z \cap \pi_i}l_a(z_a) - \sum_{a \in X \cap \pi_i}l_a(z_a).
$$
Combining these inequalities and exploiting the definition of $X$ and $Z$, we obtain 
\begin{align*}
\sum_{a \in X_i} l_a(x_a) + \theta_a^{\min}(x_a)
& \leq  \sum_{a \in Z \cap \pi_i} l_a(x_a) + \theta_a^{\max}(x_a) - \sum_{a \in X \cap \pi_i} l_a(x_a) + \theta_a^{\min}(x_a)  \nonumber \\
& \leq \sum_{a \in Z \cap \pi_i} l_a(z_a) + \theta_a^{\max}(z_a) - \sum_{a \in X \cap \pi_i} l_a(z_a) + \theta_a^{\min}(z_a) \nonumber \\
& \leq  \sum_{a \in Z_i} l_a(z_a) +  \sum_{a \in Z \cap \pi_i}\theta_a^{\max}(z_a) - \sum_{a \in X \cap \pi_i} \theta_a^{\min}(z_a).  \nonumber 
\end{align*}

The claim now follows by multiplying the above inequality with $r_i$ and summing over all commodities $i \in [k]$. Note that $C(z) = \sum_{i} r_i\sum_{a \in Z_i} l_a(z_a)$.
\qed
\end{proof} 

\section{Lower bounds  for $(\alpha,\beta)$-\deviations{}}

We show that the bound in Theorem \ref{thm:general_multi_common_source} is tight in all its parameters for $(\alpha, \beta)$-deviations. We start with single-commodity instances.



\begin{figure}[t!]
\centering
\scalebox{0.8}{
\begin{tikzpicture}[
  ->,
  >=stealth',
  shorten >=1pt,
  auto,
  semithick,
  every state/.style={circle,fill=white,radius=0.2cm,text=black},
]
r = 1
\begin{scope}
  \node[state]  (s)               					 {$s$};
  \node[state]  (v3) [right=3cm of s] 				 {$v_3$};
  \node[state]  (v4) [above=0.75cm of v3]  			 {$v_4$};
  \node[state]  (v2) [below=0.75cm of v3]			     {$v_2$};
  \node[state]  (v1) [below right=0.75cm and 0.75cm of v2] {$v_1$};
  \node[state]  (w4) [above right=0.75cm and 0.75cm of v4] {$w_4$};
  \node[state]  (w3) [right=3cm of v4] 				 {$w_3$};
  \node[state]  (w2) [right=3cm of v3] 				 {$w_2$};       
  \node[state]  (w1) [right=3cm of v2] 				 {$w_1$}; 
  \node[state]  (t)  [right=3cm of w2] 				 {$t$};
  
\path[every node/.style={sloped,anchor=south,auto=false}]
(s) edge[line width=1.5pt, bend left=25] 	node {$(1,\beta)$} (w4)            
(s) edge[bend left=5]  						node {$(y_m(x),0)$} (v4)
(s) edge               						node {$(2y_m(x),0)$} (v3)
(s) edge[bend right=5] 						node {$(3y_m(x),0)$} (v2)
(s) edge[bend right=25]						node {$(4y_m(x),0)$} (v1)
(v4) edge[line width=1.5pt]             	node {$(1,0)$} (w4)
(v4) edge[line width=1.5pt]         		node {$(1,\beta)$} (w3)
(v3) edge[line width=1.5pt]               node {$(1,0)$} (w3)
(v3) edge[line width=1.5pt]               node {$(1,\beta)$} (w2)
(v2) edge[line width=1.5pt]               node {$(1,0)$} (w2)
(v2) edge[line width=1.5pt]               node {$(1,\beta)$} (w1)
(v1) edge[line width=1.5pt]               node {$(1,0)$} (w1)
(w4) edge[bend left=25]node {$(4y_m(x),0)$} (t)            
(w3) edge[bend left=5] node {$(3y_m(x),0)$} (t)
(w2) edge              node {$(2y_m(x),0)$} (t)
(w1) edge[bend right=5]node {$(y_m(x),0)$} (t)
(v1) edge[line width=1.5pt, bend right=25]node {$(1,\beta)$} (t);
        
\end{scope}
\end{tikzpicture}}
\caption{The fifth Braess graph with $(l_a^5,\delta_a^5)$ on the arcs as defined in Example \ref{exmp:tight_pra}. The bold arcs indicate the alternating path $\pi_1$.} 
\label{fig:braess_5}
\end{figure}
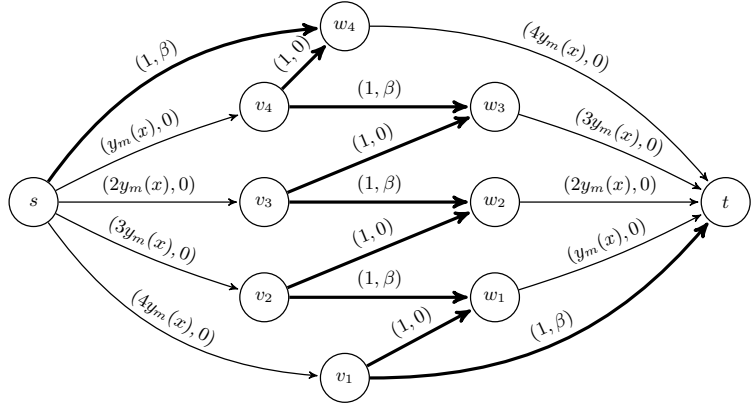


Our instance is based on the generalized Braess graph \cite{Roughgarden2006}. 
The \emph{$m$-th Braess graph} $G^m = (V^m, A^m)$ is defined by $V^m = \{s,v_1,\dots,v_{m-1},w_1,\dots,w_{m-1},t\}$ and $A^m$ as the union of three sets: $E^m_1 = \{(s,v_j),(v_j,w_j),(w_j,t): 1 \leq j \leq m-1\}$, $E^m_2 = \{(v_j,w_{j-1}) : 2 \leq j \leq m\}$ and $E^m_3 = \{(v_1,t)\cup\{(s,w_{m-1}\}\}$. 



\begin{example}
By Lemma \ref{lem:alphabeta_induc} (see appendix), we can assume without loss of generality that $\alpha = 0$. Let $\beta \geq 0$ be a fixed constant and let $n = 2m \geq 4\in \N$.\footnote{Note that the value $\lceil(n-1)/2\rceil$ is the same for $n  \in \{2m,2m+1\}$ with $m \in \N$. The example shows tightness for $n = 2m$. The tightness for $n = 2m+1$ then follows trivially by adding a dummy node.}
Let $G^{m}$ be the $m$-th Braess graph. Furthermore, let $y_m : \R_{\geq 0} \rightarrow \R_{\geq 0}$ be a non-decreasing, continuous function\footnote{%
For example $y_m(g) = m(m-1)\beta\max\{0,\left(g - \frac{1}{m}\right)\}$. That is, we define $y_m$ to be zero for $0 \leq g \leq 1/m$ and we let it increase with constant rate to $\beta$ in $1/(m-1)$.} with $y_m(1/m) = 0$ and $y_m(1/(m-1)) = \beta$.
We define 
$$
l^{m}_a(g) = \left\{ \begin{array}{ll}
(m - j)\cdot y_m(g) & \text{ for } a \in \{(s,v_j) : 1 \leq j \leq m-1\}\\
j \cdot y_m(g) & \text{ for } a \in \{(w_j,t) : 1 \leq j \leq m-1\}\\
1 & \text{ otherwise.}
\end{array}\right.
$$
Furthermore, we define $\delta^{m}_a(g) = \beta$ for $a \in E^m_2$, and $\delta^{m}_a(g) = 0$ otherwise.
Note that $0 \leq \delta_a^{m}(g) \leq \beta l_a^{m}(g)$ for all $a \in A$ and $g \geq 0$ (see Figure \ref{fig:braess_5} in the appendix).

A Nash flow  $z = f^0$ is given by routing $1/m$ units of flow over the paths $(s,w_{m-1},t),(s,v_1,t)$ and the paths in $\{(s,v_j,w_{j-1},t) : 2 \leq j \leq m-1\}$. Note that all these paths have latency one, and the path $(s,v_j,w_j,t)$, for some $1 \leq m \leq j$, also has latency one. We conclude that $C(z) = 1$.

A Nash flow $x = f^\delta$, with $\delta$ as defined above, is given by routing $1/(m-1)$ units of flow over the paths in $\{(s,v_j,w_j,t) : 1 \leq j \leq m - 1\}$. Each such path $P$ then has a latency of $l_P(x) = 1 + \beta m$. It follows that $C(x) = 1 + \beta m$. Note that the deviated latency of path $P$ is $q_P(x) = 1 + \beta m$ because all deviations along this path are zero. Each path $P' = (s,v_j,w_{j-1},t)$, for $2 \leq j \leq m-1$, has a \deviated{} latency of $q_{P'}(x) = 1 + \beta + (m-1) y_m(1/(m-1)) = 1 + \beta + (m - 1)\beta = 1 + \beta m$. 
The same argument holds for the paths $(s,w_{m-1},t)$ and $(s,v_1,t)$. We conclude that $x$ is $\delta$-inducible. It follows that $C(x)/C(z) = 1 + \beta m = 1 + \beta n /2$. \qed
\label{exmp:tight_pra}
\end{example} 

By adapting the construction above, we obtain the following result. 


\begin{theorem}\label{thm:lower_multi_odd}
There exist common source two-commodity instances $\mathcal{I}$ such that 
$$
\dr(\mathcal{I}, (\alpha, \beta)) 
\ge 
\left\{ \begin{array}{ll}
1 + (\beta - \alpha)/(1+\alpha)\cdot (n-1)/2 \cdot r & \text{ for } n = 2m+1 \in \N_{\geq 5} \\
1 + (\beta - \alpha)/(1+\alpha)\cdot [(n/2-1)r + 1] & \text{ for } n = 2m \in \N_{\geq 4}.
\end{array}\right.
$$
\end{theorem}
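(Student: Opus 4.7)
The plan is to construct common-source two-commodity instances $\mathcal{I}$ with $n$ nodes that achieve the claimed lower bound. Since Lemma~\ref{lem:alphabeta_induc} reduces $(\alpha,\beta)$-deviations to $(0,(\beta-\alpha)/(1+\alpha))$-deviations, it suffices to treat $\alpha=0$ and to recover the factor $(\beta-\alpha)/(1+\alpha)$ at the end.

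For odd $n = 2m+1$, extend the $m$-th generalized Braess graph $G^m$ of Example~\ref{exmp:tight_pra} (which has $2m$ nodes) by adjoining a new node $t_2$ (the sink for commodity~2) together with an arc connecting it to the existing Braess structure. The wiring of $t_2$ is chosen so that commodity~2's alternating $s,t_2$-path $\pi_2$ inherits the $m$ disjoint $Z$-segments of commodity~1's alternating path $\pi_1$, giving $\sum_i r_i \eta_i = m(1+r_2) = mr$ and matching the upper bound of Theorem~\ref{lem:main_bound}(ii) up to the factor $\beta/(1+\alpha)$. For even $n = 2m$, use $G^m$ directly with $t_2$ chosen as an interior node of the Braess structure (e.g., $w_{m-1}$); commodity~2's alternating path then has one fewer $Z$-segment than commodity~1's, giving $\eta_2 = m-1$ and $\sum_i r_i \eta_i = m + (m-1)r_2 = (m-1)r+1$, which accounts for the small gap in the even case.

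The latency functions are scaled versions of those in Example~\ref{exmp:tight_pra}, with $y_m$ chosen so that $y_m(r/m) = 0$ and $y_m(r/(m-1)) = \beta$, and with deviations $\delta_a = \beta$ placed on the ``cross'' arcs exactly as in the single-commodity construction. A Nash flow $z$ under $l$ is then obtained by routing $r/m$ units of combined flow along each of the $m$ Braess paths (with commodities~1 and~2 contributing $1/m$ and $r_2/m$ respectively), while the deviated Nash flow $x$ under $l+\delta$ concentrates $r/(m-1)$ units on each of the $m-1$ shortcut paths $(s,v_j,w_j,t)$, with an appropriate extension for commodity~2 to reach $t_2$. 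A direct computation of $C(z)$ and $C(x)$ (following the template of Example~\ref{exmp:tight_pra} but now summed over both commodities) then produces the claimed ratio.

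The main technical hurdle is to wire $t_2$ into the graph so that the ratio $C(x)/C(z)$ genuinely scales linearly with $r$ rather than collapsing to the single-commodity value $1+\beta m$: a naive overlay of commodity~2 on top of the Braess structure causes $C(z)$ to grow proportionally with $r$, cancelling the gain. Avoiding this requires placing $t_2$ so that the $Z$-arcs on $\pi_2$ carry latencies comparable to $C(z)$ (rather than $C(z)/r$), thereby attaining the upper-bound inequality $\sum_{Z\cap\pi_i} l_a(z_a)\le\eta_i C(z)$ used inside the proof of Theorem~\ref{lem:main_bound} with equality. Once the Nash conditions are verified for both commodities in both $z$ and $x$ while respecting $0\le\delta_a\le\beta l_a$, the final reduction from $\alpha=0$ to general $(\alpha,\beta)$ via Lemma~\ref{lem:alphabeta_induc} is routine.
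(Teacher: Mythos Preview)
Your reduction to $\alpha=0$ and the high-level idea of extending the Braess construction with a second commodity are both correct. But there is a genuine gap: you identify the main hurdle---preventing $C(z)$ from scaling with $r$---yet your proposed routing (both commodities through the Braess paths with combined flow $r/m$, and $y_m(r/m)=0$, $y_m(r/(m-1))=\beta$) does not avoid it. Under that routing each commodity-1 path in $z$ has latency~$1$, and commodity~2 traverses the same latency-1 arcs, so $C(z)\ge 1+(r-1)\cdot c$ for some positive $c$, and the ratio collapses back toward the single-commodity value. Your vague fix (``place $t_2$ so the $Z$-arcs on $\pi_2$ carry latencies comparable to $C(z)$'') still presupposes that commodity~2 threads through the Braess structure, and does not explain how its $z$-cost can be made to vanish.

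The paper's construction uses a different mechanism. Commodity~2 is given a \emph{single arc} as its only path: for odd $n$ one adds a fresh node $t_2$ together with arcs $(s,t_2)$ and $(t_2,t_1)$; for even $n$ one takes $t_2=v_1$ (not $w_{m-1}$---your choice would give $\eta_2=1$, not $m-1$). The latency on that single arc is a step-like function equal to~$0$ at the $z$-flow value and equal to $\beta m$ (resp.\ $\beta(m-1)$) once commodity~1 pushes an extra $\epsilon$ of flow onto it in the deviated equilibrium (via the new path $(s,t_2,t_1)$ in the odd case, or simply because commodity~1's flow on $(s,v_1)$ rises from $1/m$ to $1/(m-1)$ in the even case). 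Thus commodity~2 contributes~$0$ to $C(z)$ but $(r-1)\beta m$ (resp.\ $(r-1)\beta(m-1)$) to $C(x)$, which is exactly what produces the factor~$r$. The alternating-path bookkeeping you emphasize is not what drives the lower bound; the crux is engineering a one-arc commodity whose latency is triggered by commodity~1's flow shift.
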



For two-commodity instances and $n$ even, we can actually improve the upper bound in Theorem \ref{thm:general_multi_common_source} to the lower bound stated in Theorem \ref{thm:lower_multi_odd} (see Remark~\ref{rem:improve-2-com} in the appendix). 

For general multi-commodity instances the situation is much worse. In particular, we establish an exponential lower bound on the Deviation Ratio. The instance used in proof of Theorem~\ref{thm:multi_general} is similar to the one used by Lin et al. \cite{Lin2011}.



\begin{theorem}
For every $p = 2q + 1 \in \N$, there exists a two-commodity instance $\mathcal{I}$ whose size is polynomially bounded in $p$ such that
$
\dr(\mathcal{I}, (\alpha, \beta))
\geq 1 + \beta F_{p+1} \approx 1 + 0.45 \beta \cdot \phi^{p+1},
$
where $F_p$ is the $p$-th Fibonacci number 
and $\phi \approx 1.618$ is the golden ratio.
\label{thm:multi_general}
\end{theorem}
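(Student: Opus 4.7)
The plan is to adapt the Fibonacci-style recursive construction of Lin et al.\ \cite{Lin2011} (originally used for the network design lower bound) to the deviation framework of this paper. By Lemma \ref{lem:alphabeta_induc} (referenced in the appendix) I may assume $\alpha = 0$, so it suffices to exhibit a two-commodity instance $\mathcal{I}_p$ and a $(0,\beta)$-deviation $\delta$ such that the resulting equilibrium flow is a factor $1 + \beta F_{p+1}$ more expensive than the undeviated Nash flow. The instance $\mathcal{I}_p$ is built recursively: $\mathcal{I}_p$ is obtained by nesting copies of $\mathcal{I}_{p-1}$ and $\mathcal{I}_{p-2}$ inside a "shell" that cross-connects the two commodities via a pair of switching arcs whose upper deviation threshold equals $\beta$ times their constant latency. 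Base cases $\mathcal{I}_1$ and $\mathcal{I}_3$ are single- and few-arc gadgets that isolate a cost-$1$ normalization for the unaltered Nash flow. Since each level adds only a constant number of nodes, the total size of $\mathcal{I}_p$ is linear in $p$, hence polynomial.

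I would then carry out three main steps. First, exhibit the $0$-inducible Nash flow $z$: at every recursive layer, both commodities split along "direct" paths whose constant-plus-congested latencies balance out, yielding $C(z) = 1$ after normalization. Second, specify $\delta \in \Delta((0,\beta))$ supported on the switching arcs of each shell with $\delta_a = \beta l_a$ there and $\delta_a = 0$ elsewhere. Third, exhibit the $\delta$-inducible "bad" flow $x$: the switching deviations push both commodities onto the long paths that recursively enter the nested sub-instances $\mathcal{I}_{p-1}$ and $\mathcal{I}_{p-2}$. The total latency accumulated along such a path satisfies $\ell_p = \ell_{p-1} + \ell_{p-2}$ with the appropriate base values, producing the Fibonacci recurrence $\ell_p = F_{p+1}$, so that $C(x) = 1 + \beta F_{p+1}$.

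The main obstacle is verifying that $x$ really is a Nash flow under $l + \delta$ at every layer of the recursion. Concretely, for each commodity and each flow-carrying path one must check (i) that the deviated latency along the long path equals the deviated latency along every shorter alternative routing that bypasses some shell, and (ii) that the two commodities truly require distinct sources, i.e.\ that the cross-routing exploited here is not available in the common-source setting of Theorem \ref{thm:general_multi_common_source}. Condition (i) is enforced by tuning the constant latencies of the shell arcs so that the extra $\beta$ cost on a switching arc is exactly compensated by the savings from avoiding an alternate inner path; this is the routine but fiddly part of the argument and proceeds by induction on $p$. Condition (ii) follows from inspection of the graph, since the asymmetric wiring of the two source-sink pairs through the shells cannot be realized when $s_1 = s_2$.

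Finally, the asymptotic estimate $F_{p+1} = (\phi^{p+1} - \bar{\phi}^{p+1})/\sqrt{5}$ with $1/\sqrt{5} \approx 0.447$ and $|\bar{\phi}| < 1$ yields $1 + \beta F_{p+1} \approx 1 + 0.45 \beta \phi^{p+1}$, matching the claimed bound. The heart of the difficulty is purely combinatorial: designing the recursive shell so that the induction simultaneously preserves the Nash conditions and the Fibonacci growth of the bad path length; once this is done, the numerical estimates are immediate.
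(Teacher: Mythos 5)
Your plan correctly identifies that the instance should come from the Lin--Roughgarden--Tardos--Walkover two-commodity lower-bound construction and that the final bound follows from the Fibonacci asymptotics, and this is indeed what the paper does: it reproduces Lin et al.'s graph $G^p$ (a ladder on nodes $s_1,s_2,t_1,t_2,e,w_0,\dots,w_p,v_1,\dots,v_p$ with the "shortcut'' paths $T_i$), sets $C(z)=1$ for the undeviated Nash flow on the two straight paths $P_1,P_2$, and obtains $C(x)=1+\beta F_{p+1}$, citing Lin et al.'s Lemmas~5.4--5.6 for the Nash verification. So at a high level you are on the same track.

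However, there is a concrete gap in how you propose to realize the construction, in two places. First, you describe $\mathcal{I}_p$ as obtained by nesting copies of $\mathcal{I}_{p-1}$ \emph{and} $\mathcal{I}_{p-2}$ inside a shell while simultaneously claiming each level "adds only a constant number of nodes.'' Taken literally, inserting two disjoint sub-instances gives the size recurrence $S(p)=S(p-1)+S(p-2)+O(1)$, which is exponential; linear size requires that the two ``sub-instances'' share almost all of their structure, which is exactly what the unrolled Lin et al.\ graph does, but your sketch never says so, and the polynomial-size claim in the theorem depends on it. Second, the mechanism driving the Fibonacci cost is different from what the paper actually uses. You place $(0,\beta)$-deviations on ``a pair of switching arcs at each shell'' and let the deviations accumulate, so that a recurrence $\ell_p=\ell_{p-1}+\ell_{p-2}$ on \emph{deviation budgets} produces $\beta F_{p+1}$. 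In the paper the Fibonacci values are hardwired into the latency functions themselves (via $l^p_a=\beta g^i_\delta$ with $g^i_\delta$ stepping from $0$ to $F_i$), and a \emph{single} deviated arc $\delta_{(s_1,e)}=\beta$ tips the equilibrium onto the long shortcut paths $T_i$; the cost $1+\beta F_{p+1}$ is then the true (undeviated) latency of the resulting flow, with $1+\beta F_{p-1}$ coming from commodity $1$ and $\beta F_p$ from commodity $2$. It is not obvious that your multi-arc accumulation scheme keeps $\delta_a\le\beta l_a$ on every shell while still balancing the deviated path latencies across layers, and this is precisely the ``fiddly induction'' you defer. Since that induction is where all the work lives and your version of it is not the one the Lin et al.\ lemmas already prove, you cannot simply cite them; you would have to redo the verification from scratch.

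In short: same source, same target bound, but your recursive phrasing has an unresolved size issue, and you would avoid the unverified multi-arc accumulation entirely by instead using the paper's explicit graph with Fibonacci-valued latency functions and a single deviated arc, which lets you appeal directly to Lin et al.'s existing lemmas.
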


\section{Applications}


By using our bounds on the \ratio{}, we obtain the following results. 

\paragraph{\textbf{Price of Risk Aversion}.} 


\begin{theorem}
The Price of Risk Aversion for a common source multi-commodity instance $\mathcal{I}$ with non-negative and non-decreasing latency functions, variance-to-mean-ratio $\kappa > 0$ and risk-aversion parameter $\gamma \ge -1/\kappa$ is at most 
$$
\text{PRA}(\mathcal{I}, \gamma, \kappa) \le 
\begin{cases} 
1 - \gamma \kappa /(1 + \gamma \kappa ) \lceil (n-1)/2 \rceil r & \text{for $-1/\kappa < \gamma \leq 0$} \\
1 + \gamma \kappa \lceil (n-1)/2 \rceil r & \text{for $\gamma \ge 0$}.
\end{cases}
$$
Moreover, these bounds are tight in all its parameters if $n = 2m+1$ and almost tight if $n = 2m$ (see appendix for precise statements). 
In particular, for single-commodity instances we obtain tightness for all $n \in \N$.
\label{thm:pra_model_results}
\end{theorem}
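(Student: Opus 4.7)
The plan is to reduce the Price of Risk Aversion to the Deviation Ratio and invoke Theorem \ref{lem:main_bound}(ii) for the upper bound, and then transfer the lower-bound constructions of Section 4 into the PRA framework for tightness.

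For the upper bound, let $x$ be a risk-averse Nash flow (with respect to $l + \gamma v$) and let $z$ be a risk-neutral Nash flow. Following the reduction already spelled out in Section \ref{sec:pre}, for each arc $a$ I write $v_a(x_a) = \lambda_a l_a(x_a)$ with $0 \le \lambda_a \le \kappa$, using the variance-to-mean-ratio bound. Defining $\delta_a(y) := \gamma \lambda_a l_a(y)$ for all $y \ge 0$ gives a deviation that still induces $x$. If $\gamma \ge 0$, then $\gamma\lambda_a \in [0,\gamma\kappa]$, so $\delta \in \Delta(0, \gamma\kappa)$, and Theorem \ref{lem:main_bound}(ii) with $\alpha = 0$, $\beta = \gamma\kappa$ yields $C(x)/C(z) \le 1 + \gamma\kappa \lceil (n-1)/2 \rceil r$. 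If $-1/\kappa < \gamma \le 0$, then $\gamma\lambda_a \in [\gamma\kappa, 0]$, so $\delta \in \Delta(\gamma\kappa, 0)$, and Theorem \ref{lem:main_bound}(ii) with $\alpha = \gamma\kappa$, $\beta = 0$ yields $C(x)/C(z) \le 1 - \gamma\kappa/(1+\gamma\kappa)\lceil (n-1)/2 \rceil r$. The condition $\gamma > -1/\kappa$ is exactly Assumption \ref{ass:nonnegative} applied to this threshold.

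For tightness, I would transfer the instances in Example \ref{exmp:tight_pra} (single-commodity) and Theorem \ref{thm:lower_multi_odd} (two-commodity, common source) to the PRA setting. When $\gamma > 0$, the deviations $\delta_a$ in those constructions satisfy $0 \le \delta_a(y) \le \gamma\kappa \cdot l_a(y)$; setting $v_a(y) := \delta_a(y)/\gamma$ produces nonnegative variance functions with $v_a(x_a) \le \kappa l_a(x_a)$, and the same pair of flows $x, z$ then witnesses the claimed PRA lower bound. For $\gamma < 0$, I use the mirror construction with $\alpha = \gamma\kappa$ and $\beta = 0$ obtained from the $(\alpha,\beta)$-equivalence (Lemma \ref{lem:alphabeta_induc}); reversing the roles of forward/reverse arcs in the Braess-like instance yields the required nonpositive $\delta_a$, and then $v_a = \delta_a/\gamma \ge 0$ again satisfies the variance-to-mean bound. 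Tightness for all $n \in \N$ in the single-commodity case comes directly from Example \ref{exmp:tight_pra}; the small gap for even $n$ in the two-commodity case is inherited from Theorem \ref{thm:lower_multi_odd} and accounts for the ``almost tight'' qualifier.

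The main obstacle I anticipate is the risk-taking regime $\gamma < 0$, which is outside the scope considered by Nikolova and Stier-Moses \cite{Nikolova2015}. One must reinterpret nonpositive deviations as $\gamma v_a$ with $v_a \ge 0$ while preserving the variance-to-mean bound at the risk-averse flow, and verify that the general formula $(\beta - \alpha)/(1+\alpha)$ in Theorem \ref{lem:main_bound}(ii), specialized to $\alpha = \gamma\kappa$, $\beta = 0$, reproduces the factor $-\gamma\kappa/(1+\gamma\kappa)$ stated in the theorem. Once this algebraic identification is in place, both the upper bound and the lower bound follow uniformly from the $(\alpha,\beta)$-deviation machinery developed in Sections 3 and 4.
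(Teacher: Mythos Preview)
Your proposal is correct and follows essentially the same route as the paper: identify the risk-averse deviations as $(\alpha,\beta)$-deviations with $(\alpha,\beta)=(0,\gamma\kappa)$ when $\gamma\ge 0$ and $(\alpha,\beta)=(\gamma\kappa,0)$ when $-1/\kappa<\gamma\le 0$, invoke Theorem~\ref{lem:main_bound}(ii) for the upper bound, and import the tightness constructions from Example~\ref{exmp:tight_pra} and Theorem~\ref{thm:lower_multi_odd}. One small imprecision: the phrase ``reversing the roles of forward/reverse arcs'' is misleading---no graph modification is needed; the conversion for $\gamma<0$ is purely the algebraic rescaling of Lemma~\ref{lem:alphabeta_induc}, which turns the $(0,\beta')$-deviation in the Braess construction into a $(\gamma\kappa,0)$-deviation on the \emph{same} instance, after which $v_a:=\delta_a/\gamma\ge 0$ satisfies the variance-to-mean bound as you state.
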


\paragraph{\textbf{Stability of Nash flows under small perturbations.}}  


\begin{theorem}
\label{thm:stability}
Let $\mathcal{I}$ be a common source multi-commodity instance with non-negative and non-decreasing latency functions $(l_a)_{a \in A}$. Let $f$ be a Nash flow with respect to $(l_a)_{a \in A}$ and let $\tilde{f}$ be a Nash flow with respect to slightly perturbed latency functions $(\tilde{l}_a)_{a \in A}$ satisfying 
$
\sup_{a \in A,\; x \geq 0} | (l_a(x) - \tilde{l}_a(x))/l_a(x) | \leq \epsilon
$
for some small $\epsilon > 0$.
Then the relative error in social cost is $(C(\tilde{f}) - C(f))/C(f) \le 
2\epsilon/(1 - \epsilon) \lceil(n-1)/2 \rceil \cdot r 
= \mathcal{O}(\epsilon rn)$.
\end{theorem}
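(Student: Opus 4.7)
\textbf{Proof proposal for Theorem \ref{thm:stability}.} The plan is to reinterpret the perturbation $\tilde l_a - l_a$ as a feasible $(\alpha,\beta)$-deviation of the original instance and then invoke the bound on the Deviation Ratio from Theorem~\ref{lem:main_bound}(ii). Concretely, for each arc $a \in A$ define $\delta_a(x) = \tilde l_a(x) - l_a(x)$. The assumption $\sup_{a,x}|\tilde l_a(x) - l_a(x)|/l_a(x) \le \epsilon$ can be rewritten as $-\epsilon\, l_a(x) \le \delta_a(x) \le \epsilon\, l_a(x)$ for all $a \in A$ and $x \ge 0$, so $\delta$ is an $(\alpha,\beta)$-deviation with $\alpha = -\epsilon$ and $\beta = \epsilon$. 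Because the bound is vacuous unless $\epsilon < 1$, we have $-1 < \alpha \le 0 \le \beta$ as required, and Assumption~\ref{ass:nonnegative} holds since $l_a(x) + \theta^{\min}_a(x) = (1-\epsilon)\,l_a(x) \ge 0$.

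Next I would observe that by construction $\tilde l = l + \delta$, so the perturbed Nash flow $\tilde f$ is exactly a Nash flow with respect to $l + \delta$, i.e.\ $\tilde f = f^{\delta}$ is $\delta$-inducible. Similarly the unperturbed Nash flow $f$ is $0$-inducible: $f = f^{0}$. Plugging $x = \tilde f$ and $z = f$ into Theorem~\ref{lem:main_bound}(ii) with the above values of $\alpha,\beta$ gives
\[
\frac{C(\tilde f)}{C(f)} \;\le\; 1 + \frac{\beta - \alpha}{1 + \alpha}\cdot \left\lceil \frac{n-1}{2}\right\rceil\cdot r \;=\; 1 + \frac{2\epsilon}{1-\epsilon}\cdot \left\lceil \frac{n-1}{2}\right\rceil\cdot r.
\]
Subtracting $1$ from both sides yields the claimed inequality, and the $\mathcal{O}(\epsilon r n)$ asymptotics is immediate from $\lceil (n-1)/2 \rceil = \Theta(n)$ and $2\epsilon/(1-\epsilon) = \Theta(\epsilon)$ for small $\epsilon$.

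There is no real obstacle here beyond verifying the modeling step: one only has to check that the hypothesis of a uniform multiplicative perturbation translates correctly into a symmetric $(\alpha,\beta)$-deviation band around $l$, and that both Nash flows become (respectively) $\delta$-inducible and $0$-inducible in the sense of Section~\ref{sec:pre}. Once these identifications are in place the result is a direct corollary of Theorem~\ref{lem:main_bound}(ii), so the proof reduces to a few lines.
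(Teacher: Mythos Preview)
Your proposal is correct and follows essentially the same approach as the paper: recognize the perturbation $\tilde l - l$ as an $(-\epsilon,\epsilon)$-deviation, identify $\tilde f$ and $f$ as $\delta$-inducible and $0$-inducible respectively, and apply Theorem~\ref{thm:general_multi_common_source}(ii). Your write-up is, if anything, a bit more careful than the paper's (you explicitly verify Assumption~\ref{ass:nonnegative} and the $\epsilon<1$ condition), but the argument is the same.
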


\section{Smoothness based approaches}

We derive tight smoothness bounds on the Biased Price of Anarchy for $(0, \beta)$-deviations. Our bounds improve upon the bounds of $(1 + \beta)/(1 - \mu)$ recently obtained by Meir and Parkes \cite{Meir2015} and Lineas et al. \cite{Lianeas2015} for $(1, \mu)$-smooth latency functions. As a direct consequence, we also obtain better smoothness bounds on the Price of Risk Aversion.
Our approach is a generalization of the framework of Correa, Schulz and Stier-Moses \cite{Correa2008} (which we obtain for $\beta = 0$). 

Let $\mathcal{L}$ be a given set of latency functions and $\beta \geq 0$ fixed. For $l \in \mathcal{L}$, define
$$
\hat{\mu}(l,\beta) = \sup_{x,z \geq 0} \left\{ \frac{z[l(x) - (1+\beta)l(z)]}{xl(x)} \right\}
\quad\text{and}\quad
\hat{\mu}(\mathcal{L},\beta) = \sup_{l \in \mathcal{L}} \hat{\mu}(\mathcal{L},\beta).
$$ 

\begin{theorem}\label{thm:smooth-BPA}
Let $\mathcal{L}$ be a set of non-negative, non-decreasing and continuous functions. Let $\mathcal{I}$ be a general multi-commodity instance with $(l_a)_{a \in A} \in \mathcal{L}^A$. Let $x$ be $\delta$-inducible for some $(0,\beta)$-\deviation{} $\delta$ and let $z$ be an arbitrary feasible flow.
Then $C(x)/C(z) \leq (1 + \beta)/(1 - \hat{\mu}(\mathcal{L},\beta))$ if $\hat{\mu}(\mathcal{L},\beta) < 1$. Moreover, this bound is tight if $\mathcal{L}$ contains all constant functions and is closed under scalar multiplication, i.e., for every $l \in \mathcal{L}$ and $\gamma \geq 0$, $\gamma l \in \mathcal{L}$.
\label{thm:smoothness}
\end{theorem}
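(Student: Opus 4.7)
The plan is to extend the smoothness framework of Correa, Schulz and Stier-Moses (which I recover for $\beta = 0$) to the $(0,\beta)$-deviated setting. First, I would invoke the variational inequality implied by the Wardrop condition on the $\delta$-inducible flow $x$: for any feasible $z$,
\[
\sum_a x_a q_a(x_a) \leq \sum_a z_a q_a(x_a),
\]
and then use $q_a \geq l_a \geq 0$ to obtain the starting inequality $C(x) \leq \sum_a z_a q_a(x_a)$.

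Next, I would decompose $\sum_a z_a q_a(x_a) = \sum_a z_a l_a(x_a) + \sum_a z_a \delta_a(x_a)$ and apply the smoothness definition arc-wise to the first sum: by the definition of $\hat{\mu}(l_a,\beta)$ applied to $u = x_a$, $v = z_a$,
\[
z_a l_a(x_a) \leq \hat{\mu}(l_a,\beta)\, x_a l_a(x_a) + (1+\beta)\, z_a l_a(z_a),
\]
which sums to $\sum_a z_a l_a(x_a) \leq \hat{\mu}(\mathcal{L},\beta)\, C(x) + (1+\beta)\, C(z)$. For the deviation sum, I would use the pointwise bound $\delta_a(x_a) \leq \beta l_a(x_a)$. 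The key bookkeeping step is to combine these two contributions so that the deviation factor and the smoothness factor share a single $(1+\beta)$ in the numerator instead of compounding into $(1+\beta)^2$; exploiting the specific form of $\hat{\mu}(l,\beta)$, which already carries the $(1+\beta)$ weight on the $l(z)$ side rather than on $l(x)$, is precisely what enables this tighter accounting. Rearranging then yields $C(x)/C(z) \leq (1+\beta)/(1 - \hat{\mu}(\mathcal{L},\beta))$.

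For tightness under the stated closure assumptions on $\mathcal{L}$, I would construct a Pigou-style instance: fix $l \in \mathcal{L}$ and a pair $(u^*, v^*)$ (approximately) attaining the supremum in $\hat{\mu}(l,\beta)$, and consider a single-commodity network with two parallel arcs. The first arc carries latency $l$ and deviation $\delta_1 = \beta l$, while the second carries an appropriately chosen constant latency $c \in \mathcal{L}$ (which lies in $\mathcal{L}$ by assumption) and zero deviation. Tuning $c$ so that the deviated Wardrop condition on the variable arc holds at $u^*$ and tuning the demand so that the comparison flow $z$ places $v^*$ units on the variable arc makes $C(x)/C(z)$ recover the supremum in $\hat{\mu}(\mathcal{L},\beta)$, which is precisely the claimed ratio.

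The main obstacle I expect is in the upper bound: avoiding the extra $(1+\beta)$ factor. A naive use of $q_a(x_a) \leq (1+\beta) l_a(x_a)$ followed by arc-wise smoothness gives only $(1+\beta)^2/(1 - (1+\beta)\hat{\mu})$, which is looser than the claim. Circumventing this requires handling the deviation term $\sum_a z_a \delta_a(x_a)$ jointly with the smoothness slack (so that the $(1+\beta)$ already used to absorb the $l(z_a)$ term in the smoothness inequality plays the role of both factors), rather than bounding $\sum_a z_a \delta_a(x_a)$ independently and re-applying smoothness to it.
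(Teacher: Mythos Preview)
Your upper-bound argument has a genuine gap at exactly the ``key bookkeeping step'' you flag but do not carry out. Starting from $C(x)\le \sum_a z_a q_a(x_a)$ and decomposing into $\sum_a z_a l_a(x_a)+\sum_a z_a\delta_a(x_a)$, there is no way to merge the two contributions into a single $(1+\beta)$: applying the arc-wise inequality $z_a l_a(x_a)\le \hat\mu(\mathcal L,\beta)\,x_a l_a(x_a)+(1+\beta)z_a l_a(z_a)$ and then $\delta_a(x_a)\le\beta l_a(x_a)$ always yields at best
\[
C(x)\le (1+\beta)\sum_a z_a l_a(x_a)\le (1+\beta)\hat\mu\,C(x)+(1+\beta)^2 C(z),
\]
i.e.\ the loose bound you wanted to avoid. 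The $(1+\beta)$ built into $\hat\mu(l,\beta)$ sits on $l(z_a)$, whereas the deviation bound involves $l(x_a)$, so the two factors do not cancel. The step that actually makes the tight bound go through---and which is missing from your plan---is a case split on whether $x_a>z_a$ or $z_a\ge x_a$. For arcs with $x_a>z_a$ one keeps the sharper inequality $C(x)\le\sum_a z_a l_a(x_a)+\sum_a(z_a-x_a)\delta_a(x_a)$ from the variational inequality (rather than your weaker $C(x)\le\sum_a z_a q_a(x_a)$) and drops the nonpositive term $(z_a-x_a)\delta_a(x_a)$ entirely; smoothness is then applied \emph{only} on these arcs. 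For arcs with $z_a\ge x_a$ one uses $q_a(x_a)\le(1+\beta)l_a(x_a)\le(1+\beta)l_a(z_a)$ by monotonicity of $l_a$, bypassing smoothness altogether. This split is what keeps the numerator at $(1+\beta)$ and the denominator at $1-\hat\mu(\mathcal L,\beta)$.

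Your tightness sketch also has the deviation on the wrong arc. If the variable-latency arc carries $\delta=\beta l$ and the constant arc carries zero deviation, then with $c=(1+\beta)l(u^*)$ and demand $u^*$ the deviated Nash cost is $u^*l(u^*)$ while the comparison cost is $v^*l(v^*)+(u^*-v^*)(1+\beta)l(u^*)$; a direct computation shows this ratio does \emph{not} equal $(1+\beta)/(1-\hat\mu)$ in general. The construction that attains the bound places the $(0,\beta)$-deviation on the \emph{constant} arc and leaves the variable arc (a scalar multiple of the extremal $c\in\mathcal L$) undeviated, so that the deviated Nash flow loads the variable arc at level $r$ while the optimum shifts mass to the cheap constant arc.
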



For example, for affine latencies $\hat{\mu}(\mathcal{L},\beta) = 1/(4(1+\beta))$ (see Proposition~\ref{prop:mu-affine} in the appendix) and we obtain a bound of $(1+\beta)^2/(\frac34 + \beta)$ on the Biased Price of Anarchy, which is strictly better than the bound $4(1 + \beta)/3$ obtained in \cite{Lianeas2015,Meir2015}. 

We also provide an upper bound on the absolute gap between the Biased Price of Anarchy and the \ratio{} (see Corollary~\ref{cor:smoothness_approx} in the appendix). 

\bigskip
\noindent
As a final result we derive smoothness bounds for general path deviations, which are not necessarily decomposable into arc deviations. The main motivation for investigating such deviations is that we can apply such bounds to the \textit{mean-std} objective of the Price of Risk Aversion model by Nikolova and Stier-Moses \cite{Nikolova2015} (see Section \ref{sec:pre}).
We need to adjust some definitions of Section \ref{sec:pre}. We are given non-positive and non-negative, respectively, continuous threshold functions $\theta^{\min} = \left(\theta^{\min}_P\right)_{P \in \mathcal{P}}$ and $\theta^{\max} = \left(\theta^{\max}_P\right)_{P \in \mathcal{P}}$ and consider \deviations{} 
$(\delta_P)_{P \in \mathcal{P}}$ 
from 
$$
\Delta(\theta) = \{(\delta_P)_{P \in \mathcal{P}} \ : \text{$\theta^{\min}_P(f) \leq \delta_P(f) \leq \theta^{\max} _P(f)$ for all feasible flows $f$}\}.
$$ 
Now $(\alpha,\beta)$-\deviations{} are \deviations{} $\delta \in \Delta(\theta)$ with $\theta^{\min}_P = \alpha l_P$ and $\theta^{\max}_P = \beta l_P$ for all $P \in \mathcal{P}$. 

Let $f$ be $\delta$-inducible with respect to some $(\alpha, \beta)$-deviation $\delta$. The Nash flow conditions (\ref{eq:nash}) then imply that $\forall i \in [k], \forall P \in \mathcal{P}_i, f_P > 0$:
$$
(1 + \alpha)l_P(f) \leq  l_P(f) + \delta_P(f) \leq l_{P'}(f) + \delta_{P'}(f) \leq (1+\beta)l_{P'}(f)  \ \ \forall P' \in \mathcal{P}_i.
$$
In particular, the above inequality reveals that $f$ is an \textit{$(1+\beta)/(1+\alpha)$-approximate Nash flow} (see \cite{Christodoulou2011}). As a consequence, the bounds by Christodoulou et al. \cite{Christodoulou2011}, on the Price of Anarchy for approximate Nash flows in non-atomic routing games with polynomial latency functions, yield upper bounds on the BPoA and DR of instances with polynomial latency functions.




\begin{theorem}
Let $\mathcal{I}$ be a general multi-commodity instance with $(l_a)_{a \in A} \in \mathcal{L}^A$. Let $x$ be $\delta$-inducible with respect to some $(0,\beta)$-path \deviation{} $\delta$ and let $z$ an arbitrary feasible flow.
If $\hat{\mu}(\mathcal{L},0) < 1/(1 + \beta)$, then $C(x)/C(z) \leq (1 + \beta)/(1 - (1 + \beta)\hat{\mu}(\mathcal{L},0))$.
\label{thm:smooth_general}
\end{theorem}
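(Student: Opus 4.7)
\textbf{Proof plan for Theorem~\ref{thm:smooth_general}.}
The plan is to mimic the standard smoothness argument used for Theorem~\ref{thm:smoothness}, but to operate at the path level, exploiting the fact that even when $\delta$ does not decompose into arc deviations, the Nash flow conditions (\ref{eq:nash}) still hold on a path basis. The first step is to rewrite the Nash conditions in the variational form: for every commodity $i$ and every flow-carrying $P \in \mathcal{P}_i$ we have $q_P(x) \leq q_{P'}(x)$ for all $P' \in \mathcal{P}_i$, where $q_P(x) = l_P(x) + \delta_P(x)$. Multiplying by $z_{P'}$, summing over $P'$, and using $\sum_{P' \in \mathcal{P}_i} z_{P'} = r_i = \sum_{P \in \mathcal{P}_i} x_P$, yields
\begin{equation*}
\sum_{i \in [k]} \sum_{P \in \mathcal{P}_i} x_P\, q_P(x) \;\leq\; \sum_{i \in [k]} \sum_{P \in \mathcal{P}_i} z_P\, q_P(x).
\end{equation*}

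The second step is to exploit non-negativity and the upper bound from the $(0,\beta)$-path deviation. Since $\delta_P(x) \geq 0$, the left-hand side is at least $C(x)$. Since $\delta_P(x) \leq \beta l_P(x)$ for every $P \in \mathcal{P}$, the right-hand side is at most $(1+\beta) \sum_i \sum_{P \in \mathcal{P}_i} z_P l_P(x)$. Using the standard identity $\sum_i \sum_{P \in \mathcal{P}_i} z_P l_P(x) = \sum_{a \in A} z_a l_a(x_a)$, which holds because path latencies decompose additively over arcs, we obtain
\begin{equation*}
C(x) \;\leq\; (1+\beta) \sum_{a \in A} z_a\, l_a(x_a).
\end{equation*}

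The third step is to invoke arc-level smoothness with parameter $\beta = 0$. By the definition of $\hat{\mu}(\mathcal{L},0)$, every $l \in \mathcal{L}$ satisfies $z\, l(x) \leq z\, l(z) + \hat{\mu}(\mathcal{L},0)\, x\, l(x)$ for all $x,z \geq 0$. Applying this to each arc and summing gives $\sum_a z_a l_a(x_a) \leq C(z) + \hat{\mu}(\mathcal{L},0)\, C(x)$. Combining with the previous display,
\begin{equation*}
C(x) \;\leq\; (1+\beta)\bigl[C(z) + \hat{\mu}(\mathcal{L},0)\, C(x)\bigr],
\end{equation*}
and rearranging (which is legitimate exactly under the hypothesis $\hat{\mu}(\mathcal{L},0) < 1/(1+\beta)$) yields the claimed ratio $C(x)/C(z) \leq (1+\beta)/(1 - (1+\beta)\hat{\mu}(\mathcal{L},0))$.

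The main conceptual point, which could be mistaken for an obstacle, is the handoff from path-level to arc-level reasoning: the variational inequality is genuinely path-based (there is no guarantee of a compatible arc deviation), yet the absorbing inequality $\delta_P \leq \beta l_P$ together with the additivity of $l_P$ over arcs collapses the right-hand side into an arc sum, allowing the Correa--Schulz--Stier-Moses style smoothness argument of Theorem~\ref{thm:smoothness} to go through unchanged. Correspondingly, the role of $\beta$ enters only as a uniform multiplicative inflation factor, which is why it appears twice in the final bound and why the feasibility condition on $\hat{\mu}(\mathcal{L},0)$ is strengthened by the factor $1+\beta$.
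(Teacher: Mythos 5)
Your proof is correct and follows essentially the same route as the paper's: the path-level variational inequality, the two-sided use of $0 \leq \delta_P(x) \leq \beta l_P(x)$ to sandwich between $C(x)$ and $(1+\beta)\sum_P z_P l_P(x)$, the collapse to the arc sum $\sum_a z_a l_a(x_a)$, and then the arc-level smoothness bound via $\hat{\mu}(\mathcal{L},0)$ followed by rearranging under the hypothesis $\hat{\mu}(\mathcal{L},0) < 1/(1+\beta)$. The only difference is that you spell out the derivation of the variational inequality from the Nash conditions, which the paper simply asserts.
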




\section{Conclusions}


We introduced a unifying model to study the impact of (bounded) worst-case latency deviations in non-atomic selfish routing games. We demonstrated that the Deviation Ratio is a useful measure to assess the cost deterioration caused by such deviations. Among potentially other applications, we showed that the Deviation Ratio provides bounds on the Price of Risk Aversion and the relative error in social cost if the latency functions are subject to small perturbations. 

Our approach to bound the Deviation Ratio (see Section~\ref{sec:upper-bounds}) is quite generic and, albeit considering a rather general setting, enables us to obtain tight bounds. We believe that this approach will turn out to be useful to derive bounds on the Deviation Ratio of other games (e.g., network cost sharing games).

A natural extension of the bounded deviation model introduced in Section \ref{sec:pre} is to consider \emph{heterogeneous players}, i.e., players have different attitudes towards the deviations. Below we briefly report on some preliminary results for single-commodity networks. These extensions also hold for the framework of path deviations as described in the previous section. 

In general, studying the impact of (bounded) worst-case deviations of the input data of more general classes of games (e.g., congestion games) is an interesting and challenging direction for future work.

\paragraph{Preliminary results for single-commodity networks and heterogenous players.}

We consider $k$ different \emph{player types} in a single-commodity network (i.e., all player types share the same source and destination). For each type $i \in [k]$ we have a demand $r_i$ and an attitude $\tau_i$ towards the deviations. We assume without loss of generality that the demands are normalized such that $\sum_{i \in [k]} r_i = 1$. A feasible flow $f = (f_P^i)_{i \in [k], P \in \mathcal{P}}$ is \emph{$\delta$-inducible} if:
$$
\forall i \in [k],\ \forall P \in \mathcal{P},\ f_P^i > 0: \quad l_P(f) + \tau_i\delta_P(f) \leq l_{P'}(f) + \tau_i\delta_{P'}(f)  \quad \forall P' \in \mathcal{P}.
$$

We prove the following result: 
\begin{lemma}\label{lem:upper-heterogeneous}
Let $\mathcal{I}$ be a single-commodity instance and let $z$ be a $0$-inducible Nash flow. Let $x$ be a $\delta$-inducible Nash flow for some $(0,\beta)$-path deviation $\delta$. If there is an alternating $(s,t)$-path $\pi$ consisting only of arcs in $Z$, then 
$$
\frac{C(x)}{C(z)} \leq 1 + \beta \bigg(\sum_{i \in [k]} \tau_i r_i\bigg).
$$
\end{lemma}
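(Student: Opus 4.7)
The plan is to chain together the Nash condition for the deviated flow $x$, the monotonicity of latencies on arcs in $Z$, and a shortest-path property of the unperturbed Nash flow $z$ in order to bound $C(x)$ by $\bigl(1 + \beta \sum_i \tau_i r_i\bigr) \cdot C(z)$.

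First I would apply the Nash condition for $x$. Since $\pi$ consists only of arcs in $Z$, every arc of $\pi$ is oriented towards $t$, so $\pi$ is a simple directed $(s,t)$-path in $G$ and hence $\pi \in \mathcal{P}$. For every type $i \in [k]$ and every flow-carrying path $P$ of type $i$ (i.e.\ $x_P^i > 0$), the $\delta$-inducibility of $x$ yields $l_P(x) + \tau_i \delta_P(x) \le l_\pi(x) + \tau_i \delta_\pi(x)$. Using $\delta_P(x) \ge 0$ and $\delta_\pi(x) \le \beta\, l_\pi(x)$ (since $\delta$ is a $(0,\beta)$-path deviation) together with $\tau_i \ge 0$, this rearranges to $l_P(x) \le (1 + \tau_i \beta)\, l_\pi(x)$.

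Next I would exploit $\pi \subseteq Z$: every arc $a \in \pi$ satisfies $x_a \le z_a$, so by monotonicity of the latency functions $l_\pi(x) \le l_\pi(z)$. The crucial step is then to show $l_\pi(z) \le C(z)$. For this I would invoke Lemma~\ref{lem:cycles} applied to $z$ (with $\theta = 0$), taking $\psi$ to be the $s$-oriented version of $\pi$; since each arc of $\pi$ lies in $Z \subseteq A^+$, reversing them produces a $(t,s)$-path in $\hat{G}(z)$ whose arcs lie entirely in $\bar{A}$. The lemma then gives $l_{Z_0}(z) \ge l_\pi(z)$ for any flow-carrying $(s,t)$-path $Z_0$ of $z$. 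Because $z$ is a single-commodity Nash flow with total demand $\sum_i r_i = 1$, $C(z)$ equals the common Nash latency $L(z) = l_{Z_0}(z)$, hence $l_\pi(z) \le C(z)$.

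Finally I would sum over types: $C(x) = \sum_{i} \sum_P x_P^i\, l_P(x) \le \sum_i r_i (1 + \tau_i \beta)\, l_\pi(z) = \bigl(1 + \beta \sum_i \tau_i r_i\bigr)\, l_\pi(z) \le \bigl(1 + \beta \sum_i \tau_i r_i\bigr) C(z)$, using $\sum_i r_i = 1$, which yields the claim. The main obstacle is the direction of the inequality in the third step: the Nash condition for $z$ by itself only yields $l_\pi(z) \ge L(z)$, so one must invoke the inducibility characterization (Theorem~\ref{thm:induc} through Lemma~\ref{lem:cycles}) to conclude that a path $\pi$ built from arcs in the support of $z$ in fact attains the Nash latency and therefore satisfies $l_\pi(z) \le C(z)$.
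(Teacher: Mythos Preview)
Your proof is correct and follows essentially the same three-step argument as the paper: apply the Nash condition for $x$ with $\pi$ as the comparison path, use $x_a \le z_a$ on $\pi \subseteq Z$ to pass from $l_\pi(x)$ to $l_\pi(z)$, and then bound $l_\pi(z)$ by $C(z)$ before summing over types. The only minor difference is in the last step: the paper observes directly that $z_a > 0$ for all $a \in \pi$ (by definition of $Z$) means $\pi$ can be taken flow-carrying in some decomposition of $z$, hence $l_\pi(z) = C(z)$, whereas you route the inequality $l_\pi(z) \le C(z)$ through Lemma~\ref{lem:cycles}; both arguments are valid and ultimately rest on $\pi$ lying in the support of $z$.
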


Note that the condition of the alternating path $\pi$ to consist of arcs in $Z$ only is equivalent to having $\eta = 1$, i.e., $\pi$ is an actual $(s,t)$-path in the underlying graph. In particular, this condition is satisfied for \emph{series-parallel} graphs (see, e.g., Corollary 4.8 \cite{Nikolova2015}). This implies that the bound derived above holds for all instances with series-parallel graphs. It would be interesting to see if this bound extends to arbitrary alternating paths.

\begin{proof}[Lemma~\ref{lem:upper-heterogeneous}]
For $i \in [k]$, let $\bar{P}_i$ be a path maximizing $l_P(x)$ over all flow-carrying paths $P \in \mathcal{P}$ of type $i$. We have (this argument is also used in the proof of Lemma 4 in \cite{Lianeas2015}):
$$
l_{\bar{P}_i}(x) \leq l_{\bar{P}_i}(x) + \tau_i \delta_{\bar{P}_i}(x) \leq l_{\pi}(x) + \tau_i \delta_{\pi}(x) \leq  (1 + \beta \tau_i)l_{\pi}(x) = (1 + \beta \tau_i)\sum_{a \in \pi} l_a(x_a).
$$
Note that, by definition of the alternating path $\pi$, we have $x_a \leq z_a$ for all $a \in \pi$. Continuing with the estimate, we find $l_{\bar{P}_i}(x) \leq (1 + \beta \tau_i)\sum_{a \in \pi} l_a(z_a)$ and thus
$$
C(x) \leq \sum_{i \in [k]} r_i l_{\bar{P}_i}(x) \leq  \sum_{i \in [k]} r_i (1 + \beta \tau_i)\sum_{a \in \pi} l_a(z_a) = C(z) \bigg(\sum_{i \in [k]} r_i (1 + \beta \tau_i)\bigg) 
$$
Since $\sum_{i \in [k]} r_i = 1$, we get the desired result. Note that we use $C(z) = \sum_{a \in \pi} l_a(z_a)$, which is true because there exists a flow-decomposition of $z$ in which $\pi$ is flow-carrying (here we use $z_a > 0$ for all $a \in \pi$). \qed
\end{proof}

\bibliographystyle{abbrv}
\bibliography{references}
\newpage

\appendix


\section{Omitted material of Section \ref{sec:pre}}

\begin{rtheorem}{Theorem}{\ref{thm:np-hard_max}}
Given an instance $\mathcal{I}$, threshold functions $\theta$ and a parameter $K$, it is \np-complete to determine whether there exist deviations $\delta \in \Delta(\theta)$ such that $C(f^\delta) \ge K$, even for single-commodity networks with linear latencies.
\end{rtheorem}
\begin{proof}
We give a reduction from the \textsc{Directed Hamiltonian $s,t$-Path} problem: We are given a directed graph $G = (V,A)$, and fixed $s,t \in V$, and the goal is to decide whether or not there exists a simple directed $s,t$-path in $G$ that visits every node exactly once. Let $\mathcal{J}$ be an instance of \textsc{Directed Hamiltonian $s,t$-Path} problem. 

 Now, define an instance $\mathcal{I}$ of the bounded deviation model on the graph $G$ by taking $l_a(x) = x$ for all $a \in A$, $\theta^{\min}_a = 0$ for all $a \in A$, and $\theta^{\max}_a = n - 1$ for all $a \in A$. Furthermore, take $r = 1$.\\
\indent We claim that $G$ has a Hamiltonian path from $s$ to $t$ if and only if there is a \deviation{} $\delta \in \Delta(\theta)$ such that $C(f^\delta) \geq n-1$. First, let $G$ have a Hamiltonian path $P$ from $s$ to $t$, and define $\delta$ by $\delta_a = 0$ if $a \in P$, and $\delta_a = n - 1$ otherwise.
We then have that $f^\delta$ is given by $f_a^\delta = 1$ if $a \in P$ and $f_a = 0$ otherwise,
since the perceived latency along $P$ is then equal to $l_P(f) = n-1$, and any other path $P'$ uses at least one different arc $a' \notin P$, which gives us that $Q_{P'}(f) \geq l_{a'}(f) + \delta_{a'}(f) \geq n-1 = Q_{P}(f)$. Note that $f^\delta$ is the unique Nash flow in this case (since all the perceived latencies $l_a + \delta_a$ are strictly increasing).\\
\indent Conversely, suppose there is a $\delta \in \Delta(\theta)$ such that $C(f^\delta) \geq n - 1$. For any \textit{feasible} flow $g$ we have that $l_P(g) \leq n - 1$, with strict inequality if $f_P < 1$ (since then there will be at least one arc $a \in P$ with $f_a < 1$). This means that 
$$C(g) = \sum_{P \in \mathcal{P}} g_Pl_P(g) \leq \sum_{P \in \mathcal{P}} g_P(n-1) = n - 1,$$ 
using that $r = 1$. Again, we have strict inequality if $0 < g_P < 1$ for some path $P$, i.e., if not all players use the same path. This means that for $f^\delta$ there is at most one path $P^*$ with $f^\delta_{P^*} > 0$, which then implies that $f_{P^*} = 1$. Furthermore, we can conclude that $|A(P^*)| = l_{P^*}(f^\delta) = C(f^\delta) =  n - 1$, which implies that $P^*$ is a Hamiltonian path from $s$ to $t$, since it is a simple path by assumption. \qed
\end{proof}

\newpage

\section{Omitted material of Section \ref{sec:upper-bounds}}

\subsection{Proof of Theorem \ref{thm:induc}} 
\begin{rtheorem}{Theorem}{\ref{thm:induc}}
Let $f$ be a feasible flow. Then $f$ is $\theta$-inducible if and only if $\hat{G}(f)$ does not contain a cycle of negative cost with respect to $c$. 
\end{rtheorem}


\begin{proof}
Suppose that $f$ is an inducible flow and let $\delta$ be a vector of \deviations{} that induce $f$. Let $\hat{B}$ be a directed cycle in $\hat{G}(f)$. 
If $\hat{B}$ only consists of forward arcs, then $\sum_{a \in \hat{B}} (l_a + \theta^{\max}_a) \geq \sum_{a \in \hat{B}} (l_a + \theta^{\min}_a) \geq 0$, where the last inequality holds because of Assumption \ref{ass:nonnegative}. 

Next, suppose that there is a reversed arc $a = (v,u) \in \hat{B} \cap \bar{A}$. Then $(u,v) \in A_i^+$ for some commodity $i \in [k]$. Let $B = (b_1,\dots,b_q,b_1)$ be the cycle that we obtain from $\hat{B}$ if all arcs $(v,u) \in \hat{B} \cap \bar{A}$ are replaced by $a = (u,v) \in A^+$ (note that $B$ is contained in $G$ and that it is not a directed cycle). For every arc $b = (b_l,b_{l+1}) \in B \cap A^+$, there is a flow-carrying path $P_l$\footnote{Note that the paths $P_l$ can overlap, use parts of $B$, or even be subpaths of each other.} from $s$ to $b_l$ for some commodity $i$ (here we use the fact that all commodities share the same source).  

Intuitively, the proof is as follows. For all nodes $b \in V(B)$ with two incoming arcs of $B$, we can can find two paths $Q_1$ and $Q_2$ leading to that node, using the paths $P_l$ and the cycle $B$ (see also Figure \ref{fig:char_proof}). Furthermore, one of those paths is flow-carrying by construction. We then apply the Nash conditions to those flow-carrying paths (exploiting the common source) and add up the resulting inequalities. The contributions of the paths $P_l$ cancel out in the aggregated inequality, leading to the desired result. We now give a formal proof of this sketch.
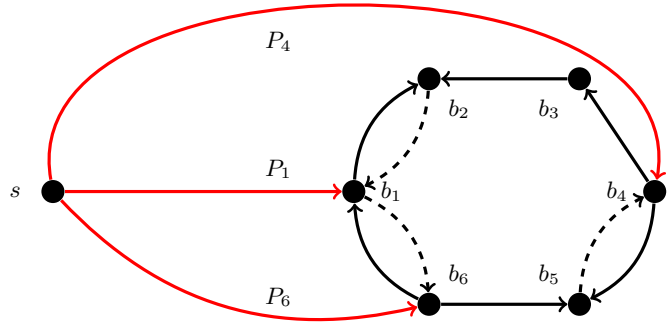
\begin{figure}[h!]
\centering
\begin{tikzpicture}
\begin{scope}


\node at (-1,1.5) [circle,fill=black] (2) {}; 
\node at (1,1.5)  [circle,fill=black] (3) {}; 
\node at (-2,0) [circle,fill=black] (1) {}; 
\node at (2,0)  [circle,fill=black] (4) {}; 
\node at (-1,-1.5)[circle,fill=black] (6) {}; 
\node at (1,-1.5) [circle,fill=black] (5) {}; 
\node at (-6,0) [circle,fill=black] (s) {}; 

\node at (-6.5,0) [] {$s$};
\node at (-1.5,0) [] {$b_1$};
\node at (-0.6,1.1) [] {$b_2$};
\node at (0.6,1.1) [] {$b_3$};
\node at (1.5,0) [] {$b_4$};
\node at (0.6,-1.1) [] {$b_5$};
\node at (-0.6,-1.1) [] {$b_6$};

\node at (-3,2) [] {$P_4$};
\node at (-3,0.3) [] {$P_1$};
\node at (-3,-1.4) [] {$P_6$};

\draw (1) edge[->, very thick, bend left] (2);
\draw (2) edge[->, very thick, bend left, dashed] (1);
\draw (3) edge[->, very thick] (2);
\draw (4) edge[->, very thick] (3);
\draw (4) edge[->, very thick, bend left] (5);
\draw (5) edge[->, very thick, bend left, dashed] (4);
\draw (6) edge[->, very thick] (5);
\draw (6) edge[->, very thick, bend left] (1);
\draw (1) edge[->, very thick, bend left, dashed] (6);

\draw (s) edge[->, very thick, red] (1);
\draw (s) edge[->, very thick, bend right, red] (6);
\draw (s) edge[->, very thick, bend left=100, red] (4);

\end{scope}
\end{tikzpicture}
\caption{The dashed arcs are the reversed arcs in $\hat{G}$. The black bold arcs indicate the cycle $B$. We have $(h_0,h_1,h_2,h_3) = (1,4,6,1)$. Note that, for example, it could be the case that $P_1 = P_6 \cup (b_6,b_1)$.}
\label{fig:char_proof}
\end{figure}

Without loss of generality, we may assume that $(b_1,b_2) \in A^+$. 
Let $h_1 \in \{2,\dots,q+1\}$ be the smallest index for which $(b_{h_1},b_{h_1+1}) \in A^+$ (here we take $b_{q+1} := b_1$ and $P_{q+1} := P_1$). Note that the concatenation of $P_{h_1}$ and $(b_{h_1},b_{h_1-1},\dots,b_2)$ is a directed path from $s$ to $b_2$. Then we have
$$
l_{(b_1,b_2)} + \delta_{(b_1,b_2)} + \sum_{a \in P_1} (l_a + \delta_a) \leq  \sum_{j = 3}^{h_1} l_{(b_j,b_{j-1})} + \delta_{(b_j,b_{j-1})} +  \sum_{a \in P_{h_1}} (l_a + \delta_a) 
$$
by using the fact that a subpath $(s,\dots,u)$ of a shortest $(s,t_i)$-path $(s,\dots,u,\dots,t_i)$ is a shortest $(s,u)$-path if $G$ does not contain negative cost cycles under the cost function $l + \delta$ (which is true because of Property \ref{ass:nonnegative}). We can now repeat this procedure by letting $h_2 \in \{h_1 + 1,\dots, q+1\}$ be the smallest index for which $(b_{h_2},b_{h_2+1}) \in A^+$, then we have
$$
l_{(b_{h_1},b_{h_1 + 1})} + \delta_{(b_{h_1},b_{h_1 + 1})} + \sum_{a \in P_{h_1}} (l_a + \delta_a) \leq  \sum_{j = h_1 + 2}^{h_2} l_{(b_j,b_{j-1})} + \delta_{(b_j,b_{j-1})} +  \sum_{a \in P_{h_2}} (l_a + \delta_a). 
$$
Continuing this procedure, we find a sequence $1 = h_0 < h_1 < \dots < h_p = q+1$ such that, for every $0 \leq w \leq p-1$,
\begin{equation}
l_{(b_{h_w},b_{h_w + 1})} + \delta_{(b_{h_w},b_{h_w + 1})} + \sum_{a \in P_{h_w}} l_a + \delta_a \leq  \sum_{j = h_w + 2}^{h_{w+1}} l_{(b_j,b_{j-1})} + \delta_{(b_j,b_{j-1})} +  \sum_{a \in P_{h_{w+1}}} l_a + \delta_a .
\label{eq:char_proof}
\end{equation}
Note that $p$ is the number of reversed arcs on the cycle $\hat{B}$.


Summing up these inequalities for $0 \leq w \leq p-1$, we obtain 
$$
\sum_{(v,u) \in \hat{B} \cap \bar{A}} l_{(u,v)} + \delta_{(u,v)} \leq \sum_{a \in \hat{B} \cap A} l_a + \delta_a,
$$
since all the contributions of the path $P_l$ cancel out. 
Now using the definition of a $\theta$-\deviation{}, we find 
$$
\sum_{a \in \hat{B}  \cap A} (l_a + \theta^{\max}_a) - \sum_{(v,u) \in \hat{B} \cap \bar{A}} (l_{(u,v)} + \theta^{\min}_{(u,v)}) \geq \sum_{a \in \hat{B}  \cap A} (l_a + \delta_a) - \sum_{(v,u) \in \hat{B}  \cap \bar{A}} (l_{(u,v)} + \delta_{(u,v)}) \geq 0.
$$
We have shown that $\hat{B} $ has non-negative cost. Note that $\hat{B} $ as zero cost if all the arcs on the cycle are reversed. 

\bigskip
\noindent For the other direction of the proof, consider the set $\mathcal{F}(\theta)$ of $\theta$-\deviations{} $\delta \in \Delta(\theta)$ that induce $f = (f^i_a)_{i \in [k], a \in A}$ (see also \cite{Lin2011,Roughgarden2006}):
\begin{align}
\mathcal{F}(\theta) = \{(\delta_a)_{a \in A}\ \  \big| \ \ 
\pi_{i,v} - \pi_{i,u} 
 \leq l_a(f_a) + \delta_a(f_a)  & \quad \forall a = (u,v) \in A, \forall i \in [k] \notag \\
\pi_{i,v} - \pi_{i,u} 
 =  l_a(f_a) + \delta_a(f_a)  & \quad \forall a = (u,v) \in A^+_i, \forall i \in [k] \notag \\
\theta^{\min}_a(f_a) 
 \leq  \delta_a(f_a) \leq \theta^{\max}_a(f_a) & \quad \forall a \in A \}.
 \label{eqn:lp_toll}
\end{align}
That is, $f$ is $\theta$-inducible if and only if (\ref{eqn:lp_toll}) has a feasible solution.
Now suppose that $\hat{G}(f)$ does not contain a cycle of negative cost. Then we can determine the shortest path distance $\delta_u$ from $s$ to every node $u \in V$. We define $\pi_{i,u} := \pi_u$ for all $u \in V$ and $i \in [k]$. Furthermore, for $a = (u,v) \in A$, we define $\delta_a := \max\{\theta^{\min}_a, \pi_v - \pi_u - l_a\}$. We will now show that $\delta$ induces $f$ by showing that we have constructed a feasible solution for (\ref{eqn:lp_toll}). First of all, for all $i \in [k]$ and $a \in A \setminus A^+_i$, we have $\delta_a \geq \pi_v - \pi_u - l_a$, which is equivalent to $\pi_{i,v} - \pi_{i,u} \leq l_a + \delta_a$. Secondly, if $a = (u,v) \in A^+_i$, then $\pi_u - \pi_v \leq -l_a - \theta^{\min}_a$ (which we derive using the reversed arc $(v,u)$). But this is equivalent to $\pi_{i,v} - \pi_{i,u} - l_a \geq \theta^{\min}_a$. We can conclude that $\delta_a = \pi_{i,v} - \pi_{i,u} - l_a$. Furthermore, we clearly have $\delta_a \geq \theta^{\min}_a$. Lastly, for all $a = (u,v) \in A$ we have $\pi_v - \pi_u \leq l_a + \theta^{\max}_a$ which is equivalent to $\pi_v - \pi_u - l_a \leq \theta^{\max}_a$. Combining this with the trivial inequality $\theta^{\min}_a \leq \theta^{\max}_a$ we can conclude that $\delta_a \leq \theta^{\max}_a$. This completes the proof. \label{thm:char_proof}\qed 
\end{proof}

\begin{remark}\label{rem:multi}
Consider the graph $G = (V,A)$ in Figure \ref{fig:multi_fail} and suppose that $r_1 = r_2 = 1$. Then the flow $f$ that routes one unit of flow over both paths $(s_1,v_1,1,2,t_1)$ and $(s_2,v_2,3,4,t_2)$ is feasible and inducible (take $\delta = 0$). However, looking at the graph $\hat{G}(f)$, we see the negative cost cycle $(1,4,3,2,1)$ (by using the reversed arcs of $(1,2)$ and $(3,4)$).
\qed
\end{remark}

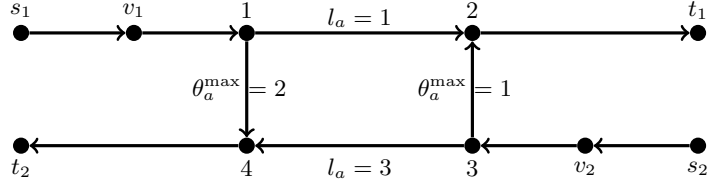
\begin{figure}[h!]
\centering
\begin{tikzpicture}
\begin{scope}


\node at (-3,1)  [circle,fill=black, scale = 0.7] (s1) {}; 
\node at (-1.5,1)  [circle,fill=black, scale = 0.7] (v1) {}; 
\node at (0,1)   [circle,fill=black, scale = 0.7] (1) {}; 
\node at (3,1)   [circle,fill=black, scale = 0.7] (2) {}; 
\node at (6,1)   [circle,fill=black, scale = 0.7] (t1) {}; 
\node at (-3,-0.5) [circle,fill=black, scale = 0.7] (t2) {}; 
\node at (0,-0.5)  [circle,fill=black, scale = 0.7] (4) {}; 
\node at (3,-0.5)  [circle,fill=black, scale = 0.7] (3) {}; 
\node at (4.5,-0.5)  [circle,fill=black, scale = 0.7] (v2) {}; 
\node at (6,-0.5)  [circle,fill=black, scale = 0.7] (s2) {};

\node at (-3,1.3) [] {$s_1$};
\node at (-1.5,1.3) [] {$v_1$};
\node at (0,1.3) [] {$1$};
\node at (3,1.3) [] {$2$};
\node at (6,1.3) [] {$t_1$};
\node at (-3,-0.8) [] {$t_2$};
\node at (0,-0.8) [] {$4$};
\node at (3,-0.8) [] {$3$};
\node at (4.5,-0.8) [] {$v_2$};
\node at (6,-0.8) [] {$s_2$};

\node at (1.5,1.2) [] {$l_a = 1$}; 
\node at (1.5,-0.8) [] {$l_a = 3$}; 
\node at (-0.1,0.25) [] {$\theta^{\max}_a = 2$}; 
\node at (2.9,0.25) [] {$\theta^{\max}_a = 1$}; 

\draw (s1) edge[->, very thick] (v1) ;
\draw (v1) edge[->, very thick] (1) ;
\draw (1)  edge[->, very thick] (2) ;
\draw (2)  edge[->, very thick] (t1);
\draw (s2) edge[->, very thick] (v2);
\draw (v2) edge[->, very thick] (3);
\draw (3)  edge[->, very thick] (4);
\draw (4)  edge[->, very thick] (t2);
\draw (3)  edge[->, very thick] (2);
\draw (1)  edge[->, very thick] (4);

\end{scope}
\end{tikzpicture}
\caption{All the values of $l_a, \ \theta^{\min}_a$ and $\theta^{\max}_a$ that are not explicitly stated are zero.} 
\label{fig:multi_fail}
\end{figure}

\subsection{Proof of Lemma \ref{lem:cycles}}

\begin{rtheorem}{Lemma}{\ref{lem:cycles}}
Let $x$ be $\theta$-inducible and let $X_i$ be a flow-carrying $(s,t_i)$-path for commodity $i \in [k]$ in $G$. Let $\chi$ and $\psi$ be any $(s,t_i)$-path and $(t_i,s)$-path in $\hat{G}(x)$, respectively. Then 
\begin{align*}
\sum_{a \in X_i} l_a(x_a) + \theta_a^{\min}(x_a) & \leq \sum_{a \in \chi \cap A} l_a(x_a) + \theta_a^{\max}(x_a) - \sum_{a \in \chi \cap \bar{A}} l_a(x_a) + \theta_a^{\min}(x_a) \\
\sum_{a \in X_i} l_a(x_a) + \theta_a^{\max}(x_a) & \geq \sum_{a \in \psi \cap \bar{A}} l_a(x_a) + \theta_a^{\min}(x_a) - \sum_{a \in \psi \cap A} l_a(x_a) + \theta_a^{\max}(x_a).
\end{align*}
\end{rtheorem}

We need the following proposition to prove Lemma~\ref{lem:cycles}. 
\begin{proposition}
Let $G = (V, A)$ be a non-empty, directed multigraph with the property that $\delta^-(v) = \delta^+(v)$ for all $v \in V$. Then $G$ is the union of arc-disjoint directed (simple) cycles $C_1,\dots,C_j$, such that $\bigcup_{j'} V(C_{j'}) = V(C)$ and $\bigcup_{j'} A(C_{j'}) = A(C)$. 
\label{prop:cycles}
\end{proposition}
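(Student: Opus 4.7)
The plan is to prove the statement by strong induction on the number of arcs $|A|$. The base case is when $A = \emptyset$, in which case the decomposition is the empty union of cycles (and there is nothing to show since the graph is then edgeless; the hypothesis that $G$ is non-empty refers to the vertex set).

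For the inductive step, assume $|A| \ge 1$. First I would show that $G$ contains a directed simple cycle. Since $|A| \ge 1$, there is some vertex $v_0$ with $\delta^+(v_0) \ge 1$, and hence also $\delta^-(v_0) \ge 1$ by the balance condition. Starting from $v_0$, I would construct a walk $v_0, v_1, v_2, \dots$ by repeatedly leaving each visited vertex along some out-arc. The key observation is that whenever the walk enters a vertex $v$ for the $j$-th time through an in-arc, it has used at most $j$ in-arcs and at most $j-1$ out-arcs of $v$; since $\delta^-(v) = \delta^+(v)$, there is always an unused out-arc to continue the walk (this is the standard Euler-type argument). Because $V$ is finite, the walk must eventually revisit some vertex. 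Taking the first repeated vertex yields a directed simple cycle $C_1$.

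Next, I would consider the multigraph $G' = (V, A \setminus A(C_1))$. For every vertex $u$ on $C_1$, exactly one incoming and one outgoing arc have been removed, so $\delta^-_{G'}(u) = \delta^+_{G'}(u)$; for vertices not on $C_1$, the degrees are unchanged. Hence $G'$ still satisfies the degree-balance hypothesis, with strictly fewer arcs. By the inductive hypothesis applied to $G'$ (which is allowed to have isolated vertices), we can decompose its arc set into arc-disjoint directed simple cycles $C_2, \dots, C_j$ whose arc-union equals $A(G')$ and whose vertex-union covers all vertices of $G'$ that are incident to at least one remaining arc.

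Finally, I would combine the cycles: $C_1, C_2, \dots, C_j$ are pairwise arc-disjoint by construction, their arc-union equals $A(C_1) \cup A(G') = A$, and their vertex-union equals $V(C_1) \cup \bigcup_{j' \ge 2} V(C_{j'})$, which covers every vertex incident to some arc of $G$. The main subtlety to be careful about is the multigraph aspect: parallel arcs must be treated as distinct elements of $A$, so the walk above must track which specific parallel arc has been used, and the cycle $C_1$ may traverse a vertex pair $(u,w)$ via one copy of a parallel arc while leaving the other copies in $G'$. This does not affect the argument, since the balance condition is stated on arc multiplicities. No other obstacle arises.
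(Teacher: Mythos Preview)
Your proposal is correct and follows essentially the same approach as the paper: find a directed simple cycle, remove its arcs (noting that the balance condition $\delta^-(v)=\delta^+(v)$ is preserved, with the multigraph caveat that only the specific copy of a parallel arc on the cycle is removed), and recurse/induct on the smaller graph. The paper's proof is simply a terser sketch of exactly this argument, without spelling out the walk-based existence of a cycle or the formal induction.
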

\begin{proof}
If $G$ is non-empty then it is clear that we can always find a (simple) directed cycle $C$ in $G$. Removing the arcs of this cycle leads to the graph $G \setminus C := (V, A \setminus A(C))$ that also satisfies $\delta^-(v) = \delta^+(v)$ for all $v \in V$ (note that if there are multiple arcs between two nodes, we only remove the copy on the cycle). \qed
\end{proof} 

\begin{proof}[Lemma \ref{lem:cycles}]
Since $X_i$ is a flow-carrying path, we know that for every $a = (u,v) \in X_i$, we have a reversed arc $(v,u) \in \hat{A}$ in $\hat{G}$. Furthermore, any $(s,t_i)$-path in $\hat{G}$ can consist of both forward as well as reversed arcs. Let $\hat{H}$ be the graph consisting of the reversed path of $X_i$ (say $X_i'$), and the path $\chi$, where we add a copy of an arc if it is used by both paths (i.e., $\hat{H}$ can be a multigraph). Note that $\hat{H}$ satisfies the conditions of Proposition \ref{prop:cycles}, since it is the union of an $(s,t_i)$-path and a $(t_i,s)$-path. Therefore, the graph $\hat{H}$ is the union of arc-disjoint directed cycles $C_1,\dots, C_j$ for some $j$. Now, we apply Theorem \ref{thm:induc} to all these cycles and obtain
$$
\sum_{a \in A \cap C_{j'}} (l_a(x_a) + \theta^{\max}_a)(x_a) - \sum_{a \in \bar{A} \cap C_{j'}} (l_a(x_a) + \theta^{\min}_a)(x_a) \ \geq \ 0
$$
for all $j' = 1,\dots,j$. Adding up these inequalities then gives the desired result. The second inequality can be proved similarly (by applying the first argument in the opposite direction of the cycle). \qed
\end{proof}

\subsection{Proof of Lemma~\ref{lem:alt_path_tree}}

\begin{rtheorem}{Lemma}{\ref{lem:alt_path_tree}}
Let $z$ and $x$ be feasible flows and let $Z$ and $X$ be a partition of $A$ as in Definition \ref{def:alt_path}. Then there exists an alternating path tree. 
\end{rtheorem}
\begin{proof}
Let $G' = (V',A')$ be the graph defined by $V = V 
\cup \{t\}$ and $A' = A \cup \{(t_i,t) : i \in [k]\}$. Let $x', z'$ be the flows defined by 
$$
x_a' = \left\{ \begin{array}{rl} x_a & \text{ for } a = (u,v) \in A \\
r_i & \text{ for } a = (t_i,t) \text{ with } i \in [k]
\end{array}\right. \ 
\text{ and } \ 
z_a' = \left\{ \begin{array}{rl} z_a & \text{ for } a = (u,v) \in A \\
r_i & \text{ for } a = (t_i,t) \text{ with } i \in [k]
\end{array}\right.
$$
Then $x'$ and $z'$ are feasible $(s,t)$-flows in $G'$. We can write $A = Z' \cup X'$ with $Z' = Z \cup \{(t_i,t) : i \in [k]\}$ and $X'$ having the same properties as $Z$ and $X$ in $G$ (which follows from $x_a' = z_a' = r_i > 0$ for all $a = (t_i,t)$).\\
\indent We can now apply the same argument as in the proof of Lemma 4.5 in  \cite{Nikolova2015} of which we will give a short summary (for sake of completeness). For any $s$-$t$ cut defined by $S \cup V'$ with $s \in S$ we claim that we can cross $S$ with an arc in $Z'$, or a reversed arc in $X'$. Suppose that this would not be the case, i.e., all arcs into $S$ are in the set $Z'$ and all the outgoing arcs of $S$ are in $X'$. Let $x_{Z'}$ and $z_{Z'}$ be the total incoming flows from $S$, and $x_{X'}$ and $z_{X'}$ the total outgoing flows from $S$ (for resp. flows $x$ and $z$). From the definition of $Z'$ it follows that $x_{Z'} \leq z_{Z'}$. From conservation of flow it follows that $x_{X'} - x_{Z'} = z_{X'} - z_{Z'}$. Combining these two observations, we find that $x_{X'} \leq z_{X'}$. However, by definition of $X'$, we have $x_{X'} > z_{X'}$ (since we removed all arcs $a$ with $z_a = x_a = 0$). We find a contradiction.\\
\indent Having proved the claim that we can always cross with an arc in $Z'$ or a reversed arc in $X'$, we can now easily construct a spanning tree $\pi'$ consisting of alternating paths, by starting with the cut $(S, G \setminus S)$ given by $S = \{s\}$.\\
\indent Note that $t$ cannot be an interior point of $\pi'$, since $t$ is only adjacent to incoming arcs of the set $Z'$. This means that if we remove $(t_j,t)$ from $\pi'$ (where $j$ is the index for which $(t_j,t)$ is in the tree $\pi'$), we have found an alternating path tree $\pi$ for the graph $G$, under the flows $x$ and $z$. \qed 
\end{proof}


\subsection{Proof of Theorem~\ref{thm:general_multi_common_source}(ii)}

We need the following lemma and proposition for the proof of Theorem~\ref{thm:general_multi_common_source}(ii).

\begin{lemma}
Let $-1 < \alpha \leq 0 \leq \beta$ be fixed. Then $f$ is inducible with an $(\alpha,\beta)$-\deviation{} if and only if it is inducible with a $(0,\frac{\beta - \alpha}{1 + \alpha})$-\deviation{}. 
\label{lem:alphabeta_induc}
\end{lemma}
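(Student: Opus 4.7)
The plan is to exploit the scale-invariance of Nash flow conditions. The Wardrop conditions \eqref{eq:nash} only depend on the relative values of the perceived latencies $q_a = l_a + \delta_a$: if $f$ is Nash for perceived latencies $q$, then $f$ is Nash for $\lambda q$ for any constant $\lambda > 0$ (here $\lambda$ is a single scalar applied to all arcs). I will use this with $\lambda = 1/(1+\alpha) > 0$.

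For the forward direction, suppose $\delta \in \Delta(\theta)$ with $\theta = (\alpha,\beta)$ induces $f$. Define
$$
\delta'_a(x) = \frac{\delta_a(x) - \alpha\, l_a(x)}{1+\alpha}.
$$
I would first check that $\delta'$ is a $(0,(\beta-\alpha)/(1+\alpha))$-deviation: from $\alpha l_a \le \delta_a \le \beta l_a$ and $1+\alpha > 0$ we get $0 \le \delta'_a \le \frac{\beta-\alpha}{1+\alpha} l_a$. Then I would observe that
$$
l_a(x) + \delta'_a(x) \;=\; \frac{1}{1+\alpha}\bigl(l_a(x) + \delta_a(x)\bigr),
$$
so the perceived latencies under $\delta'$ are a uniform positive rescaling of those under $\delta$; therefore $f$ satisfies \eqref{eq:nash} with respect to $l + \delta'$ as well.

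For the reverse direction, given a $(0,(\beta-\alpha)/(1+\alpha))$-deviation $\delta'$ inducing $f$, I would set
$$
\delta_a(x) = \alpha\, l_a(x) + (1+\alpha)\,\delta'_a(x),
$$
check that $\alpha l_a \le \delta_a \le \alpha l_a + (\beta-\alpha) l_a = \beta l_a$, and note that $l_a + \delta_a = (1+\alpha)(l_a + \delta'_a)$, so again the Nash conditions carry over. Finally, I would verify that Assumption~\ref{ass:nonnegative} is respected by both deviations: for $(\alpha,\beta)$-deviations this says $(1+\alpha) l_a \ge 0$, which holds since $l_a \ge 0$ and $\alpha > -1$, and for $(0,(\beta-\alpha)/(1+\alpha))$ it is trivial.

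There is no real obstacle here; the only thing to be careful about is that the rescaling is by a single positive constant $1/(1+\alpha)$ applied uniformly to every arc (so that the path-latency inequalities in \eqref{eq:nash} are preserved), which is exactly why the statement needs $-1 < \alpha$.
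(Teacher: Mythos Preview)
Your proposal is correct and follows essentially the same approach as the paper: both arguments define the new deviation so that $l_a + \delta'_a = \tfrac{1}{1+\alpha}(l_a + \delta_a)$ and then use that a uniform positive rescaling of all perceived latencies preserves the Wardrop conditions. The only cosmetic difference is that the paper first rewrites $\delta_a(f_a) = d_a\, l_a(f_a)$ and works with the scalar $d_a$, whereas you transform $\delta_a$ directly as a function; your version is slightly cleaner and also spells out the reverse implication explicitly.
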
 
\begin{proof}
Let $f$ be inducible for some $\alpha l \leq \delta \leq \beta l$, and for $a \in A$, write $\delta_a(f_a) = d_al_a(f_a)$. Without loss of generality we may assume that $\delta_a(x) = d_a l_a(x)$ (since by definition $d_al_a(x)$ also induces $f$). From the equilibrium conditions, we know that 
$$
\forall i \in [k], \forall P \in \mathcal{P}_i, f^{\delta}_P > 0: \ \ \ \ \ \ \sum_{a \in P} l_a(f_a) + \delta_a(f_a) \leq \sum_{a \in P'} l_a(f_a) + \delta_a(f_a)  \ \ \forall P' \in \mathcal{P}_i.
$$
This is equivalent to $\forall i \in [k], \forall P \in \mathcal{P}_i, f^{\delta}_P > 0:$
$$
\sum_{a \in P} \left(1 + \frac{d_a - \alpha}{1 + \alpha}\right)l_a(f_a) \leq \sum_{a \in P'} \left(1 + \frac{d_a - \alpha}{1 + \alpha}\right)l_a(f_a)  \ \ \forall P' \in \mathcal{P}_i
$$
which can be seen by writing 
$$l_a(f_a) + \delta_a(f_a) = (1 + d_a)l_a(f_a) = (1 + \alpha + d_a - \alpha)l_a(f_a),$$
and then dividing the inequality by $1 + \alpha$. We then see that $\delta'$, defined by $\delta_a'(x) = \frac{d_a - \alpha}{1 + \alpha} l_a(x)$ for all $a \in A$ and $x \geq 0$, also induces $f$, since
$$
\alpha l_a(x) \leq d_al_a(x) \leq \beta l_a(x)  \ \ \ \Leftrightarrow \ \ \
0 \leq \frac{d_a - \alpha}{1 + \alpha} l_a(x) \leq \frac{\beta - \alpha}{1 + \alpha} l_a(x).
$$
\qed
\end{proof} 
\begin{proposition}
Let $z = f^0$ be a Nash flow for a multi-commodity instance with a common source. Let $v \in V$ and let $i,j \in [k]$ be two commodities for which there exist flow-carrying $(s,v)$-paths $P_1 \in \mathcal{P}_i$ and $P_2 \in \mathcal{P}_j$, respectively. 
Then there exists a feasible Nash flow $\bar{z}$ with $\bar{z}_a = z_a$ for all $a \in A$ such that both paths $P_1, P_2$ are flow-carrying for commodity $i$, and both paths $P_1,P_2$ are flow-carrying for commodity $j$, i.e., we have $\bar{z}_{P_1}^i, \bar{z}_{P_2}^i, \bar{z}_{P_1}^j, \bar{z}_{P_2}^j > 0$.
\label{prop:interchange}
\end{proposition}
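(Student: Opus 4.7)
The plan is to perturb $z$ by a small interchange of commodity-$i$ and commodity-$j$ path flow along the two $(s,v)$-prefixes $P_1$ and $P_2$, keeping every arc flow, and hence every arc latency, unchanged. The crucial preliminary, which I would establish first, is the identity $l_{P_1}(z) = l_{P_2}(z)$; this is where the common-source hypothesis is essential. Since every arc of $P_1$ lies in the support $A^+_i$ of commodity $i$'s flow, $P_1$ extends to a flow-carrying path $\tilde{P}_1 = P_1 \circ Q_1 \in \mathcal{P}_i$. The Nash condition forces $l_{\tilde{P}_1}(z)$ to equal the shortest $(s,t_i)$-latency, which is in turn at most the shortest $(s,v)$-latency plus the shortest $(v,t_i)$-latency (by concatenating any two such shortest paths). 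Combined with the obvious lower bounds $l_{P_1}(z) \geq$ shortest $(s,v)$-latency and $l_{Q_1}(z) \geq$ shortest $(v,t_i)$-latency, both inequalities must be tight, so $l_{P_1}(z)$ equals the shortest $(s,v)$-latency. Applying the same argument to $P_2$ extended through some $Q_2$ into $\tilde{P}_2 \in \mathcal{P}_j$ yields $l_{P_2}(z) = l_{P_1}(z)$.

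I would then formalize the swap. Fix a path decomposition of $z$ containing both $\tilde{P}_1 \in \mathcal{P}_i$ and $\tilde{P}_2 \in \mathcal{P}_j$ as flow-carrying paths with positive path flows $z_{\tilde{P}_1}, z_{\tilde{P}_2} > 0$, and pick any $\epsilon$ with $0 < \epsilon < \min\{z_{\tilde{P}_1}, z_{\tilde{P}_2}\}$. Define $\bar{z}$ by decreasing both $z_{\tilde{P}_1}$ and $z_{\tilde{P}_2}$ by $\epsilon$, and by routing $\epsilon$ extra units of commodity $i$ along $P_2 \circ Q_1$ and $\epsilon$ extra units of commodity $j$ along $P_1 \circ Q_2$ (decomposing these walks into simple paths of the respective commodities if they fail to be simple). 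A direct per-arc count gives $\bar{z}_a = z_a$ for every $a \in A$: on each arc of $P_1$ (resp.\ $P_2$) the $\epsilon$ units removed from commodity $i$ (resp.\ $j$) are replaced by $\epsilon$ units of commodity $j$ (resp.\ $i$), while on every arc of $Q_1$ and $Q_2$ the per-commodity flows are unchanged.

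Finally I would verify that $\bar{z}$ is a Nash flow with the required positivity. Since arc latencies are preserved, the minimum $(s,t_i)$- and $(s,t_j)$-latencies coincide with those under $z$, and every path already flow-carrying in $z$ remains minimum-latency. The newly flow-carrying path $P_2 \circ Q_1$ for commodity $i$ has latency $l_{P_2}(z) + l_{Q_1}(z) = l_{P_1}(z) + l_{Q_1}(z) = l_{\tilde{P}_1}(z)$, matching the minimum $(s,t_i)$-latency, and symmetrically for $P_1 \circ Q_2$. The positivity $\bar{z}^i_{P_1}, \bar{z}^i_{P_2}, \bar{z}^j_{P_1}, \bar{z}^j_{P_2} > 0$ follows because the choice $\epsilon < z_{\tilde{P}_1}$ leaves commodity $i$ with positive flow along $P_1$ (through the remainder of $\tilde{P}_1$) and the newly added $\epsilon > 0$ places commodity $i$ on $P_2$, and symmetrically for $j$. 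The step I expect to be the main obstacle is handling the case in which $P_2 \circ Q_1$ or $P_1 \circ Q_2$ is not a simple path; I would resolve this by decomposing the walk into a simple $(s,t_i)$-path plus cycles, observing that each such cycle necessarily has zero total latency under $z$ (else the walk would strictly exceed the shortest-path bound that we just matched), so the simple component remains a minimum-latency path and the Nash conditions are preserved.
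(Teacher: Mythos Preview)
Your approach coincides with the paper's: move $\epsilon$ units of commodity $i$ from the prefix $P_1$ to $P_2$ and simultaneously $\epsilon$ units of commodity $j$ from $P_2$ to $P_1$, so that every arc flow is unchanged and the Nash property carries over. In fact you are more explicit than the paper, which simply writes $\bar z^i_{P_1}=z^i_{P_1}-\epsilon$, $\bar z^j_{P_1}=z^j_{P_1}+\epsilon$, etc., without spelling out how the $(s,v)$-prefixes are extended to full commodity paths; your passage to $\tilde P_1=P_1\circ Q_1\in\mathcal P_i$, $\tilde P_2=P_2\circ Q_2\in\mathcal P_j$ and your preliminary identity $l_{P_1}(z)=l_{P_2}(z)$ (which the paper obtains only implicitly, as the pair of inequalities $l_{P_1}(z)\le l_{P_2}(z)$ and $l_{P_2}(z)\le l_{P_1}(z)$) make the mechanism precise.

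There is one point where your fix does not quite close the gap you correctly flag. If $P_2\circ Q_1$ fails to be simple and you replace it by a simple $(s,t_i)$-path plus zero-latency cycles, routing the $\epsilon$ units only along the simple component does preserve the Nash condition, as you argue, but it no longer guarantees $\bar z_a=z_a$: on the arcs belonging to the discarded cycles your earlier per-arc bookkeeping (which implicitly sends the $\epsilon$ units along the entire walk $P_2\circ Q_1$) breaks down. The paper's proof does not treat this case at all, so your argument is already at least as complete as the original; but since the identity $\bar z_a=z_a$ for all $a$ is part of the statement, the non-simple case would need a separate repair.
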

\begin{proof}
Intuitively, we shift an $\epsilon$ amount of flow of commodity $i$ to path $P_2$ and an $\epsilon$ amount of flow of commodity $j$ to path $P_1$. Formally, choose $\epsilon > 0$ small enough such that $z_{P_1}^i - \epsilon, z_{P_2}^j - \epsilon > 0$. We define 
$$
\bar{z}_{P}^l = \left\{\begin{array}{ll} z_{P_1}^i - \epsilon &  \ \ \text { if } P = P_1 \text{ and } l = i\\
z_{P_1}^j + \epsilon & \ \ \text { if } P = P_1 \text{ and } l = j\\
z_{P_2}^i + \epsilon & \ \ \text { if } P = P_2 \text{ and } l = i\\
z_{P_2}^j - \epsilon & \ \ \text { if } P = P_2 \text{ and } l = j\\
\end{array}\right.
$$
and let all the other flow-carrying paths remain unchanged.
It then immediately follows that $z_a = \bar{z}_a$ for all $a \in A$, and in the resulting feasible flow $\bar{z}$, both commodities $i$ and $j$ are flow-carrying for both paths $P_1$ and $P_2$. The feasibility of $\bar{z}$ follows because both commodities have the same source. Moreover, the common source also implies that if $z$ is a Nash flow, then $\bar{z}$ is also a Nash flow (since commodity $i$ implies that $l_{P_1}(z) \leq l_{P_2}(z)$, and commodity $j$ implies that $l_{P_2}(z) \leq l_{P_1}(z)$). \qed 
\end{proof}

\begin{proof}[Theorem \ref{thm:general_multi_common_source}(ii)]
By Lemma \ref{lem:alphabeta_induc} we can assume without loss of generality that $\theta^{\max}_a = \frac{\beta - \alpha}{1 + \alpha} l_a$ and $\theta^{\min}_a = 0$ for all $a \in A$. Furthermore, with $A_{ij}$ we denote the \emph{$j$-th segment} of $\pi_i$, $j = 1,\dots,\eta_i$, consisting of consecutive arcs in $Z$. Using Theorem \ref{lem:main_bound} and the definition of $A_{ij}$, we obtain
\begin{eqnarray}
C(x) & \leq & C(z) + \frac{\beta - \alpha}{1 + \alpha} \sum_{i \in [k]} r_i \sum_{a \in P \cap \pi_i} l_a(z_a) \ \nonumber \\
& \leq & C(z) + \frac{\beta - \alpha}{1 + \alpha} \sum_{i \in [k]} r_i \left(\eta_i \cdot\max_{j = 1,\dots,\eta_i}\sum_{a \in A_{ij}} l_a(z_a) \right)\nonumber 
\end{eqnarray}
Note that it now suffices to show that $\sum_{a \in A_{ij}} l_a(z_a) \leq C(z)$ for all $j = 1,\dots,\eta_i$ and $i \in [k]$.

We prove below that, for a fixed section $A_{ij}$, there exists a commodity $w \in [k]$ such that every $a \in A_{ij}$ is flow-carrying for commodity $w$ (note that $w$ and $i$ can be different). This allows us to assume that $A_{ij}$ is contained in some flow-carrying path $l_{w} \in \mathcal{P}_w$ (by choosing a suitable path decomposition of $z$ for commodity $w$). We then obtain that 
$\sum_{a \in A_{ij}} l_a(z_a) \leq l_{w}(z) \leq C(z)$  since $r_i \geq 1$.
Recall that $C(z) = \sum_{i \in [k]} r_i l_{Z_i}(z)$, where $Z_i \in \mathcal{P}_i$ is an arbitrary flow-carrying path for commodity $i \in [k]$. 

We will now prove the above claim. Fix a section $A_{ij}$ and let $a_1 = (u,v)$ and $a_2 = (v,w)$ be two consecutive arcs that are flow-carrying for commodities $w_1$ and $w_2$ in $z$, respectively. This implies that there are flow-carrying $(s,v)$-paths $W_1$ and $W_2$ such that $W_1$ is flow-carrying for $w_1$, and $W_2$ for $w_2$. The existence of $W_1$ is clear, and the existence of $W_2$ follows from flow-conservation applied to commodity $w_2$ (since flow is leaving node $v$ for that commodity). But then, by Proposition \ref{prop:interchange}, we may assume that $a_1$ is also flow-carrying for commodity $w_2$. Applying this argument repeatedly, starting with the last two arcs on $A_{ij}$ and working to the front, we can show that the whole section $A_{ij}$ is flow-carrying for a commodity that is flow-carrying on the last arc of $A_{ij}$. \qed 
\end{proof}


\newpage

\section{Omitted material of Section 4}

\subsection{Proof of Theorem~\ref{thm:lower_multi_odd}}

\begin{rtheorem}{Theorem}{\ref{thm:lower_multi_odd}}
There exist common source two-commodity instances $\mathcal{I}$ such that 
$$
\dr(\mathcal{I}, (\alpha, \beta)) 
\ge 
\left\{ \begin{array}{ll}
1 + (\beta - \alpha)/(1+\alpha)\cdot (n-1)/2 \cdot r & \text{ for } n = 2m+1 \in \N_{\geq 5} \\
1 + (\beta - \alpha)/(1+\alpha)\cdot [(n/2-1)r + 1] & \text{ for } n = 2m \in \N_{\geq 4}.
\end{array}\right.
$$
\end{rtheorem}
\begin{proof}
We first prove the claim for $n$ odd. 
Let $r \in \R_{\geq 1}$ and $n = 2m + 1 \in \N_{\geq 5}$. We modify the graph $G^m$ by adding one extra node $t_2$ (the node $t$ will be referred to as $t_1$ from here on). We add the arcs $(s,t_2)$ and $(t_2,t_1)$ (see the dotted arcs in Figure \ref{fig:braess_5}). We take one commodity with sink $t_1$ and $r_1 = 1$, and one commodity with sink $t_2$ and demand $r_2 = r - 1$. Note that the latter commodity only has one $(s,t_2)$-path.

The pairs $(l_a^m(g),\delta_a^m(g))$, for all $a$ except $(s,t_2)$ and $(t_2,t_1)$, are defined as in Example \ref{exmp:tight_pra}, but with $y$ a non-decreasing, non-negative, continuous function satisfying $y_m(1/m) = 0$ and $y_m((1-\epsilon_m)/(m-1)) = \beta$, where we choose $0 < \epsilon_m < 1/m$\, so that $1/m < (1 - \epsilon_m)/(m - 1)$. For $a = (s,t_2)$, we take $(l_a^m(g),\delta_a^m(g)) = (y^*_m(x'),0)$, where $y^*$ is a non-decreasing, non-negative, continuous function satisfying $y_m^*(r-1) = 0$ and $y_m^*(r-1 + \epsilon_m) = \beta$. For $a = (t_2,t_1)$ we take $(l_a^m(g),\delta_a^m(g)) = (1,0)$. See Figure~\ref{fig:braess_5_multi} for an example.

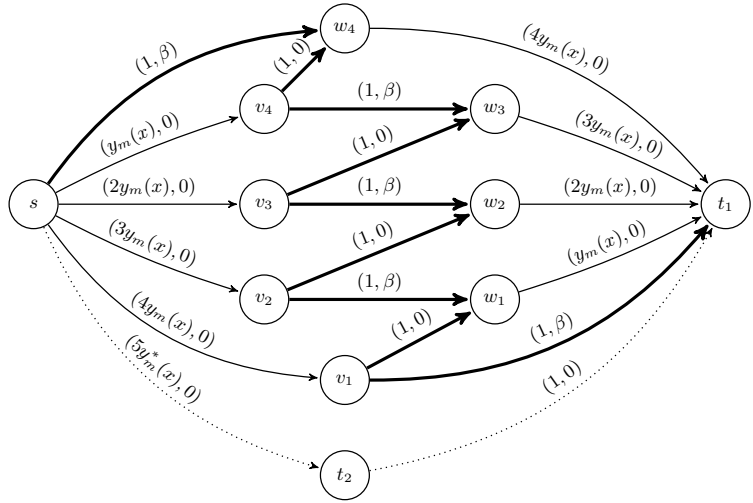
\begin{figure}[h!]
\centering
\scalebox{0.8}{
\begin{tikzpicture}[
  ->,
  >=stealth',
  shorten >=1pt,
  auto,
  semithick,
  every state/.style={circle,fill=white,radius=0.2cm,text=black},
]
r = 1
\begin{scope}
  \node[state]  (s)               					 {$s$};
  \node[state]  (v3) [right=3cm of s] 				 {$v_3$};
  \node[state]  (v4) [above=0.75cm of v3]  			 {$v_4$};
  \node[state]  (v2) [below=0.75cm of v3]			     {$v_2$};
  \node[state]  (v1) [below right=0.75cm and 0.75cm of v2] {$v_1$};
  \node[state]  (w4) [above right=0.75cm and 0.75cm of v4] {$w_4$};
  \node[state]  (w3) [right=3cm of v4] 				 {$w_3$};
  \node[state]  (w2) [right=3cm of v3] 				 {$w_2$};       
  \node[state]  (w1) [right=3cm of v2] 				 {$w_1$}; 
  \node[state]  (t)  [right=3cm of w2] 				 {$t_1$};
  \node[state]  (t2) [below=0.75cm of v1] 			 {$t_2$};
  
\path[every node/.style={sloped,anchor=south,auto=false}]
(s) edge[line width=1.5pt, bend left=25] 	node {$(1,\beta)$} (w4)            
(s) edge[bend left=5]  						node {$(y_m(x),0)$} (v4)
(s) edge               						node {$(2y_m(x),0)$} (v3)
(s) edge[bend right=5] 						node {$(3y_m(x),0)$} (v2)
(s) edge[bend right=25]						node {$(4y_m(x),0)$} (v1)
(v4) edge[line width=1.5pt]             	node {$(1,0)$} (w4)
(v4) edge[line width=1.5pt]         		node {$(1,\beta)$} (w3)
(v3) edge[line width=1.5pt]               node {$(1,0)$} (w3)
(v3) edge[line width=1.5pt]               node {$(1,\beta)$} (w2)
(v2) edge[line width=1.5pt]               node {$(1,0)$} (w2)
(v2) edge[line width=1.5pt]               node {$(1,\beta)$} (w1)
(v1) edge[line width=1.5pt]               node {$(1,0)$} (w1)
(w4) edge[bend left=25]node {$(4y_m(x),0)$} (t)            
(w3) edge[bend left=5] node {$(3y_m(x),0)$} (t)
(w2) edge              node {$(2y_m(x),0)$} (t)
(w1) edge[bend right=5]node {$(y_m(x),0)$} (t)
(v1) edge[line width=1.5pt, bend right=25]node {$(1,\beta)$} (t)
(s)  edge[bend right=25, dotted]node {$(5y^*_m(x),0)$} (t2)
(t2)  edge[bend right=25, dotted]node {$(1,0)$} (t);  
        
\end{scope}
\end{tikzpicture}}
\caption{The fifth (odd) Braess graph with $(l_a^5,\delta_a^5)$ on the arcs as defined above, where $t = t_1$. The thick edges indicate the alternating path $\pi_1$.} 
\label{fig:braess_5_multi}
\end{figure}

A Nash flow $z$ for this instance is given by routing $1/m$ units of flow over the paths $(s,w_{m-1},t_1),(s,v_1,t_1)$ and the paths in $\{(s,v_j,w_{j-1},t_1) : 2 \leq j \leq m-1\}$ for the first commodity, and $r - 1$ units of flow over $(s,t_2)$ for the second commodity. This claim is true since all the paths for the first commodity have latency one, as well as the paths $(s,v_j,w_j,t)$, for $1 \leq m \leq j$. This is also true for $(s,t_2,t_1)$. The latency for the other commodity is zero. We may conclude that $C(z) = 1$.

A Nash flow $x$ under \deviation{} $\delta$, as defined here, is given by, for the first commodity, routing $(1 - \epsilon_m)/(m-1)$ units of flow over the paths in $\{(s,v_j,w_j,t) : 1 \leq j \leq m - 1\}$, and $\epsilon_m$ units of flow over the path $(s,t_2,t_1)$. Note that the perceived latency on all these paths $p$ is $q_P(x) = 1 + \beta m$ (which is also the true latency, since all the \deviations{} are zero on the arcs of these paths). Using the same reasoning as in Example \ref{exmp:tight_pra} it can be seen that the perceived latency on the paths $P' = (s,v_j,w_{j-1},t)$, for $2 \leq j \leq m-1$, is also $q_{P'}(x) = 1 + \beta m$, from which we may conclude that $x$ is indeed a Nash flow under the \deviation{} $\delta$. We have$C(x) = 1 + \beta m + (r-1) \beta m = 1 + \beta rm$, since for the first commodity the (true) latency along every path is $1 + \beta m$, and for the other commodity the latency along $(s,t_2)$ is $\beta m$. 

We next prove the claim for $n$ even. 
Let $r \in \R_{\geq 1}$ and $n = 2m \in \N_{\geq 4}$. We use the same Braess graphs as in Example \ref{exmp:tight_pra}, without modifications. We introduce another commodity with demand $r_2 = r -1$, for which we choose $t_2 = v_1$.  
We replace the pair $((m-1)y_m(x'),0)$ on $a = (s,v_1)$ by the pair $((m-1)y'_m(g),0)$ where $y'_m$ satisfies $y'_m(1/m + r-1) = 0$ and $y'_m(1/(m-1) + r-1) = \beta$. 
Note that the flows $x$ and $z$, as defined in Example \ref{exmp:tight_pra} with the extension that the second commodity uses the arc $(s,v_1)$ in both cases, still form feasible Nash flows for their respective \deviations{}. We obtain 
\begin{align*}
C(x) 
& = \sum_i \sum_{q \in \mathcal{P}_i} x_q^i l_q(x) 
 = 1 + \beta m + (r-1) (m-1) \beta \\
& = 1 + \beta  m + \beta (r-1) (m-1)  
  = (1 + \beta r m) - \beta(r-1).
\end{align*} \qed
\end{proof}

\begin{remark}\label{rem:improve-2-com}
For two-commodity instances with $n$ even, we can actually improve the upper bound in Theorem \ref{thm:general_multi_common_source} to the lower bound stated in Theorem \ref{thm:lower_multi_odd}:
Suppose the upper bound of Theorem \ref{thm:general_multi_common_source} is tight. Then we need to have $\eta_1 = \eta_2 = n/2$. This means that the alternating path tree is actually a path (in the sense that all nodes are adjacent to at most two arcs of the alternating path tree) that alternates between arcs in $X$ and $Z$, starting and ending with an arc in $Z$ (see Figure \ref{fig:braess_5}). However, because $t_1 \neq t_2$ this means that at least one of the two commodities has no more than $n/2 - 1$ arcs in $Z$, which is a contradiction. 
\end{remark}

\subsection{Proof of Theorem~\ref{thm:multi_general}}

\begin{rtheorem}{Theorem}{\ref{thm:multi_general}}
For every $p = 2q + 1 \in \N$, there exists a two-commodity instance $\mathcal{I}$ whose size is polynomially bounded in $p$ such that
$
\dr(\mathcal{I}, (\alpha, \beta))
\geq 1 + \beta F_{p+1} \approx 1 + 0.45 \beta \cdot \phi^{p+1},
$
where $F_p$ is the $p$-th Fibonacci number 
and $\phi \approx 1.618$ is the golden ratio.
\end{rtheorem}

\bigskip
Our proof of Theorem~\ref{thm:multi_general} is based on the following graph, which was used by Lin et al. \cite{Lin2011}.


\begin{figure}[t!]
\centering
\scalebox{0.85}{
\begin{tikzpicture}[
  ->,
  >=stealth',
  shorten >=0.5pt,
  auto,
  semithick,
  every state/.style={circle,fill=white,minimum size=0.3cm,text=black},
]
\begin{scope}
  \node[state]  (s1)               					 				{$s_1$};
  \node[state]  (e)  [right=0.7cm of s1]				 			{$e$};
  \node[state]  (w1) [right=0.7cm of e]             				{$w_1$};
  \node[state]  (v1) [right=0.7cm of w1]             				{$v_1$};
  \node[state]  (v2) [right=0.7cm of v1]            				{$v_2$};
  \node[state]  (v3) [right=0.7cm of v2]            			    {$v_3$};
  \node[state]  (v4) [right=0.7cm of v3]           					{$v_4$};
  \node[state]  (v5) [right=0.7cm of v4]           					{$v_5$};
  \node[state]  (v6) [right=0.7cm of v5]           					{$v_6$};
  \node[state]  (v7) [right=0.7cm of v6]           					{$v_7$};
  \node[state]  (t1) [right=0.7cm of v7]          					{$t_1$};
  \node[state]  (w2) [above=0.7cm of w1]             				{$w_2$};
  \node[state]  (w3) [above=0.7cm of w2]             				{$w_3$};
  \node[state]  (w4) [above=0.7cm of w3]             				{$w_4$};
  \node[state]  (w5) [above=0.7cm of w4]             				{$w_5$};
  \node[state]  (w6) [above=0.7cm of w5]             				{$w_6$};
  \node[state]  (w7) [above=0.7cm of w6]             				{$w_7$};
  \node[state]  (t2) [above=0.7cm of w7]             				{$t_2$};
  \node[state]  (w0) [below=0.7cm of w1]             				{$w_0$};
  \node[state]  (s2) [below=0.7cm of w0]             				{$s_2$};

\path[every node/.style={sloped,anchor=south,auto=false}]
(s1) edge 				   node {$(1,\beta)$} (e)
(e) edge				   node {} (w1)
(w1) edge				   node {} (v1)
(v1) edge				   node[below=0.1cm] {$\beta g^1_\delta$} (v2)
(v2) edge				   node {} (v3)
(v3) edge				   node[below=0.1cm] {$\beta g^3_\delta$} (v4)
(v4) edge				   node {} (v5)
(v5) edge				   node[below=0.1cm] {$\beta g^5_\delta$} (v6)
(v6) edge				   node {} (v7)
(v7) edge				   node {} (t1)
(s2) edge 				   node {} (w0)
(w0) edge 				   node {$\beta g^1_\delta$} (w1)
(w1) edge 				   node {} (w2)
(w2) edge 				   node {$\beta g^2_\delta$} (w3)
(w3) edge 				   node {} (w4)
(w4) edge 				   node {$\beta g^4_\delta$} (w5)
(w5) edge 				   node {} (w6)
(w6) edge 				   node {$\beta g^6_\delta$} (w7)
(w7) edge 				   node {} (t2)
(s1) edge [bend right=10]  node {$1$} (w0)
(e) edge [bend left=10]	   node {} (w2)
(e) edge [bend left=10]	   node {} (w4)
(e) edge [bend left=10]	   node {} (w6)
(s2) edge [bend right=10]  node {} (v1)
(s2) edge [bend right=10]  node {} (v3)
(s2) edge [bend right=10]  node {} (v5)

(v2) edge 			       node {} (w2)
(w3) edge 			       node {} (v3)
(v4) edge 			       node {} (w4)     
(w5) edge 			       node {} (v5)
(v6) edge 			       node {} (w6)     
(w7) edge 			       node {} (v7);
\end{scope}
\end{tikzpicture}}
\caption{The graph $G^p$ for $p = 7$ (this is a reproduction of (Fig. 4, \cite{Lin2011})). The arc $a = (s_1,e)$ has $\delta_a = \beta$, whereas all the other arcs have $\delta_a = 0$.}
\label{fig:multi_exp}
\end{figure}
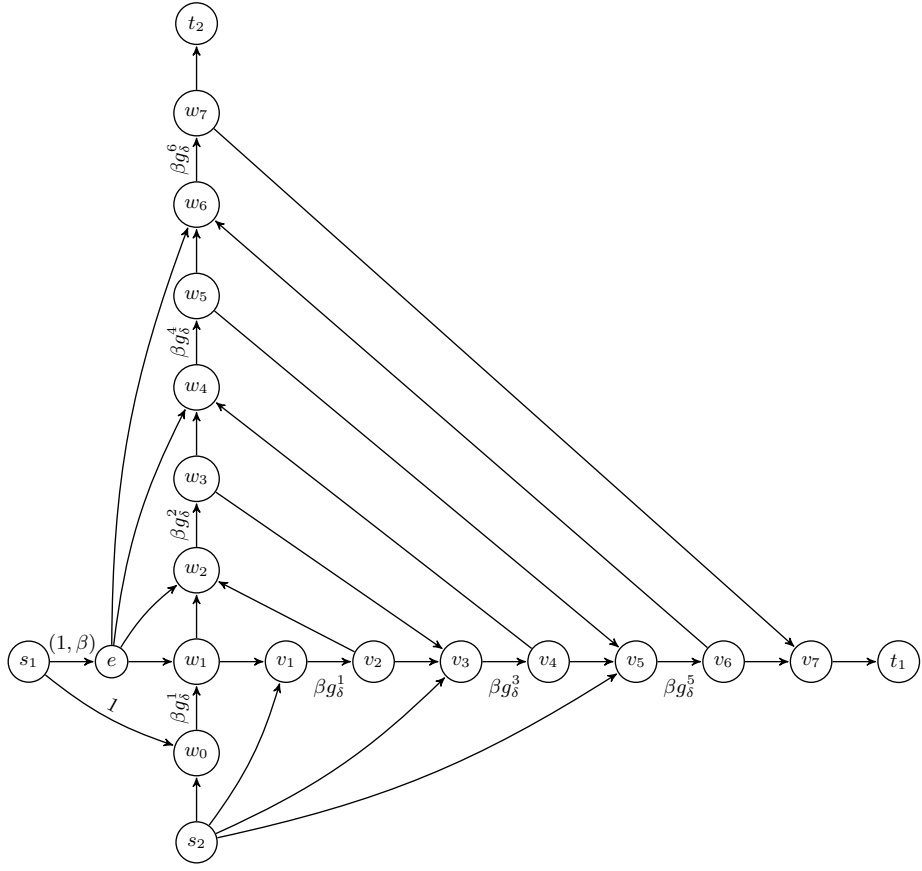


\begin{definition}[\!\!{\cite{Lin2011}}]
For $p = 2q + 1 \in \N$, the graph $G^p = (V^p,A^p)$ is defined by 
$$
V^p = \{s_1,s_2,t_1,t_2,e,w_0,\dots,w_p,v_1,\dots,v_p\},
$$
and $A^p = A(P_1^p) \cup A(P_2^p) \cup A^p_1 \cup A^p_2 \cup \{s_1,w_0\}$ where 
$$
P_1^p = (s_1,e,w_1,v_1,v_2,\dots,v_p,t_1) \text{ and }  P_2^p = (s_2,w_0,w_1,\dots,w_7,t_2)
$$ 
are the horizontal $(s_1,t_1)$-path and vertical $(s_2,t_2)$-path, respectively; see Figure \ref{fig:multi_exp}. Further,
$$
A^p_1 = \{(s_2,v_i) : i = 1,3,5,7,\dots,p-2 \} \cup \{(e,w_i) : i = 2,4,6,8,\dots,p-1\}
$$
and
$$
A^p_2 = \{(w_i,v_i) : i = 3,5,7,\dots,p\} \cup \{(v_i,w_i) : i = 2,4,6,8,\dots,p-1\}.  
$$
Lastly, the paths $T_i$ are denoted by
$$
T_i = \left\{ \begin{array}{ll}
(s_1,w_0,w_1,v_1,\dots,v_p,t_1) & i = 0 \\
(s_1,e,w_i,w_{i+1},v_{i+1},\dots,v_p,t_1) & i = 2,4,6,\dots,p-1 \\
(s_2,v_1,v_{i+1},w_{i+1},\dots,w_p,t_2) & i = 1,3,5,\dots,p
\end{array}\right.
$$
These paths can be seen as `shortcuts' for the paths $P_1$ and $P_2$.\qed
\label{def:gp_exp}
\end{definition}

\begin{proof}[Theorem \ref{thm:multi_general}]
We consider instances $(G^p,l^p,\delta^p,r^p)_{p = 1,3,5,7,\dots}$ with $G^p$ as in Definition \ref{def:gp_exp}. It is not hard to see that $|V^p|,\ |A^p| \in \mathcal{O}(p)$. 
The latency functions $l^p$ are given as follows:
$$
l^{p}_a(x') = \left\{ \begin{array}{ll}
\beta g_\delta^i(x') & \text{ for } a \in \{(v_i,v_{i+1}) : i = 1,3,5,\dots,p-2\}\\
\beta g_\delta^i(x') & \text{ for } a \in \{(w_i,w_{i+1}) : i = 0,2,4,6,\dots,p-1\}\\
1 & \text{ for } a \in \{(s_1,e),(s_1,w_0)\} \\
0 & \text{ otherwise.}
\end{array}\right.
$$
Here
$$
g_\delta^i(x') = 
\left\{ \begin{array}{ll} 0 & x' \leq 1 \\
h_\delta^i(x') & 1 \leq x' \leq 1 + \delta \\
F_i & x' \geq 1 + \delta, 
\end{array}\right.
$$
where $F_i$ is the $i$-th Fibonacci number, and $h_\delta^i(x')$ is some non-decreasing, non-negative, continuous function satisfying $h_\delta^i(1) = 0$ and $h_\delta^i(1 + \delta) = F_i$ (so that $g_\delta^i(x')$ is also non-decreasing, non-negative and continuous). Furthermore, we take $\delta_a = \beta$ for $a = (s_1,e)$ and $\delta_a = 0$ for all $a \in A \setminus \{(s_1,e)\}$. Finally, we have $r^p_1 = r^p_2 = 1$.

Let $z$ be the defined by sending one unit of flow over the paths $P_1$ and $P_2$. We claim that $z$ is a Nash flow with respect to the latencies $l^p$ and $C(z) = 1$. By construction, the latency along the path $P_1$ is $l_{P_1}(z) = 1$. It is not hard to see that any $(s_1,t_1)$-path has latency greater or equal than one (because every path for commodity $1$ uses either $(s_1,e)$ or $(s_1,w_0)$). For commodity $2$ the latency along $P_2$ is $l_{P_2}(z) = 0$, which is clearly a shortest path. This proves that $z$ is a Nash flow. Further, $C(z) = 1$.

We use Lemma \ref{lem:multi_exp} (given below) to describe a Nash flow $x$ with respect to the deviated latencies $l^p + \delta^p$. 
It follows that $C(x) = C(x)/C(z) \geq 1 + \beta F_{p-1} + \beta F_p = 1 + \beta F_{p+1}$. 
This concludes the proof (since $F_p \approx c \cdot \phi^p$ where $c \approx 0.4472$ and $\phi \approx 1.618$).
\qed
\end{proof} 

The following lemma is similar to Lemma 5.4, Lemma 5.5 and Lemma 5.6 in \cite{Lin2011}).

\begin{lemma} 
There exists a $\delta > 0$ and a feasible flow $x$ satisfying the following properties:
\begin{enumerate}[(i)]
\item $x_{a} \geq 1 + \delta$ for all $a \in \{(v_i,v_{i+1}) : i = 1,3,5,\dots,p-2\} \cup \{(w_i,w_{i+1}) : i = 0,2,4,6,\dots,p-1\}$.
\item $l_P(x) \geq 1 + \beta F_{p-1}$ for all $P \in \mathcal{P}_1$, with equality if and only if  $P = T_i$ for some $i = 2,4,6,\dots,p-1$.
\item $l_P(x) \geq \beta F_{p}$ for all $P \in \mathcal{P}_2$, with equality if and only if  $P = T_i$ for some $i = 1,3,5,\dots,p$.
\item $x$ is a Nash flow under the perceived latencies $l^p + \delta^p$.
\end{enumerate}
\label{lem:multi_exp}
\end{lemma}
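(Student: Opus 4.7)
The plan is to construct $x$ explicitly as a convex combination of the shortcut paths $T_i$ (augmented, if needed, by a small amount of flow on $P_1$ and $P_2$) and to verify the four properties by direct computation, with a supporting case analysis for non-shortcut paths. I parametrize $x$ by non-negative path flows $f^{(1)}_i$ on $T_i$ for $i \in \{0, 2, \ldots, p-1\}$ (commodity 1) and $f^{(2)}_j$ on $T_j$ for $j \in \{1, 3, \ldots, p\}$ (commodity 2), possibly augmented so that the leftmost long $w$-arc $(w_0, w_1)$ also receives enough flow. Imposing that (a) the demands $r_1 = r_2 = 1$ are satisfied and (b) every long arc carries exactly $1 + \delta$ units produces a linear system whose unique non-negative solution determines the admissible $\delta > 0$. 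Taking any small enough such $\delta$ yields property (i) immediately.

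For properties (ii) and (iii), note that on every long arc $a$ with label $\beta g^k_\delta$ we have $g^k_\delta(x_a) = F_k$ because $x_a = 1 + \delta$. A direct summation along $T_i$ (for $i \geq 2$ even) then gives
\[
l_{T_i}(x) \;=\; 1 \,+\, \beta\bigl(F_i + F_{i+1} + F_{i+3} + F_{i+5} + \cdots + F_{p-2}\bigr),
\]
and iterating the Fibonacci recurrence $F_k + F_{k+1} = F_{k+2}$ collapses the parenthesized sum to $F_{p-1}$, so $l_{T_i}(x) = 1 + \beta F_{p-1}$. The analogous identity starting from odd $j$ yields $l_{T_j}(x) = \beta F_p$ for commodity 2. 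Path $T_0$ differs from the commodity-1 shortcuts by traversing the long arc $(w_0, w_1)$ instead of $(s_1, e)$, so it picks up an extra $\beta F_1 = \beta$ and satisfies $l_{T_0}(x) = 1 + \beta + \beta F_{p-1}$. For any other path $P$, a case analysis on its qualitative shape (which long $v$- and $w$-arcs it traverses and where it zigzags between the two chains) shows that any deviation from a shortcut either appends an unmatched $\beta F_k$ term or double-counts a segment, so by the Fibonacci identity above the resulting contribution is strictly positive.

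For (iv), observe that $\delta^p$ is supported only on $(s_1, e)$ where it equals $\beta$. Every commodity-1 shortcut $T_i$ with $i \geq 2$ traverses $(s_1, e)$, giving deviated latency $q_{T_i}(x) = 1 + \beta F_{p-1} + \beta$, which coincides with $q_{T_0}(x) = l_{T_0}(x) = 1 + \beta + \beta F_{p-1}$; hence all flow-carrying commodity-1 paths are indifferent. Any other commodity-1 path $P$ either contains $(s_1, e)$, in which case $q_P(x) = l_P(x) + \beta \geq 1 + \beta F_{p-1} + \beta$ by the strict inequality in (ii), or starts with $(s_1, w_0, w_1, \ldots)$ and thereby incurs the full $\beta F_1 + \beta F_{p-1}$ from $(w_0, w_1)$ together with the remaining long $v$-arcs, yielding $l_P(x) \geq 1 + \beta + \beta F_{p-1}$. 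For commodity 2 no path meets $(s_1, e)$, so the Nash condition reduces directly to (iii).

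The main technical obstacle is the case analysis in the second step, showing that no non-shortcut path has strictly smaller (deviated) latency than the Nash value. This relies on the rigidity of the Fibonacci recurrence, which forces every alternative route through the $v$- and $w$-chains to introduce an uncompensated $\beta F_k$ term; the argument mirrors that of Lin et al.~\cite{Lin2011}.
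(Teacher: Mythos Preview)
Your proposal is correct and follows the same underlying approach: the paper itself does not construct $x$ or verify (i)--(iii) directly, but simply invokes Lemmas~5.4--5.6 of Lin et al.~\cite{Lin2011}, which carry out precisely the explicit path-flow construction and the Fibonacci-telescoping computation you sketch; your argument for (iv) is the same as the paper's short verification that all flow-carrying commodity-$1$ paths (including $T_0$) have deviated latency $1+\beta(F_{p-1}+1)$ and that commodity~$2$ is unaffected by the deviation on $(s_1,e)$.
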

\begin{proof}
The statements (i)--(iii) follow from Lemma 5.4, Lemma 5.5 and Lemma 5.6 in \cite{Lin2011}. The last statement is clearly true for commodity $2$ (since this commodity is not affected by the \deviation{} on arc $(s_1,e)$). For commodity $1$, all the flow-carrying paths $T_i$ have a perceived latency of $Q_{T_i}(x) = 1 + \beta(F_{p} + 1)$, and the perceived latency along any other $(s_1,t_1)$-path is greater or equal than that. The actual latencies along these paths are $l_{T_i}(x) = 1 + \beta F_{p-1}$ for $i = 2,4,6,\dots,p-1$, and $l_{T_0}(x) = 1 + \beta (F_{p-1}+1)$.
\qed
\end{proof}

\subsection{Proof of Theorem~\ref{thm:pra_model_results}}


\begin{rtheorem}{Theorem}{\ref{thm:pra_model_results}}
The Price of Risk Aversion for a common source multi-commodity instance $\mathcal{I}$ with non-negative, non-decreasing latency functions, variance-to-mean-ratio $\kappa > 0$ and risk-aversion parameter $\gamma \ge -1/\kappa$ is at most 
$$
\text{PRA}(\mathcal{I}, \gamma, \kappa) \le 
\begin{cases} 
1 - \gamma \kappa /(1 + \gamma \kappa ) \lceil (n-1)/2 \rceil r & \text{for $-1/\kappa < \gamma \leq 0$} \\
1 + \gamma \kappa \lceil (n-1)/2 \rceil r & \text{for $\gamma \ge 0$}.
\end{cases}
$$
Moreover, these bounds are tight for all $\kappa \geq 0$, $\gamma \in (-1/\kappa,\infty)$, $r \geq 1$, and $n = 2m+1 \in \N$. For $n = 2m \in \N$, we obtain
$$
\text{PRA}(\mathcal{I}, \gamma, \kappa) \geq  
\begin{cases}
(1 + \gamma \kappa r \lceil (n-1)/2 \rceil) - \gamma \kappa(r-1) & \text{ if $\gamma \geq 0$} \\
(1 - \frac{\gamma \kappa}{1 + \gamma \kappa} r \lceil (n-1)/2 \rceil) + \frac{\gamma \kappa}{1 + \gamma \kappa}(r-1) &  \text{if $-1/\kappa < \gamma \leq 0.$} 
\end{cases}
$$ 
In particular, for single-commodity instances we obtain tightness for all $n \in \N$.
\end{rtheorem}
\begin{proof}
Recall from the discussion in Section~\ref{sec:pre} that the \deviations{} $\delta_a = \gamma v_a$ can be interpreted as $\theta$-\deviations{} with
$$
\theta^{\min}_a = \left\{ \begin{array}{ll} 0 & \text{ if } \gamma \geq 0\\ 
\gamma \kappa l_a & \text{ if } -1/\kappa < \gamma \leq 0 
\end{array}  \right. \ \ \ \ \text{ and } \ \ \ \ \theta^{\max}_a = \left\{ \begin{array}{ll} \gamma \kappa l_a & \text{ if } \gamma \geq 0\\ 
0 & \text{ if } -1/\kappa < \gamma \leq 0 .
\end{array}  \right. 
$$
Here, the restriction $\gamma > -1/\kappa$ is necessary to satisfy Assumption \ref{ass:nonnegative}.
The theorem now follows directly from Theorem~\ref{thm:general_multi_common_source}, Example \ref{exmp:tight_pra} and Theorem \ref{thm:lower_multi_odd}.
\qed
\end{proof}

\subsection{Proof of Theorem~\ref{thm:stability}}

\begin{rtheorem}{Theorem}{\ref{thm:stability}}
Let $\mathcal{I}$ be a common source multi-commodity instance with non-negative and non-decreasing latency functions $(l_a)_{a \in A}$. Let $f$ be a Nash flow with respect to $(l_a)_{a \in A}$ and let $\tilde{f}$ be a Nash flow with respect to slightly perturbed latency functions $(\tilde{l}_a)_{a \in A}$ satisfying 
$$
\sup_{a \in A,\; x \geq 0} \bigg| \frac{l_a(x) - \tilde{l}_a(x)}{l_a(x)} \bigg| \leq \epsilon
$$
for some small $\epsilon > 0$.
Then the relative error in social cost is $(C(\tilde{f}) - C(f))/C(f) \le 
2\epsilon/(1 - \epsilon) \lceil(n-1)/2 \rceil \cdot r 
= \mathcal{O}(\epsilon rn)$.
\end{rtheorem}

\begin{proof}
Note that the perturbation $l - \tilde{l}$ can be seen as an $(-\epsilon,\epsilon)$-\deviation{}. Theorem \ref{thm:general_multi_common_source} gives $C(\tilde{f})/C(f) \leq 1 + 2\epsilon/(1 - \epsilon) \lceil(n-1)/2 \rceil \cdot r$. This implies that the relative error in social cost is $(C(\tilde{f}) - C(f))/C(f) \leq 2\epsilon/(1 - \epsilon) \lceil(n-1)/2 \rceil \cdot r = \mathcal{O}(\epsilon rn)$ for small $\epsilon > 0$.
\qed
\end{proof}

\newpage

\section{Missing material of Section 5}

\subsection{Proof of Theorem~\ref{thm:smoothness}}

\begin{rtheorem}{Theorem}{\ref{thm:smooth-BPA}}
Let $\mathcal{L}$ be a set of non-negative, non-decreasing and continuous functions. Let $\mathcal{I}$ be a general multi-commodity instance with $(l_a)_{a \in A} \in \mathcal{L}^A$. Let $x$ be $\delta$-inducible for some $(0,\beta)$-\deviation{} $\delta$ and let $z$ be an arbitrary feasible flow.
Then $C(x)/C(z) \leq (1 + \beta)/(1 - \hat{\mu}(\mathcal{L},\beta))$ if $\hat{\mu}(\mathcal{L},\beta) < 1$. Moreover, this bound is tight if $\mathcal{L}$ contains all constant functions and is closed under scalar multiplication, i.e., for every $l \in \mathcal{L}$ and $\gamma \geq 0$, $\gamma l \in \mathcal{L}$.
\end{rtheorem}


\bigskip
To see that the bound of Theorem \ref{thm:smoothness} is not worse than the bound $(1+\beta)/(1 - \mu)$, note that $(1,\mu)$-smooth latency functions we have 
$\hat{\mu}(\mathcal{L},\beta) \leq \hat{\mu}(\mathcal{L},0) \leq \mu$.

\begin{proof}[Theorem \ref{thm:smoothness}]
We use a similar approach as Correa et al. \cite{Correa2008}. Since $x$ is a \deviated{} Nash flow with respect to $l + \delta$, the following variational inequality holds:
$$
\sum_{a \in A} x_a(l_a(x_a) + \delta_a(x_a)) \leq \sum_{a \in A} z_a(l_a(x_a) + \delta_a(x_a)).
$$
We then have
\begin{align*}
C(x)  
& = \sum_{a \in A} x_al_a(x_a) 
\leq \sum_{a \in A} z_al_a(x_a) + (z_a - x_a)\delta_a(x_a)  \\
& \leq \sum_{x_a > z_a} z_al_a(x_a) + \sum_{z_a \geq x_a} z_a(l_a(x_a) + \delta_a(x_a))  \\
& \leq \sum_{x_a > z_a} z_al_a(x_a) + (1+\beta)\sum_{z_a \geq x_a} z_al_a(x_a) \\
& \leq \sum_{x_a > z_a} z_al_a(x_a) + (1+\beta)\sum_{z_a \geq x_a} z_al_a(z_a),
\end{align*}
where in the last inequality, we use that $x_a \leq z_a$ in the second summation. 

We obtain
\begin{eqnarray}
C(x)  &\leq & \sum_{x_a > z_a} z_al_a(x_a) + (1+\beta) \sum_{z_a \geq x_a} z_al_a(z_a) \nonumber \\
&=& \sum_{x_a > z_a} z_a[l_a(x_a) - (1+\beta)l_a(z_a) + (1+\beta)l_a(z_a)] + (1 + \beta)\sum_{z_a \geq x_a} z_al_a(z_a) \nonumber \\ 
&=& (1+\beta)C(z)  + \sum_{x_a > z_a} z_a[l_a(x_a) - (1+\beta)l_a(z_a)] \nonumber \\
&\leq& (1+\beta)C(z) + \hat{\mu}(\beta)  \sum_{x_a > z_a} x_al_a(x_a) \nonumber \\
& \leq & (1+\beta)C(z) +\hat{\mu}(\beta) C(x). \nonumber
\end{eqnarray}
Thus, for $\hat{\mu}(\beta) < 1$, we obtain $C(x)/C(z) \leq (1+\beta)/(1 - \hat{\mu}(\beta))$.

We will now prove the tightness of the obtained bound if $\mathcal{L}$ contains all constant functions and is closed under scalar multiplication. For a fixed $c = c(y) \in \mathcal{L}$, consider the parallel-arc instance in Figure \ref{fig:smoothness_tight} with fixed demand $r$.

\begin{figure}[h!]
\centering
\begin{tikzpicture}[
  ->,
  >=stealth',
  shorten >=0.5pt,
  auto,
  semithick,
  every state/.style={circle,fill=white,minimum size=0.3cm,text=black},
]
\begin{scope}
  \node[state]  (s)    {$s$};
  \node[state]  (t)   [right=4cm of s]  {$t$};

\path[every node/.style={sloped,anchor=south,auto=false}]
(s) edge  [bend right=20]  node[below=0.1cm] {$\left(\frac{1}{r},\beta \frac{1}{r}\right)$} (t)
(s) edge [bend left=20]  node[above=0.1cm] {$\left((1+\beta)\frac{c(y)}{rc(r)},0\right)$} (t);

\end{scope}
\end{tikzpicture}
\caption{Example used in the proof of Theorem~\ref{thm:smoothness}. The arcs are labeled by their respective $(l_a,\delta_a)$ functions. Note that $\delta \in \Delta(0,\beta)$.} \label{fig:smoothness_tight}
\end{figure}
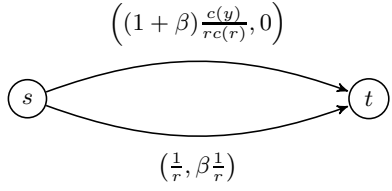 

Clearly, a \deviated{} Nash flow is given by $x = (x_1,x_2) = (r,0)$, since then $l_1(x_1) + \delta_1(x_1) = l_2(x_2) + \delta_2(x_2) = (1 + \beta)/r$. We have $C(x) = (1 + \beta)$.

Let the social optimum be given by $z^* = (\epsilon,r-\epsilon)$. We have 
$$
C(z^*) = \frac{(1+\beta)\epsilon c(\epsilon) + (r - \epsilon)c(r)}{rc(r)} = \frac{rc(r) - \epsilon[c(r) - (1+\beta)c(\epsilon)]}{rc(r)}
$$
which implies that
$$
\frac{C(x)}{C(z^*)}  = (1+\beta)\left(1 - \frac{\epsilon[c(r) - (1+\beta)c(\epsilon)]}{r\cdot c(r)}\right)^{-1}
$$
In order to claim tightness, let $c \in \mathcal{L}$ be such that it maximizes $\hat{\mu}(\mathcal{L},\beta)$, i.e., 
$$
\hat{\mu}(\mathcal{L},\beta) = \sup_{f, g \geq 0} \frac{f[c(g) - (1+\beta)c(f)]}{g\cdot c(g)}.
$$
Using this, we obtain
$$
\inf_{\epsilon, r \geq 0} \left\{ 1 - \frac{\epsilon[c(r) - (1+\beta)c(\epsilon)]}{r\cdot c(r)}\right\} = 1 - \sup_{\epsilon, r \geq 0} \frac{\epsilon[c(r) - (1+\beta)c(\epsilon)]}{r\cdot c(r)},
$$
which concludes the proof. 
\qed
\end{proof}

\subsection{Proposition~\ref{prop:mu-affine} and Corollary~\ref{cor:smoothness_approx}}


\begin{proposition}\label{prop:mu-affine}
Let $\mathcal{L}$ be the set of all affine latency functions with non-negative coefficients. Then $\hat{\mu}(\mathcal{L},\beta) = 1/(4(1+\beta))$.
\end{proposition}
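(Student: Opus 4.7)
The plan is to establish matching upper and lower bounds on $\hat{\mu}(\mathcal{L},\beta)$. For the lower bound I would take the simplest linear function $l(x)=x \in \mathcal{L}$ and observe that the ratio $z[l(x) - (1+\beta)l(z)]/(xl(x))$ reduces to $z(x - (1+\beta)z)/x^2$, a concave quadratic in $z$ with maximizer $z = x/(2(1+\beta))$; substituting gives exactly $1/(4(1+\beta))$, so $\hat{\mu}(\mathcal{L},\beta) \ge 1/(4(1+\beta))$. This value is also the one that comes out of the tightness construction in Theorem~\ref{thm:smoothness}, so it is essentially forced.

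For the upper bound I would parametrize an arbitrary $l \in \mathcal{L}$ as $l(x) = ax + b$ with $a,b \ge 0$ and expand the numerator as $z[l(x) - (1+\beta)l(z)] = azx - a(1+\beta)z^2 - \beta b z$. The case $a = 0$ yields a non-positive numerator and is immediate, so I would focus on $a>0$ and $x>0$. Maximizing the (concave) numerator in $z \ge 0$ for fixed $x,a,b$ gives the unconstrained optimum $z^\ast = (ax - \beta b)/(2a(1+\beta))$; when $ax < \beta b$ the supremum over $z \ge 0$ is attained at $z = 0$ and equals $0$, and otherwise substituting $z^\ast$ back reduces the ratio to
\[
\frac{(ax - \beta b)^2}{4(1+\beta)(ax)(ax + b)}.
\]
Proving this is at most $1/(4(1+\beta))$ reduces to the algebraic inequality $(ax - \beta b)^2 \le (ax)(ax + b)$. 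Expanding and cancelling the $a^2x^2$ terms, this is equivalent to $\beta^2 b^2 \le (ax)\,b\,(1 + 2\beta)$, which follows immediately from the working hypothesis $ax \ge \beta b$ since $(ax)(1+2\beta) \ge \beta b(1+2\beta) \ge \beta^2 b$.

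The main (mild) obstacle is the clean boundary handling: correctly identifying when the unconstrained maximizer $z^\ast$ becomes negative (so $z=0$ is optimal), and dealing with the degenerate case $xl(x)=0$ that makes the ratio undefined. In both regimes the contribution to the sup is non-positive or vacuous, so they impose no additional constraint beyond the single algebraic inequality isolated above, and the two matching bounds combine to give $\hat{\mu}(\mathcal{L},\beta) = 1/(4(1+\beta))$.
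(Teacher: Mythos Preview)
Your argument is correct. Both directions are sound: the lower bound via $l(x)=x$ and the maximizer $z=x/(2(1+\beta))$ matches exactly, and for the upper bound your explicit optimization over $z$ followed by the algebraic reduction to $(ax-\beta b)^2 \le (ax)(ax+b)$ goes through cleanly (including the boundary cases $a=0$, $b=0$, $x=0$, and $z^\ast<0$).

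The paper takes a slightly different route for the upper bound. Rather than optimizing over $z$ and then bounding the resulting one-variable expression, it writes the target inequality $z[l(x)-(1+\beta)l(z)] \le \tfrac{1}{4(1+\beta)}\,x\,l(x)$ for $l(y)=cy+d$ and separates it linearly into the coefficients of $c$ and $d$. The $d$-part becomes $-\beta z \le \tfrac{1}{4(1+\beta)}x$, which is immediate, and the $c$-part becomes $zx-(1+\beta)z^2 \le \tfrac{1}{4(1+\beta)}x^2$, which is just the square completion $\bigl(\tfrac{x}{2}-(1+\beta)z\bigr)^2 \ge 0$. This coefficient-wise decomposition avoids the case distinction on the sign of $ax-\beta b$ and the subsequent inequality chain, at the cost of being less explicit about where the maximum is actually attained; your approach, by contrast, identifies the optimizer $z^\ast$ directly and makes the dependence on $b$ transparent. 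Either way the proof is a couple of lines.
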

\begin{proof}
Let $l_a(y) = c_a y + d_a$ be an arbitrary affine latency function with $c_a, d_a \geq 0$. We need to show that
$$
z_a[c_a x_a + d_a - (1 + \beta)(c_a z_a + d_a)] \leq  1/(4(1+\beta))x_a[c_ax_a + d_a],
$$
or, equivalently,
$$
c_a[z_a x_a - (1+\beta)z_a^2] + d_a[z_a - z_a(1 + \beta)] \leq c_a[1/(4(1+\beta))x_a^2] + d_a[1/(4(1+\beta))x_a].
$$
It suffices to show that $z_a x_a - (1+\beta)z_a^2 \leq 1/(4(1+\beta))x_a^2$ and $z_a - z_a(1 + \beta) \leq 1/(4(1+\beta))x_a$. The second inequality is always true, using the non-negativity of $z_a,x_a$ and $\beta$. For the first inequality, we have
$$
0 \leq \left( \frac{x_a}{2} - (1 + \beta)z_a\right)^2 = (1 + \beta)^2z_a^2 + \frac{x_a^2}{4} - (1+\beta)x_az_a,
$$
which implies that 
$$
[1+\beta]\left(x_a z_a - (1 + \beta)z_a^2\right) \leq \frac{x_a^2}{4}.
$$
Dividing this inequality by $(1 + \beta)$ gives the desired result. Further, we have tightness for $(x_a,z_a) = \left(1,1/(2(1+\beta))\right)$. \qed
\end{proof}

\medskip

\begin{corollary}
Let $\mathcal{L}$ be a set of non-negative, non-decreasing and continuous functions (containing constants and closed under scalar multiplication). Let $\mathcal{G}$ be the set of all instances with $(l_a)_{a \in A} \in \mathcal{L}^A$. If $\hat{\mu}(\mathcal{L},\beta) < 1$, then
$$
|\text{BPoA}(\mathcal{G}, (0, \beta) ) - \dr(\mathcal{G}, (0, \beta))| \leq (1 + \beta)\frac{\hat{\mu}(\mathcal{L},\beta)}{1 - \hat{\mu}(\mathcal{L},\beta)}.
$$
\label{cor:smoothness_approx}
\end{corollary}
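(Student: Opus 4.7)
The idea is to obtain two matching bounds separately: an upper bound on $\text{BPoA}(\mathcal{G},(0,\beta))$ directly from Theorem~\ref{thm:smoothness}, and a lower bound on $\dr(\mathcal{G},(0,\beta))$ extracted from the same Pigou-type instance already used to prove the tightness of Theorem~\ref{thm:smoothness}.

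First, I would note that for every $\mathcal{I} \in \mathcal{G}$ and every $\delta \in \Delta(0,\beta)$, the inequality $C(f^*) \leq C(f^0)$ gives $C(f^\delta)/C(f^*) \geq C(f^\delta)/C(f^0)$, and taking suprema yields $\text{BPoA}(\mathcal{G},(0,\beta)) \geq \dr(\mathcal{G},(0,\beta))$. Hence the absolute value in the statement can be dropped. Theorem~\ref{thm:smoothness} then immediately gives
$$\text{BPoA}(\mathcal{G},(0,\beta)) \;\leq\; M \;:=\; \frac{1+\beta}{1-\hat{\mu}(\mathcal{L},\beta)}.$$

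For the lower bound on $\dr$, I would exhibit the parallel two-arc instance of Figure~\ref{fig:smoothness_tight} with $c \equiv \kappa$ chosen to be any positive constant in $\mathcal{L}$; the closure assumption on $\mathcal{L}$ guarantees that every arc label lies in $\mathcal{L}$, so the instance is in $\mathcal{G}$. Arc~2 then has constant latency $(1+\beta)/r$, strictly larger than $l_1 = 1/r$, so the undeviated Nash flow routes all demand on arc~1 and $C(f^0)=1$. Turning on the deviation $\delta = (\beta/r, 0)$ equalises the perceived latencies on both arcs at $(1+\beta)/r$, so every split of the demand is a Nash flow; in particular $(0,r)$ is $\delta$-inducible, with cost $1+\beta$. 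Since the convention in Section~\ref{sec:pre} is that $C(f^\delta)$ refers to the worst Nash flow, this single instance witnesses $\dr(\mathcal{G},(0,\beta)) \geq 1+\beta$.

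Combining the two bounds gives
$$\text{BPoA}(\mathcal{G},(0,\beta)) - \dr(\mathcal{G},(0,\beta)) \;\leq\; M - (1+\beta) \;=\; (1+\beta)\cdot\frac{\hat{\mu}(\mathcal{L},\beta)}{1-\hat{\mu}(\mathcal{L},\beta)},$$
as claimed. The delicate point is the $\dr$ lower bound: one must verify that the \emph{worst} Nash flow under the deviation really loads all demand on the more expensive arc, and here the worst-Nash convention is essential---without it the harmless Nash flow $(r,0)$ would make the per-instance ratio equal to $1$ and one would have to recover the factor $1+\beta$ via a different (e.g., strictly increasing $c$) choice.
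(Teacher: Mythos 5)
Your proof is correct and follows essentially the same strategy as the paper: both combine the upper bound $\text{BPoA}(\mathcal{G},(0,\beta)) \leq (1+\beta)/(1-\hat{\mu}(\mathcal{L},\beta))$ from Theorem~\ref{thm:smooth-BPA} with the lower bound $\dr(\mathcal{G},(0,\beta)) \geq 1+\beta$ witnessed by the Pigou-type two-arc instance of Figure~\ref{fig:smoothness_tight}, then subtract. Your use of a constant $c$ is a minor simplification over the paper's choice (the optimizer of $\hat{\mu}$), and you are more explicit about the worst-Nash convention needed to load the entire demand onto the expensive arc under the deviation---the paper relies on the same mechanism but leaves it implicit.
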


For example, for affine latencies we have $\hat{\mu}(\mathcal{L},\beta) = 1/(4(1+\beta))$ (see Proposition~\ref{prop:mu-affine}). As a result, $|\text{BPoA}(\mathcal{G}, (0, \beta) ) - \dr(\mathcal{G}, (0, \beta))| \leq 1/3$ for all $\beta \geq 0$. This implies that the gap is independent of the parameter $\beta$. This suggests that for large $\beta$ the Biased Price of Anarchy provides a good approximation for the Deviation Ratio (or the Price of Risk Aversion). Note that this does not follow from the bound $4(1 + \beta)/3$ for affine latencies obtained in \cite{Lianeas2015,Meir2015} (resulting from the upper bound $(1+\beta)/(1 - \mu)$ with $\mu = 1/4$).

\begin{proof}[Corollary \ref{cor:smoothness_approx}]
Consider the instance dedicated in Figure \ref{fig:smoothness_tight}.
The Nash flow with respect to $\delta = 0$ is given by $(c(r)/(1+\beta),1 - c(r)/(1+\beta))$ with social cost $C(z) = 1$. Further, as argued in the proof of Theorem \ref{thm:smoothness}, the deviated Nash flow $x$ has social cost $C(x) = 1 + \beta$. Thus,
$$
\frac{C(x)}{C(z)} = 1 + \beta 
\leq \dr(\mathcal{G}, (0, \beta)) 
\leq \text{BPoA}(\mathcal{G}, (0, \beta)) 
\leq \frac{1+\beta}{1 - \hat{\mu}(\mathcal{L},\beta)}.
$$
This implies that
\begin{align*}
|\text{BPoA}(\mathcal{G},(0,\beta)) - \dr(\mathcal{G},(0,\beta))| 
& \leq \frac{1+\beta}{1 - \hat{\mu}(\mathcal{L},\beta)} - (1 + \beta) \\
& = (1 + \beta)\bigg(\frac{1}{1 - \hat{\mu}(\mathcal{L},\beta)} - 1\bigg).
\end{align*}
\qed
\end{proof}

\subsection{General path deviations and proof of Theorem~\ref{thm:smooth_general}}


\bigskip

\begin{rtheorem}{Theorem}{\ref{thm:smooth_general}}
Let $\mathcal{I}$ be a general multi-commodity instance with $(l_a)_{a \in A} \in \mathcal{L}^A$. Let $x$ be $\delta$-inducible with respect to some $(0,\beta)$-path \deviation{} $\delta$ and let $z$ an arbitrary feasible flow.
If $\hat{\mu}(\mathcal{L},0) < 1/(1 + \beta)$, then $C(x)/C(z) \leq (1 + \beta)/(1 - (1 + \beta)\hat{\mu}(\mathcal{L},0))$.
\end{rtheorem} 
\begin{proof}
We know that the flow $x$ satisfies the variational inequality
$$
\sum_{P \in \mathcal{P}} x_P [l_P(x) + \delta_P(x)] \leq \sum_{P \in \mathcal{P}} z_P [l_P(x) + \delta_P(x)].
$$
It follows that
$$
C(x) \leq \sum_{P \in \mathcal{P}} x_P [l_P(x) + \delta_P(x)] \leq \sum_{P \in \mathcal{P}} z_P [l_P(x) + \delta_P(x)] \leq (1 + \beta)\sum_{P \in \mathcal{P}}z_Pl_P(x)
$$
using the non-negativity of flow and \deviations{}. Using the smoothness conditions, we find
$$
\sum_{P \in \mathcal{P}}z_Pl_P(x) = \sum_{a \in A}z_al_a(x_a) \leq \sum_{a \in A}\ z_a l_a(z_a) + \sum_{a \in A}\mu x_al_a(x_a) = C(z) + \mu C(x)
$$
from which the result follows. \qed
\end{proof} 

\end{document}